\documentclass[12pt, onecolumn]{IEEEtran}

\usepackage{algorithm}
\usepackage{algorithmic}

\usepackage{times}
\usepackage{algorithm, algorithmic}
\usepackage{amsfonts}
\usepackage{amsmath}
\usepackage{amssymb}
\usepackage{amsthm}
\usepackage{latexsym}
\usepackage{color}
\usepackage{graphics}
\usepackage{enumerate}
\usepackage{amstext}
\usepackage{url}
\usepackage{epsfig}
\usepackage{subfigure}

\DeclareMathOperator{\sgn}{sgn}

\newtheorem{theorem}{Theorem}

\newtheorem{lemma}{Lemma}

\newtheorem*{remark*}{Remark}

\newcommand{\R}{\mathbb{R}}

\newcommand{\T}{{\mathcal T}}

\newcommand{\ignore}[1]{}

\newcommand{\PP}{\mathcal{P}}
\newcommand{\RR}{{\mathcal R}}

\newcommand{\Gammad}{\Gamma_{\textrm{d}}}
\newcommand{\Gammar}{\Gamma_{\textrm{r}}}
\newcommand{\Omegad}{\Omega_{\textrm{d}}}
\newcommand{\Omegar}{\Omega_{\textrm{r}}}
\newcommand{\Gammao}{\Gamma_0}
\newcommand{\OmegaOd}{\OmegaO_{\textrm{d}}}
\newcommand{\OmegaOr}{\OmegaO_{\textrm{r}}}
\newcommand{\OmegaO}{\Phi}

\newcommand{\EE}{E^{\ast}}
\newcommand{\llambda}{\gamma}
\newcommand{\DD}{D}

% Add the compsoc option for Computer Society conferences.
%
% If IEEEtran.cls has not been installed into the LaTeX system files,
% manually specify the path to it like:
% \documentclass[conference]{../sty/IEEEtran}

% correct bad hyphenation here
\hyphenation{op-tical net-works semi-conduc-tor}

\begin{document}
%
% paper title
% can use linebreaks \\ within to get better formatting as desired
\title{Low-rank Matrix Recovery from Errors and Erasures}

% conference papers do not typically use \thanks and this command
% is locked out in conference mode. If really needed, such as for
% the acknowledgment of grants, issue a \IEEEoverridecommandlockouts
% after \documentclass

% for over three affiliations, or if they all won't fit within the width
% of the page, use this alternative format:
%
\author{\IEEEauthorblockN{Yudong Chen,
Ali Jalali,
Sujay Sanghavi and
Constantine Caramanis\\}
\IEEEauthorblockA{Department of Electrical and Computer Engineering\\
The University of Texas at Austin, Austin, TX 78712 USA\\
Email: (ydchen, alij, sanghavi and caramanis)@mail.utexas.edu}
}

% make the title area
\maketitle

\begin{abstract}
%\boldmath
This paper considers the recovery of a low-rank matrix from an observed version that simultaneously contains both {\em (a) erasures:} most entries are not observed, and {\em (b) errors:} values at a constant fraction of (unknown) locations are arbitrarily corrupted. We provide a new unified performance guarantee on when the natural convex relaxation of minimizing rank plus support succeeds in exact recovery. Our result allows for the simultaneous presence of random and deterministic components in both the error and erasure patterns. On the one hand, corollaries obtained by specializing this one single result in different ways recover (up to poly-log factors) all the existing works in matrix completion, and sparse and low-rank matrix recovery. On the other hand, our results also provide the {\em first guarantees} for (a) recovery when we observe a vanishing fraction of entries of a corrupted matrix, and (b) deterministic matrix completion.
\end{abstract}
% IEEEtran.cls defaults to using nonbold math in the Abstract.
% This preserves the distinction between vectors and scalars. However,
% if the conference you are submitting to favors bold math in the abstract,
% then you can use LaTeX's standard command \boldmath at the very start
% of the abstract to achieve this. Many IEEE journals/conferences frown on
% math in the abstract anyway.

\section{Introduction}

Low-rank matrices play a central role in large-scale data analysis and dimensionality reduction. They arise in a variety of application areas, among them Principal Component Analysis (PCA), Multi-dimensional scaling (MDS), Spectral Clustering and related methods, ranking and collaborative filtering, etc. In all these problems, low-rank structure is used to either approximate a general matrix, or to correct for corrupted or missing data.

This paper considers the recovery of a low-rank matrix in the simultaneous presence of {\em (a) erasures:} most elements are not observed, and {\em (b): errors:} among the ones that are observed, a significant fraction at unknown locations are grossly/maliciously corrupted. It is now well recognized that the standard, popular approach to low-rank matrix recovery using SVD as a first step fails spectacularly in this setting \cite{huber}. Low-rank matrix completion, which considers only random erasures (\cite{CanRec,CanTao}) will also fail with even just a few maliciously corrupted entries. In light of this, several recent works have studied an alternate approach based on the natural convex relaxation of minimizing rank plus support. One approach \cite{cspw,tong} provides deterministic/worst case guarantees for the fully observed setting (i.e. only errors). Another avenue \cite{ganesh,CanLiMaWri} provides probabilistic guarantees for the case when the supports of the error and erasure patterns are chosen uniformly at random. Our work provides (often order-wise) stronger guarantees on the performance of this convex formulation, as compared to all of these papers.

We present one main result, and two other theorems. Our main result, Theorem \ref{thm:randomfixed}, is a {\em unified performance guarantee} that allows for the simultaneous presence of both errors and erasures, and deterministic and random support patterns for each. In order/scaling terms, this single result recovers as corollaries all the existing results on low-rank matrix completion \cite{CanRec,CanTao}, worst-case error patterns \cite{cspw}, and random error and erasure patterns \cite{ganesh,CanLiMaWri} up to logarithm factors; we provide detailed comparisons in Section \ref{sec:main}. More significantly, our result goes {\em beyond} the existing literature by providing the first guarantees for random support patterns for the case when the fraction of entries observed vanishes as $n$ (the size of the matrix) grows -- an important regime in many applications, including collaborative filtering. In particular, we show that exact recovery is possible with as few as $\Theta(n \textrm{polylog}(n))$ observed entries, even when a constant fraction of these entries are errors.

Theorem \ref{thm:random} is also a unified guarantee, but with the additional assumption that the {\em signs} of the error matrix are equally likely to be positive or negative. We are now able to show that it is possible to recover the low-rank matrix even when {\em almost all} entries are corrupted. Again, our results go beyond the existing work \cite{ganesh} on this case, because we allow for a vanishing fraction of observations.

Theorem \ref{theorem:mainresult} concentrates on the deterministic/worst-case analysis, providing the first guarantees when there are both errors and erasures. Its specialization to the erasures-only case provides the first deterministic guarantees for low-rank matrix completion (where existing work \cite{CanRec,CanTao} has concentrated on randomly located observations). Specialization to the errors-only case provides an order improvement over the previous deterministic results in \cite{cspw}, and matches the scaling of \cite{tong} but with a simpler proof.

Besides improving on known guarantees, all our results involve several technical innovations beyond existing proofs. Several of these innovations may be of interest in their own right, for other related high-dimensional problems.

%The paper is organized as follows: Section \ref{} describes the problem, presents our algorithm and main contributions, and discusses their relation to existing literature. Section \ref{} provides a brief proof summary for Theorems \ref{thm:randomfixed}, \ref{thm:random} and \ref{}.

\section{Main Contributions}
\label{sec:main}
\subsection{Setup}

\noindent {\bf The problem:} Suppose matrix $C\in\mathbb{R}^{n_1\times n_2}$ is the sum of an underlying low-rank matrix $B^*\in\mathbb{R}^{n_1\times n_2}$ and a sparse ``errors" matrix $A^*\in\mathbb{R}^{n_1\times n_2}$. Neither the number, locations or values of the non-zero entries of $A^*$ are known {\it a priori}; indeed by ``sparse" we just mean that $A^*$ has at least a constant fraction of its entries being 0 -- it is allowed to have a significant fraction of its entries being non-zero as well. We consider the following problem: suppose we only observe a subset $\OmegaO \subseteq [n_1] \times [n_2]$ of the entries of $C$; the remaining entries are erased. When and how can we \underline{exactly} recover $B^*$ (and, by simple implication, the entries of $A^*$ that are in $\OmegaO$)?\\

\noindent {\bf The Algorithm:} In this paper we are interested in the performance of the following convex program
\begin{equation}
\begin{aligned}
(\hat{A},\hat{B})=\arg&\min_{A,B}\qquad \gamma\|A\|_{1}+\|B\|_*\\
&\quad\text{s.t.}\qquad \PP_{\OmegaO}\left(A+B\right)=\PP_{\OmegaO}\left(C\right),
\end{aligned}
\label{eq:opt_problem}
\end{equation}
where the notation is that for any matrix $M$, $\|M\|_* = \sum_i \sigma_i(M) $ is the nuclear norm, defined to be the sum of the singular values of the matrix, $\|M\|_{1} = \sum_{i,j} |a_{ij}|$ is the elementwise $\ell_1$ norm, and $\PP_{\OmegaO}(M)$ is the matrix obtained by setting the entries of $M$ that are outside the observed set $\OmegaO$ to zero. Intuitively, the nuclear norm acts as a convex surrogate for the rank of a matrix \cite{fazel}, and the $\ell_1$ norm as a convex surrogate for its sparsity. Here $\llambda$ is a parameter that trades off between these two elements of the cost function, and our results below specify how it should be chosen. As noted earlier, this program has appeared previously in \cite{CanLiMaWri,cspw}.\\

\noindent {\bf Incoherence:} We are interested in characterizing when the optimum of \eqref{eq:opt_problem} recovers the underlying (observed) truth, i.e., when $(\PP_{\OmegaO}(\hat{A}),\hat{B}) = \left(\PP_{\OmegaO}\left(A^*\right),B^*\right)$. Clearly, not all low-rank matrices $B^*$ can be recovered exactly; in particular, if $B^*$ is both low-rank {\em and} sparse, it would be impossible to unambiguously identify it from an added sparse matrix. To prevent such a scenario, we follow the approach taken in the recent work \cite{cspw,CanLiMaWri,CanRec,CanTao,Gro} and define {\em incoherence} parameters for $B^*$. Suppose the matrix $B^*$ with rank $r\leq\min\left(n_1,n_2\right)$ has  singular value decomposition $U\Sigma V^\top$, where $U\in\mathbb{R}^{n_1\times r}$, $V\in\mathbb{R}^{n_2\times r}$ and $\Sigma\in\mathbb{R}^{r\times r}$. We say a given matrix $B^*$ is \textbf{$\mu$-incoherent} for some $\mu\in\left[1,\frac{\max\left(n_1,n_2\right)}{r}\right]$ if 
\begin{equation}
\begin{aligned}
\max_{i}\|U^\top\mathbf{e}_{i}\|\leq\sqrt{\frac{\mu r}{n_1}}\qquad &\qquad\max_{j}\|V^\top\mathbf{e}_{j}\|&\leq\sqrt{\frac{\mu r}{n_2}}\\
\|UV^\top\|_{\infty}&\leq\sqrt{\frac{\mu r}{n_1 n_2}},
\end{aligned}
\nonumber
\end{equation}
where, $\mathbf{e}_{i}$'s are standard basis vectors with proper length, and $\|\cdot\|$ represents the $2$-norm of the vector. Notice that all our results in the following subsections only depend on the product of $\mu$ and $r$.\\
%\textbf{@@ add noted that all conditions depend only on the product $\mu r$.}

\subsection{Unified Guarantee}

Our first main result is a unified guarantee that allows for the simultaneous presence of random and adversarial patterns, for both errors and erasures. As mentioned in the introduction, this recovers all existing results in matrix completion, and sparse and low-rank matrix decomposition, up to constants or $\log$ factors. We now define three bounding quantities: $p_0, \tau$ and $d$.

Let $\OmegaOd$ be any (i.e. deterministic) set of observed entries, and additionally let $\OmegaOr$ be a randomly chosen set such that each entry is in $\OmegaOr$ with probability {\em at least} $p_0$. Thus, the overall set of observed entries is $\OmegaO=\OmegaOr \cap\OmegaOd$, the \emph{intersection} of the two sets. Let $\Omega=\Omega_{\textrm{r}}\cup\Omega_{\textrm{d}}$ be the support of $A^*$, again composed of the \emph{union} of a deterministic component $\Omega_{\textrm{d}}$, and a random component $\Omega_{\textrm{r}}$ generated by having each entry be in $\Omega_{\textrm{r}}$ independently with probability {\em at most} $\tau$. Finally, consider the union $\OmegaOd^{c}\cup\Omega_{\textrm{d}}$ of all deterministic errors and erasures, and let $d$ be an upper bound on the maximum number of entries this set has in any row, or in any column.

\begin{theorem}
[Unified Guarantee]
Set $n=\min\{n_1,n_2\}$. There exist universal constants $C$, $\rho_{r}$, $\rho_{s}$ and $\rho_{d}$ -- each independent of $n$, $\mu$ and $r$ -- such that, with probability greater than $1-Cn^{-10}$, the unique optimal solution of \eqref{eq:opt_problem} with tradeoff parameter $\llambda=\frac{1}{32\sqrt{p_{0}(d+1)n}}$
is equal to $\left(\PP_{\OmegaO}(A^{\ast}),\; B^{\ast}\right)$
provided that
\begin{eqnarray*}
p_{0} & \ge &  \rho_{\textrm{r}} \frac{\mu r\log^{6}n}{n}\\
\tau & \le & \rho_{s}\\
d & \le & \rho_{\textrm{d}} \frac{n}{\mu r}\cdot\frac{p_{0}^2}{\log^4 n}
\end{eqnarray*}
\label{thm:randomfixed}
\end{theorem}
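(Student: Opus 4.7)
The plan is to prove the theorem via a dual certificate construction. By standard subdifferential calculus, $(\PP_{\OmegaO}(A^{*}), B^{*})$ is the unique optimum of \eqref{eq:opt_problem} provided the linear map $\PP_T \circ \PP_{\OmegaO \cap \Omega^c}$ is injective on $T$ and there exists $W$ supported on $\OmegaO$ satisfying (i) $\PP_{\Omega \cap \OmegaO}(W) = \llambda \sgn(A^{*})$, (ii) $\|\PP_{\OmegaO \cap \Omega^c}(W)\|_\infty < \llambda$, (iii) $\PP_T(W) = UV^\top$, and (iv) $\|\PP_{T^\perp}(W)\| < 1$, where $T$ is the tangent space at $B^{*}$ to the manifold of rank-$r$ matrices. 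Exhibiting such a $W$ under the mixed random/deterministic model is the heart of the argument.

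I would start by conditioning on $\OmegaOd$, $\Omegad$ and $\Omegar$, leaving only $\OmegaOr$ random, and isolating the deterministic bad set $\Gammad := \OmegaOd^c \cup \Omegad$. Since each row and column of $\Gammad$ has at most $d$ entries and $d \lesssim n p_0^2/(\mu r \log^4 n)$, incoherence yields deterministic estimates such as $\|\PP_T \PP_{\Gammad} \PP_T\| \leq 1/4$ together with an accompanying $\ell_\infty$ control on $\PP_T$ restricted to $\Gammad$. These bounds establish the required injectivity and decouple the deterministic from the random analysis on the complement $\Gammad^c$. Since each entry lies in $\RR := \OmegaOr \cap \Gammad^c \cap \Omegar^c$ independently with probability at least $q := p_0(1-\tau) \gtrsim p_0$ (using $\tau \leq \rho_s$), I can apply a \emph{golfing scheme} on $\RR$: partition it into $k = \Theta(\log n)$ independent Bernoulli batches $\RR_1,\ldots,\RR_k$ and iterate
\begin{equation*}
Y_0 = 0, \qquad Y_j = Y_{j-1} + \tfrac{k}{q}\PP_{\RR_j}\bigl(UV^\top - \PP_T(Y_{j-1})\bigr),
\end{equation*}
so that the residual $\Delta_j := UV^\top - \PP_T(Y_j)$ contracts geometrically in both Frobenius and $\ell_\infty$ norms with high probability. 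The sampling-rate requirement $q/k \gtrsim \mu r \log n / n$ is exactly what the hypothesis $p_0 \gtrsim \mu r \log^6 n / n$ delivers after absorbing the batch count and the polylog factors arising from matrix Bernstein tails \cite{Gro,CanLiMaWri}.

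The final certificate then takes the form $W = Y_k + E_1 + E_2$, where $E_1 = \llambda\, \PP_{\Omega \cap \OmegaO}(\sgn(A^{*}))$ installs condition (i), and $E_2$ is a small Neumann-series correction supported on $\OmegaO \cap \Omega^c$ restoring (iii) exactly. The choice $\llambda = 1/(32\sqrt{p_0(d+1)n})$ is precisely the threshold balancing the spectral norm $\|\PP_{T^\perp}(E_1)\|$ (controlled by non-commutative Khintchine / Bernstein bounds for random-signed sparse matrices with maximum row/column density of order $(d+1) + \tau n$) against the $\ell_\infty$ perturbation introduced by $E_2$. Conditions (ii) and (iv) then follow from matrix Bernstein together with symmetrization, each failing with probability at most $n^{-11}$, and a union bound over the $O(\log n)$ iterations gives the claimed $1 - C n^{-10}$ probability.

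The main obstacle I expect is ensuring the deterministic carving-out by $\Gammad$ does not disturb the golfing contraction. The standard analysis requires $\tfrac{k}{q}\PP_{\RR_j}$ to approximate the identity on $T$, but here $\RR_j \subseteq \Gammad^c$, so one must simultaneously quantify the action of $\PP_{\Gammad}$ on matrices in $T$ at every scale the recursion touches. This is what appears to force the squared dependence $p_0^2$ (rather than $p_0$) in the hypothesis on $d$: the per-step golfing error is of order $1/\sqrt{qn}$ in $\ell_\infty$, and the deterministic correction accommodating $\Gammad$ must stay within this tolerance in both the spectral and $\ell_\infty$ senses so that the $O(\log n)$ iterations close cleanly.
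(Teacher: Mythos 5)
There are two genuine gaps. First, Theorem \ref{thm:randomfixed} allows the signs of $A^{\ast}$ to be \emph{arbitrary}, yet your bound on $\left\Vert \PP_{\T^{\perp}}(E_1)\right\Vert$ invokes ``non-commutative Khintchine / Bernstein bounds for random-signed sparse matrices.'' Without random signs this bound fails: a matrix of all $+1$'s on a Bernoulli$(p_0\tau)$ support has operator norm $\approx p_0\tau n$, and multiplying by $\llambda = 1/(32\sqrt{p_0(d+1)n})$ gives $\tau\sqrt{p_0 n}/32$, which is not $O(1)$ relative to what is needed and certainly not controlled uniformly over adversarial sign patterns by your argument. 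The paper's Step 1 is exactly the missing ingredient: a derandomization/elimination lemma (following Cand\`es et al.) reducing the fixed-sign model at rate $\tau$ to the random-sign model at rate $2\tau$, after which all sign-dependent bounds become legitimate. Your proposal never performs this reduction.

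Second, and more centrally, your certificate $W = Y_k + E_1 + E_2$ with golfing applied only to $UV^{\top}$ and a Neumann-series correction $E_2$ handling the sparse part is precisely the construction of Cand\`es--Li--Ma--Wright, which the paper explicitly identifies as capping out at $p_0 = \Omega(1)$. The entire point of the theorem is the vanishing-observation regime $p_0 = \Theta(\mu r\log^6 n/n)$, and the paper achieves it by folding $\llambda\PP_{\T}\EE$ into the golfing target itself, i.e.\ iterating on $UV^{\top}-\llambda\PP_{\T}\EE$. This creates the real difficulty --- $\EE$ is \emph{not independent} of the golfing batches --- which the paper resolves by a conditional-independence decomposition of each product of operators into Type-1 terms (independent of $\EE$ given $\Omega$, bounded by the deterministic operator lemmas) and Type-2 terms (bounded via the random signs of $\EE$ through Hoeffding plus an $\epsilon$-net), together with a separate treatment of one exceptional term. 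Your proposal contains no analysis of $E_2$ at all: you would need to show both $\left\Vert \PP_{\Gamma}(E_2)\right\Vert_{\infty} < \llambda/2$ and $\left\Vert \PP_{\T^{\perp}}(E_2)\right\Vert < 1/4$ when $p_0$ is polylogarithmically small, and this is exactly where the Neumann/least-squares route breaks down. The deterministic carving-out by $\Gammad$, which you flag as the main obstacle, is actually the more routine part (handled by Lemma \ref{lem:det_Fro_bound} and the two-part structure of Lemmas \ref{lem:OP1}--\ref{lem:INF}); the heart of the proof is the certificate for the sparse component, which your outline does not supply.
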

\begin{remark*}
(a) The conclusion of the theorem holds for a range of values of $\gamma$. We have chosen one of these valid values. (b) Note that the above theorem treats errors and erasures differently. Treating erasures as errors by filling missing entries with random $\pm 1$ and applying Theorem \ref{thm:random} leads to a weaker result, in particular, $p_0=\Omega\left( \sqrt{\frac{\mu r \log^6 n}{n}} \right)$.
\end{remark*}

%\textbf{@@: Do we want to emphasize that the constants are absolute, and one can indeed obtain numerical values for them? This is NOT an asymptotic analysis}

\textbf{Comparison with previous work.}
Recovery from deterministic errors was first studied in \cite{cspw,cspw1}, which stipulate $d=O\left( \sqrt{\frac{n}{\mu r}} \right)$. Our theorem improves this bound to $d=O\left(  \frac{n}{\mu r \log^4 n} \right)$. In section \ref{sub:determ_guarantee}, we provide a more refined analysis for the deterministic case, which gives $d=O\left( \frac{n}{\mu r} \right)$. As this manuscript was being prepared, we learned of an independent investigation of the deterministic case \cite{tong}, which gives similar guarantees. Our results also handle the case of partial observations, which has not been discussed before \cite{cspw,cspw1,tong}.

Randomly located errors and erasures have been studied in \cite{CanLiMaWri}. Their guarantees require that $\tau=O(1)$, and $p_0 = \Omega(1)$. Our theorem provides stronger results, allowing $p_0$ to be vanishingly small, in particular, $\Theta\left( \frac{\mu r\log^6 n}{n} \right)$ when there is no additional deterministic component (i.e. $d=0$). After the publication of the conference version of this paper, we learned about \cite{xiaodongli}. They also deal with random errors and erasures, but under a different observation model (sampling with replacement), and have scaling results comparable to ours.

Previous work in low-rank matrix completion deals with the case when there are no errors or deterministic erasures (i.e., $d, \tau=0$). For this problem, our theorem matches the best existing bound $p_{0} = O\left(\frac{\mu r \log^2 n}{n}\right)$ \cite{CanTao, Gro, Recht} up to logarithm factors. Our theorem also provides the first guarantee for deterministic matrix completion under potentially adversarial erasures.

One prominent feature of our guarantees is that we allow adversarial and random erasures/errors to exist \emph{simultaneously}. To the best of our knowledge, this is the first such result in low-rank matrix recovery/robust PCA.

\subsection{Improved Guarantee for Errors with Random Sign}
If we further assume that the errors in the entries in $\Omegar\backslash\Omegad$ have random signs, then one can recover from an overwhelming fraction of corruptions.
\begin{theorem}
[Improved Guarantee for Errors with Random Sign]
Under the same setup of Theorem \ref{thm:randomfixed}, further assume that the signs of $A^{\ast}$ in $\Omegar\backslash\Omegad$ are symmetric $\pm 1$ Bernoulli random
variables independent of all others. Then there exist absolute constants $C$, $\rho_{\textrm{r}}$ and $\rho_{\textrm{d}}$ independent of $n$, $\mu$ and $r$
such that, with probability at least $1-Cn^{-10}$, the unique optimal solution of \eqref{eq:opt_problem}
with tradeoff parameter $\llambda=\frac{1}{32\sqrt{p_{0}(d+1)n}}$ is equal to $\left(\PP_{\OmegaO}(A^{\ast}),\; B^{\ast}\right)$
provided that
\begin{eqnarray*}
p_{0}(1-\tau)^{2} & \ge &  \rho_{\textrm{r}} \frac{\mu r\log^{6}n}{n}\\
d & \le & \rho_{\textrm{d}} \frac{n}{\mu r}\cdot\frac{p_{0}^2(1-\tau)^2}{\log^4 n}
\end{eqnarray*}
\label{thm:random}
\end{theorem}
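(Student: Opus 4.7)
The plan is to adapt the dual certificate construction from the proof of Theorem \ref{thm:randomfixed}, leveraging the symmetric random signs of $A^\ast$ on $\Omegar \setminus \Omegad$ to obtain sharper concentration bounds. The overall template, subgradient-style certification of the KKT optimality conditions for \eqref{eq:opt_problem}, remains unchanged; what differs is how the stochastic quantities involving the error matrix are controlled.

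First, I would invoke a standard symmetrization argument: the model in which each entry independently lies in $\Omegar$ with probability $\tau$ and then carries an independent symmetric $\pm 1$ sign is distributionally equivalent to a two-stage process in which we first flip i.i.d.\ symmetric signs on every entry and then mask them by an independent Bernoulli pattern. This decoupling lets us replace several worst-case operator-norm bounds involving $\PP_{\Omegar \setminus \Omegad}$ with random-sign counterparts acting on $\sgn(A^\ast)$, which are much smaller in expectation.

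The key technical improvement is that bounds of the form $\|\PP_{\OmegaO}(\sgn(A^\ast))\|_{\mathrm{op}}$ and related quantities benefit from the independence and symmetry of signs, via non-commutative Bernstein. Whereas in Theorem \ref{thm:randomfixed} we only exploit that a Bernoulli$(\tau)$ subsampling has operator norm of order $\sqrt{\tau}$ times the original, here we additionally exploit that multiplying a Bernoulli$(\tau)$ subsampling by i.i.d.\ signs shrinks the spectral norm further by a factor involving $(1-\tau)^{1/2}$. Propagating this gain through each concentration inequality used in the (golfing-type) iterative dual construction of Theorem \ref{thm:randomfixed} converts the observation-rate requirement from $p_{0}$ into $p_{0}(1-\tau)^{2}$ and removes the constraint $\tau \leq \rho_{s}$ entirely, which is exactly the statement of the theorem.

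The main obstacle I anticipate is cleanly combining the three heterogeneous sources of structure: the partial observation pattern $\OmegaO = \OmegaOr \cap \OmegaOd$, the corruption support $\Omega = \Omegar \cup \Omegad$, and the random signs supported only on $\Omegar \setminus \Omegad$. The deterministic subset $\Omegad$ enjoys no sign-randomness reduction and must be absorbed into the bounded row/column count $d$. Obtaining the stated scaling $p_{0}^{2}(1-\tau)^{2}$ in the $d$-bound, rather than the naive $p_{0}^{2}(1-\tau)^{4}$ that would follow from simply substituting $p_{0}\mapsto p_{0}(1-\tau)^{2}$ in Theorem \ref{thm:randomfixed}, requires carefully isolating which step of the dual certificate construction benefits from the random signs and which is bottlenecked by the deterministic block, so that the sign-cancellation factor is applied only where it genuinely helps. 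Essentially, the random-sign bound on the operator norm must be invoked inside the infinity-norm/operator-norm bounds coming from the random part of $\Omegar$, while the worst-case estimate for the deterministic part must be kept separate to avoid paying an extra $(1-\tau)^{2}$ factor that the deterministic block cannot provide.
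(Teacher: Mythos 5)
Your plan has a structural problem and misidentifies the key mechanism. First, in this paper there is no sign-free dual-certificate construction to ``adapt'': the logical order is the reverse of what you assume. Theorem \ref{thm:random} is the theorem that is proved directly (the random signs are \emph{needed} to make the golfing-scheme certificate work at all), and Theorem \ref{thm:randomfixed} is then deduced from it by a derandomization/elimination argument (Step 1, following \cite{CanLiMaWri}). So ``take the proof of Theorem \ref{thm:randomfixed} and sharpen it using the signs'' is circular here. Second, your explanation of where the $(1-\tau)^{2}$ comes from is not correct. Multiplying a Bernoulli$(\tau)$ mask by i.i.d.\ symmetric signs does not shrink the spectral norm by a factor $(1-\tau)^{1/2}$; a random $\pm1$ matrix of density $\tau$ has operator norm of order $\sqrt{n\tau}$, with no gain as $\tau\to 1$. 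The $(1-\tau)$ factors in the hypotheses come from the density of the \emph{clean observed} entries: the golfing batches satisfy $\Gamma^{(k)}\sim\mathrm{Ber}(q)$ with $q\ge p_0(1-\tau)/k_0^2$, and the second power of $(1-\tau)$ arises from products of operator-norm bounds such as $\Vert\mathcal{A}_i\Vert\lesssim\sqrt{p_0(1-\tau)/\log^3 n}$ against $\Vert\PP_\T\mathcal{T}_k\Vert\lesssim\sqrt{\log^2 n/(p_0(1-\tau))}$ in the Type-2 term analysis, not from any sign-cancellation applied to the error matrix.

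More importantly, the proposal omits the actual hard part of the argument: $\EE=\PP_{\OmegaO}(\sgn(A^\ast))$ is supported on $\Omega$ while the golfing batches $\Gammai$ live in $\Omega^c$, so $\EE$ and the sampling operators are \emph{dependent}, and the standard concentration lemmas (Lemmas \ref{lem:OP_INF} and \ref{lem:INF}) cannot be applied to the terms of the certificate involving $\PP_{\Omegar\backslash\Omegad}\EE$. The paper resolves this by exploiting conditional independence given $\Omega$, splitting each golfing operator as $\Ak+\Sk$ and $\Bk+\Tk$, expanding the product into a single Type-1 term (handled by the concentration lemmas) and exponentially many Type-2 terms (handled via Hoeffding's inequality over the random signs combined with an $\epsilon$-net for the operator norm), plus a separate lemma for one exceptional term. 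This decomposition is precisely where the random-sign hypothesis is used and where the stated scalings of $p_0(1-\tau)^2$ and $d$ are generated; without it (or an equivalent device) the certificate cannot be validated, so the proposal as written does not constitute a viable route to the theorem.
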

\begin{remark*}
Note that $\tau$ may be arbitrary close to $1$ for large $n$. One interesting observation is that $p_0$ can approach zero faster than $1-\tau$; this agrees with the intuition that correcting erasures with known locations is easier than correcting errors with unknown locations.
\end{remark*}

\textbf{Comparison with previous work } Dense errors with random locations and signs were considered in \cite{ganesh}. They show that $\tau$ can be a constant arbitrarily close to $1$ provided that all entries are observed and $n$ is sufficiently large. Our theorem provides stronger results by again requiring only a vanishingly small fraction of entries to be observed and in particular  $p_0 = \Theta\left( \frac{\log^4 n}{n} \right)$. Moreover, Theorem \ref{thm:random} gives explicit scaling between $\tau$ and $n$ as $\tau=O\left( 1-\sqrt{\frac{\log^4 n}{n}}\right)$, with $\llambda$ independent of the usually unknown quantity $\tau$. In contrast, \cite{ganesh} requires $ \tau \le f(n)$ for some unknown function $f(\cdot)$ and uses a $\tau$-dependent $\llambda$.

\subsection{Improved Deterministic Guarantee}
\label{sub:determ_guarantee}
\noindent Our second main result deals with the case where the errors and erasures are arbitrary. As discussed in \cite{cspw}, for exact recovery, the error matrix $A^*$ needs to be not only sparse but also "spread out", i.e. to not have any row or column with too many non-zero entries. The same holds for unobserved entries. Correspondingly, we require the following: (i) there are at most $d$ errors and erasures on each row/column, and, (ii) $\|M\|\leq\eta d\|M\|_{\infty}$ for any matrix $M$ that is supported on the set of corrupted entries and unobserved entries; here $\|M^*\|=\sigma_{\max}(M^*)$ is the largest singular value of $M$ and $\|M\|_{\infty} = \max_{i,j} |M_{i,j}|$ is the element-wise maximum magnitude of the elements of the matrix. Note that by \cite[Proposition 3]{cspw}, we can always take $\eta\le 1$. Also, let $\alpha=\sqrt{\frac{\mu r d}{n_1}}+\sqrt{\frac{\mu r d}{n_2}}+\sqrt{\frac{\mu r d}{\max(n_1,n_2)}}$.\\

\begin{theorem}
[Improved Deterministic Guarantee]
For tradeoff parameter \small$\llambda\in\left[\frac{1}{1-2\alpha}\sqrt{\frac{\mu r}{n_1n_2}},\frac{1-\alpha}{\eta d}-\sqrt{\frac{\mu r}{n_1n_2}}\right]$\normalsize, suppose \small$$\sqrt{\frac{\mu r d}{\min(n_1,n_2)}}\!\left(\!\!1\!+\!\sqrt{\frac{\min(n_1,n_2)}{\max(n_1,n_2)}}\!+\!\eta\sqrt{\frac{ d}{\max(n_1,n_2)}}\right)\!\!\leq\!\frac{1}{2}.$$\normalsize Then, the solution to the problem~(\ref{eq:opt_problem}) is unique and equal to $(\PP_{\OmegaO}(A^*),B^*)$.\\
\label{theorem:mainresult}
\end{theorem}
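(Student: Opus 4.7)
\medskip\noindent\textbf{Proof Proposal.} The plan is to construct a dual certificate, adapting the framework of~\cite{cspw} to the partial-observation setting. Let $\Gamma=\Omega\cup\OmegaO^{c}$ denote the combined ``bad set'' of error locations and unobserved entries; by hypothesis $\Gamma$ has at most $d$ nonzero entries per row and per column. Since the feasibility constraint of~\eqref{eq:opt_problem} involves only $\PP_{\OmegaO}$, any Lagrange multiplier can be taken to be supported on $\OmegaO$. Writing subgradient optimality conditions for $\|\cdot\|_{*}$ at $B^{*}$ and $\gamma\|\cdot\|_{1}$ at $\PP_{\OmegaO}(A^{*})$ yields the following sufficient conditions for $(\PP_{\OmegaO}(A^{*}),B^{*})$ to be the unique optimum: (i)~$\T\cap\{M:\supp(M)\subseteq\Gamma\}=\{0\}$; and (ii)~there exists $W$ supported on $\OmegaO$ with $\PP_{\T}(W)=UV^{\top}$, $\|\PP_{\T^{\perp}}(W)\|<1$, $\PP_{\Omega\cap\OmegaO}(W)=\gamma\sgn(A^{*})$, and $\|\PP_{\OmegaO\setminus\Omega}(W)\|_{\infty}<\gamma$. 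Note $\OmegaO\setminus\Omega=\Gamma^{c}$, so the certificate is fully prescribed on $\Gamma$ (by $\gamma\sgn(A^{*})$ on $\Omega\cap\OmegaO$ and $0$ on $\OmegaO^{c}$) and merely bounded on $\Gamma^{c}$.

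First I would verify transversality~(i) by bounding $\|\PP_{\T}\PP_{\Gamma}\|_{\mathrm{op}}$. Using the identity $\PP_{\T}(M)=UU^{\top}M+MVV^{\top}-UU^{\top}MVV^{\top}$ together with incoherence of $B^{*}$ and the $d$-sparsity of $\Gamma$ per row and column, a direct Frobenius-norm computation gives contributions of $\sqrt{\mu r d/n_{1}}$, $\sqrt{\mu r d/n_{2}}$, and $\sqrt{\mu r d/\max(n_{1},n_{2})}$ from the three terms (the last obtained by taking the minimum of two natural estimates). This yields $\|\PP_{\T}\PP_{\Gamma}\|_{\mathrm{op}}\le\alpha$, which under the theorem's main hypothesis satisfies $\alpha<1/2$; consequently $I-\PP_{\Gamma}\PP_{\T}$ is invertible on the relevant subspace and~(i) holds.

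Next I would construct $W$ explicitly. Writing $W=X+Y$ with $X\in\T$ and $Y$ supported on $\Gamma$, the two requirements $\PP_{\T}(W)=UV^{\top}$ and $\PP_{\Gamma}(W)=S$, where $S$ is the prescribed value on $\Gamma$, reduce to
\begin{equation*}
X=UV^{\top}-\PP_{\T}(Y),\qquad (I-\PP_{\Gamma}\PP_{\T})(Y)=S-\PP_{\Gamma}(UV^{\top}),
\end{equation*}
which is uniquely solvable by Neumann series since $\|\PP_{\Gamma}\PP_{\T}\|\le\alpha<1$. The remaining strict inequalities in~(ii) then follow from $\|\PP_{\T^{\perp}}(W)\|=\|\PP_{\T^{\perp}}(Y)\|\le\|Y\|\le\eta d\,\|Y\|_{\infty}$, invoking the stipulated spread hypothesis $\|M\|\le\eta d\|M\|_{\infty}$ for $\Gamma$-supported $M$, together with $\|\PP_{\Gamma^{c}}(W)\|_{\infty}=\|\PP_{\Gamma^{c}}(X)\|_{\infty}\le\|UV^{\top}\|_{\infty}+\|\PP_{\T}(Y)\|_{\infty}$, controlled by $\sqrt{\mu r/(n_{1}n_{2})}$ plus entrywise bounds on $\PP_{\T}$ applied to $\Gamma$-sparse matrices.

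The central obstacle is transporting the $\ell_{\infty}$ estimates through the Neumann series: since $\|S-\PP_{\Gamma}(UV^{\top})\|_{\infty}\le\gamma+\sqrt{\mu r/(n_{1}n_{2})}$, one needs to argue $\|Y\|_{\infty}\le(\gamma+\sqrt{\mu r/(n_{1}n_{2})})/(1-\alpha)$ by showing $\PP_{\Gamma}\PP_{\T}$ contracts entrywise. Pairing this with $\|X\|_{\infty}\le\sqrt{\mu r/(n_{1}n_{2})}+\alpha\|Y\|_{\infty}$ and demanding both $\eta d\|Y\|_{\infty}<1$ and $\|X\|_{\infty}<\gamma$ produces exactly the stated interval for $\gamma$: the lower endpoint $\frac{1}{1-2\alpha}\sqrt{\mu r/(n_{1}n_{2})}$ comes from the $\|X\|_{\infty}<\gamma$ requirement and the upper endpoint $\frac{1-\alpha}{\eta d}-\sqrt{\mu r/(n_{1}n_{2})}$ from the spectral bound, with the theorem's main inequality ensuring that this interval is nonempty. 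Uniqueness of the optimum then follows from combining strict dual-certificate inequalities with transversality via a standard subgradient computation, which also forces $\hat{A}$ to vanish on $\OmegaO^{c}$.
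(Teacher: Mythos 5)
Your proposal is correct and follows essentially the same route as the paper: the same sufficient conditions (transversality of $\T$ with the corrupted/unobserved set plus a dual certificate prescribed on that set and bounded elsewhere), the same entrywise contraction $\|\PP_{\T}\PP_{\Gamma^c}\|_{\infty\to\infty}\le\alpha$ driven by incoherence and the per-row/column bound $d$, and the same Neumann-series certificate — your $W=X+Y$ parametrization with $(I-\PP_{\Gamma^c}\PP_{\T})Y=S-\PP_{\Gamma^c}(UV^{\top})$ is exactly the paper's alternating series $Q_a+Q_b$, and your two inequalities $\eta d\|Y\|_{\infty}<1$ and $\|X\|_{\infty}<\gamma$ reproduce the paper's Lemma on $\|\PP_{\T^{\perp}}(Q)\|$ and $\|\PP_{\Gamma}(Q)\|_{\infty}$, yielding the identical interval for $\gamma$. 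The only differences are notational (your $\Gamma$ is the paper's $\Gamma^{c}$) and that you invoke a Frobenius/operator-norm contraction for transversality where the paper uses the $\ell_\infty$ one; both are available and valid here.
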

\begin{remark*}
(a) Notice that we have $\sqrt{d}$ in the bound while \cite{cspw} has $d$ in their bound. This improvement is achieved by a different construction of dual certificate presented in this paper. (b) If $\eta d\sqrt{\frac{\mu r}{\min(n_1,n_2)}}\leq\frac{1}{6}$ (the condition provided for exact recovery in \cite{cspw}) is satisfied then the condition of Theorem~\ref{theorem:mainresult} is satisfied as well. This shows that our result is an improvement to the result in \cite{cspw} in the sense that this result guarantees the recovery of a larger set of matrices $A^*$ and $B^*$. Moreover, this bound implies that $n$ (for square matrices) should scale with $dr$, which is another improvement compared to the $d^2r$ scaling in \cite{cspw}. (c) We construct the dual certificate by the method of least squares (first used in \cite{CanRec} in a different setting) with tighter bounding. This theorem provides the same scaling result for $d$, $r$ and $n$ as that in the recent manuscript \cite{tong}. However, our assumptions are closer to existing ones in matrix completion and sparse and low-rank decomposition papers \cite{CanRec,CanTao, cspw, CanLiMaWri}.
\end{remark*}

\newcommand{\Ai}{\mathcal{A}_i}
\newcommand{\Ak}{\mathcal{A}_k}
\newcommand{\Sk}{\mathcal{S}_k}
\newcommand{\Si}{\mathcal{S}_i}
\newcommand{\Bk}{\mathcal{B}_k}
\newcommand{\Tk}{\mathcal{T}_k}
\newcommand{\XX}{\mathcal{X}}

\newcommand{\Gammak}{\Gamma^{(k)}}
\newcommand{\Gammai}{\Gamma^{(i)}}
\newcommand{\Gammaki}{\Gamma^{(k-i)}}

\newcommand{\Omegak}{\Omega^{(k)}}
\newcommand{\Omegai}{\Omega^{(i)}}
\newcommand{\Omegaone}{\Omega^{(1)}}
\newcommand{\Omegatwo}{\Omega^{(2)}}
\newcommand{\Omegako}{\Omega^{(k_0)}}

\newcommand{\OmegaOk}{\OmegaO^{(k)}}
\newcommand{\OmegaOi}{\OmegaO^{(i)}}

\section{Proof Theorem \ref{thm:randomfixed} and \ref{thm:random}}

In this section we prove our unified guarantees. The main roadmap is along the same lines of those in the low-rank matrix recovery literature \cite{CanRec, CanLiMaWri, Gro}; it consists of providing a dual matrix $Q$ that certifies the optimality of $(\PP_{\OmegaO}(A^\ast), B^\ast)$ to the convex program \eqref{eq:opt_problem}. In spite of this high level similarity, challenges arise because of the denseness of erasures/errors as well as the simultaneous presence of deterministic and random components. This requires a number of innovative intermediate results and a new construction of the dual certificate $Q$. We will point out how our analysis departs from previous works when we construct the dual certificate in section \ref{sub:dualcert}.

Before proceeding, we need to introduce some additional notation.   Define the support of $A^*$ as $\Omega = \{(i,j): A^*_{i,j} \neq 0 \}$. Let $\Gamma = \OmegaO \backslash \Omega$ be the set of  entries that are observed \emph{and} clean, then $\Gamma^c$ is the set of entries that are corrupted \emph{or} unobserved.
 Also, let $\Gammar=\OmegaOr \backslash \Omegar$
be the set of \emph{random} observed clean entries, and $\Gammad$ the set of \emph{deterministic} observed clean entries; so $\Gamma = \Gammar \cap \Gammad$. The projections $\PP_{\Gamma}$, $\PP_{\Gamma^{c}}$, $\PP_{\Gammar}$, and $\PP_{\Gammar^{c}}$ are defined similarly to $\PP_{\OmegaO}$.
Set $\EE:=\PP_{\OmegaO}\left(\sgn(A^{\ast})\right)$,
where $\sgn(\cdot)$ is the element-wise signum function. For an entry set $\Omega_{0}$, we write $\Omega_{0}\sim \textrm{Ber}(p)$ if $\Omega_{0}$ contains each entry with probability $p$, independent of all others; therefore $\OmegaOr\sim\textrm{Ber}(p_{0})$, $\Omegar\sim\textrm{Ber}(\tau)$, and $\Gammar\sim\textrm{Ber}(p_{0}(1-\tau))$. We also define a sub-space $\T$ of the span of all matrices that share either the same column space or the same row space as $B^*$:
\begin{equation*}
\T ~ = ~ \left\{UX^{\top}+YV^{\top}:X \in \R^{n_{2}\times r}, Y \in \R^{n_{1}\times r}\right\}.
\end{equation*}
For any matrix $M \in\R^{n_{1} \times n_{2}}$, we can define its {\em orthogonal projection} to the space $\T$ as follows:
\begin{equation*}
\PP_{\T}\left(M\right)=UU^{\top}M+MVV^{\top}-UU^{\top}MVV^{\top}.
\end{equation*}
We also define the projections onto $\T^{\perp}$, the complement orthogonal space of $\T$, as follows:
\begin{equation*}
\PP_{\T^{\perp}}\left(M\right) ~ = ~ M - \PP_\T(M).
\end{equation*}
In the sequel, we use $C$, $C'$ and $C''$ to denote unspecified positive constants, which might differ from place to place; by \emph{with high probability} we mean with probability at least $1-C\min\{n_1,n_2\}^{-10}$. For simplicity, we only prove the case of square matrices ($n_1=n_2=n$). All the proofs extend to the general case by replacing $n$ by $\min\{n_1,n_2\}$. The proof has five steps. We elaborate each of these steps in the next five sub-sections.

\subsection{Step 1: Sign Pattern Derandomization}
\label{sub:derand}
Following \cite{CanLiMaWri}, the first step is to observe that it suffices to prove Theorem \ref{thm:random}, which assumes random signed errors in $\Omegar\backslash\Omegad$. The guarantee under arbitrary signed errors in Theorem \ref{thm:randomfixed} follows automatically from Theorem \ref{thm:random} using a derandomization and elimination argument. This is given in the following lemma, which is a straightforward generalization of \cite[Theorem 2.2 and 2.3]{CanLiMaWri}.
\begin{lemma}
Suppose $B^*$ obeys the conditions of Theorem \ref{thm:randomfixed}. If the convex program \eqref{eq:opt_problem} recovers $B^*$ with high probability in the model where $\Omegar \sim \textrm{Ber}(2\tau)$ and the signs of $A^*$ in $\Omegar\backslash\Omegad$ have random signs, then it also recovers $B^*$ with at least the same probability in the model where $\Omegar \sim \textrm{Ber}(\tau)$ and the signs are arbitrarily fixed.
\end{lemma}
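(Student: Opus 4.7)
The plan is to mirror the two-step derandomization-and-elimination argument of Candès--Li--Ma--Wright (CLMW, Theorems 2.2 and 2.3), carried out carefully so that the deterministic pieces $\Omegad$ and $\OmegaOd$ are left intact throughout. The target model has $\Omegar\sim\textrm{Ber}(\tau)$ with an arbitrarily fixed sign pattern $\bar\sigma$ on $\Omegar\setminus\Omegad$; the hypothesis concerns the model with $\tilde\Omegar\sim\textrm{Ber}(2\tau)$ and independent $\pm1$ Bernoulli signs on $\tilde\Omegar\setminus\Omegad$. I would couple the two support patterns and then transfer the sign information.

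First I would construct the coupling. Starting from $\Omegar\sim\textrm{Ber}(\tau)$, define $\tilde\Omegar=\Omegar\cup\Delta$ where, on $\Omegad^{c}$, each entry of $\Omegar^{c}$ is placed in $\Delta$ independently with probability $\tau/(1-\tau)$. An elementary computation shows $\tilde\Omegar\sim\textrm{Ber}(2\tau)$ and $\Omegar\subseteq\tilde\Omegar$ almost surely. On the enlarged support, form an auxiliary error matrix $\tilde A$ that agrees with $A^{\ast}$ on $\Omegar\setminus\Omegad$ (so its signs there are the fixed $\bar\sigma$) and carries independent $\pm1$ random signs on $\Delta$; the values of $\tilde A$ on $\Omegad$ remain those of $A^{\ast}$.

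Next I would apply CLMW Theorem 2.3 in the random-sign direction to realize $\bar\sigma$. The argument is: since the random-sign hypothesis (our Theorem~\ref{thm:random}) applies to any sign pattern drawn as symmetric $\pm1$ Bernoulli on $\tilde\Omegar\setminus\Omegad$, one can condition on the signs over $\Omegar\setminus\Omegad$ and invoke the classical symmetrization lemma --- which uses only the convexity of \eqref{eq:opt_problem} and the fact that the subgradient conditions for $\|\cdot\|_{1}$ depend on those signs in a single linear term --- to conclude that recovery still holds with probability $\ge 1-Cn^{-10}$ when we force those signs to equal $\bar\sigma$. The signs on $\Delta$ remain random, which is compatible with the hypothesis. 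Then I would apply CLMW Theorem 2.2 (the elimination step) to pass from support $\tilde\Omegar\supseteq\Omegar$ back to $\Omegar$: given a dual certificate for the larger sparse component, one checks that the same construction, with its values on $\Delta$ set to zero, yields a valid dual certificate for the smaller support --- essentially because turning an ``error'' into a ``clean observation'' only relaxes the dual feasibility conditions on the $\ell_{1}$ side while keeping the nuclear-norm side unchanged.

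The main obstacle, and the only place where ``straightforward generalization'' requires care, is verifying that both CLMW arguments respect the observation pattern $\Phi=\OmegaOr\cap\OmegaOd$ and the fixed $\Omegad$: the symmetrization in the derandomization step must randomize only over $\tilde\Omegar\setminus\Omegad$, and the restriction in the elimination step must preserve the subgradient inequalities on $\Omegad\cup(\tilde\Omegar\setminus\Delta)$ while the entries in $\Gamma=\Phi\setminus\Omega$ are handled by the equality constraint and do not enter the dual conditions. Once this bookkeeping is performed, the probability bound carries over unchanged and the lemma follows.
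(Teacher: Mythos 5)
Your coupling and your final elimination step are in the right spirit, but the middle step --- ``condition on the signs over $\Omegar\backslash\Omegad$ and invoke the classical symmetrization lemma \ldots to conclude that recovery still holds with probability $\ge 1-Cn^{-10}$ when we force those signs to equal $\bar\sigma$'' --- is not a valid argument, and no such lemma exists. The hypothesis only says that recovery succeeds with high probability \emph{over the randomness of the signs}; this is entirely consistent with recovery failing for the one specific realization $\bar\sigma$ you care about, since any fixed sign pattern has exponentially small probability and may well lie in the exceptional set. Conditioning a $1-Cn^{-10}$ event on an exponentially unlikely sign realization gives no bound at all. This is precisely the difficulty that the derandomization argument is designed to circumvent, so you cannot assume it away.

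The correct route (the one the paper takes, following Theorems 2.2 and 2.3 of Cand\`es--Li--Ma--Wright) runs the coupling in the opposite direction. Start from the random-sign model: let $\bar\Omegar\sim\textrm{Ber}(2\tau)$ and let $\bar A^\ast$ carry i.i.d.\ symmetric $\pm 1$ signs on $\bar\Omegar\backslash\Omegad$. Now \emph{trim}: keep exactly those entries of $\bar\Omegar\backslash\Omegad$ whose random sign happens to agree with the target fixed pattern, and zero out the rest. Each entry survives with probability $2\tau\cdot\tfrac12=\tau$ independently, so the trimmed matrix has precisely the law of the fixed-sign $\textrm{Ber}(\tau)$ model --- no conditioning on a sign realization is ever performed. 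The only tool then needed is the elimination theorem: if the program recovers $B^\ast$ from $C=B^\ast+\bar A^\ast$, it also recovers $B^\ast$ when some entries of $\bar A^\ast$ are set to zero (turning an error into a clean observation only makes the problem easier), and this implication holds pointwise on the same probability space, so the success probability can only increase. Your closing paragraph about preserving $\Omegad$ and $\OmegaOd$ throughout is the genuinely new bookkeeping required relative to the fully-observed CLMW setting, and that part of your plan is sound; but the derandomization must be done by trimming the random-sign model down to the fixed-sign one, not by forcing signs inside it.
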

The basic idea of the proof is that, as long as $\tau$ is not too large, a fixed-signed error matrix $\PP_{\Omegar\backslash\Omegad}(A^\ast)$ can be viewed as the trimmed version of a random signed $\PP_{\bar{\Omegar}\backslash\Omegad}(\bar{A}^\ast)$ with half of its entries set to zero; moreover, successful recovery under $A^\ast$ is guaranteed by that under $\bar{A}^\ast$, as the latter is a harder problem. We refer the readers to \cite[Theorem 2.2 and 2.3]{CanLiMaWri} for the rigorous proof of this argument. Proceeding under the random-sign assumption makes it easier to construct the dual certificate $Q$. The next four steps are thus devoted to the proof of Theorem \ref{thm:random}.

\subsection{Step 2: Invertibility under corruptions and erasures}
A necessary condition for exact recovery is that the set of uncorrupted and un-erased entries $\Gamma = \Gammar \cap \Gammad$ should uniquely identify matrices in the set $\T$, so we need to show that the operator $\PP_{\T}\PP_{\Gamma}\PP_{\T}$ is invertible on $\T$. This step is quite standard in the literature of low-rank matrix completion and decomposition, but in our case requires a different proof. In fact, invertibility follows from the following stronger result.
\begin{lemma}
\label{lem:OP}
Suppose $\Omega_{0}$ is a set of indices obeying $\Omega_{0}\sim$Ber$(p)$, and $\Gammad$ satisfies $d\le\frac{\rho_d n}{\mu r}$. Then with high probability, we have
\[
\left\Vert p^{-1}\PP_{\T}\PP_{\Omega_{0}\cap\Gammad}\PP_{\T}-\PP_{\T}\right\Vert \le\frac{1}{3}
\]
provided $p\ge C\frac{\mu r \log n}{n}$.
\end{lemma}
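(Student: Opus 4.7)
The plan is to mimic the standard Bernstein-type argument for $\|\PP_{\T}\PP_{\Omega_0}\PP_{\T} - p\PP_{\T}\| \le \tfrac{p}{3}$ from matrix completion, but to first absorb the effect of the deterministic set $\Gammad$ into a separate deterministic error term that is controlled using the row/column sparsity assumption on $\Gammad^c = \OmegaOd^{c}\cup\Omegad$. Concretely, I would start from the decomposition
\[
p^{-1}\PP_{\T}\PP_{\Omega_{0}\cap\Gammad}\PP_{\T}-\PP_{\T} \;=\; \underbrace{\bigl(p^{-1}\PP_{\T}\PP_{\Omega_{0}\cap\Gammad}\PP_{\T}-\PP_{\T}\PP_{\Gammad}\PP_{\T}\bigr)}_{=:R_{\text{rand}}} \;-\; \underbrace{\PP_{\T}\PP_{\Gammad^c}\PP_{\T}}_{=:R_{\text{det}}},
\]
where the first term has mean zero by independence of the entries of $\Omega_0$, and the second is purely deterministic. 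It then suffices to bound each of $R_{\text{rand}}$ and $R_{\text{det}}$ by $1/6$ in operator norm.

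For $R_{\text{det}}$, I would expand $\PP_{\T}(M)=UU^{\top}M+MVV^{\top}-UU^{\top}MVV^{\top}$ and bound $\|\PP_{\Gammad^c}\PP_{\T}(M)\|_F$ term by term. For $N=UU^{\top}M$, Cauchy--Schwarz gives $N_{ij}^{2}\le \|U^{\top}\mathbf{e}_i\|^{2}\|M_{:j}\|^{2}\le \tfrac{\mu r}{n}\|M_{:j}\|^{2}$ by incoherence, and then summing over $(i,j)\in\Gammad^c$ and grouping by column (where $\Gammad^c$ has at most $d$ entries) yields $\|\PP_{\Gammad^c}(UU^{\top}M)\|_{F}^{2}\le \tfrac{\mu r d}{n}\|M\|_{F}^{2}$. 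A symmetric argument handles $MVV^{\top}$ (group by row) and $UU^{\top}MVV^{\top}$ (group by column, using $\|MVV^{\top}\|_{F}\le\|M\|_{F}$). The triangle inequality then gives $\|\PP_{\Gammad^c}\PP_{\T}(M)\|_{F}\le 3\sqrt{\mu r d/n}\,\|M\|_{F}$, so $\|R_{\text{det}}\|\le 9\mu r d/n \le 9\rho_d$, which is at most $1/6$ once $\rho_d$ is fixed small enough. This deterministic estimate, leveraging both the row and column sparsity of $\Gammad^c$ simultaneously, is the main new ingredient compared with the standard matrix-completion calculation.

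For $R_{\text{rand}}$, I would rewrite it as a sum of independent, zero-mean, self-adjoint operators on $\R^{n\times n}$,
\[
R_{\text{rand}} \;=\; \sum_{(i,j)\in\Gammad} \bigl(p^{-1}\mathbf{1}[(i,j)\in\Omega_{0}]-1\bigr)\, \PP_{\T}(\mathbf{e}_i\mathbf{e}_j^{\top})\otimes \PP_{\T}(\mathbf{e}_i\mathbf{e}_j^{\top}),
\]
and apply the noncommutative (matrix) Bernstein inequality. The uniform bound $\|\PP_{\T}(\mathbf{e}_i\mathbf{e}_j^{\top})\|_{F}^{2}\le 2\mu r/n$ (a standard consequence of incoherence) shows each summand has operator norm at most $2\mu r/(np)$, and the total variance satisfies $\bigl\|\sum\E X_{ij}^{2}\bigr\|\le \tfrac{2\mu r}{np}\,\|\PP_{\T}\PP_{\Gammad}\PP_{\T}\|\le \tfrac{2\mu r}{np}$ since $\PP_{\T}\PP_{\Gammad}\PP_{\T}\preceq \PP_{\T}$. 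Bernstein then yields $\Pr[\|R_{\text{rand}}\|>1/6]\le 2n^{2}\exp(-c\, np/(\mu r))$, which is $O(n^{-10})$ as soon as $p\ge C\mu r\log n/n$. Adding the two bounds completes the proof.

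The main obstacle, and really the only place where something new happens relative to \cite{CanRec,Gro}, is the deterministic estimate on $\PP_{\T}\PP_{\Gammad^c}\PP_{\T}$: one has to be careful to use both the row sparsity and the column sparsity of $\Gammad^c$ (not just its cardinality, which can be as large as $nd$) in order to get the linear-in-$d$ bound rather than a $d^{2}$ or $nd$ bound. This is precisely what drives the scaling $d\lesssim n/(\mu r)$ appearing in the hypothesis.
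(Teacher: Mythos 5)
Your proposal is correct and takes essentially the same route as the paper: the identical splitting into the zero-mean random term $p^{-1}\PP_{\T}\PP_{\Omega_{0}\cap\Gammad}\PP_{\T}-\PP_{\T}\PP_{\Gammad}\PP_{\T}$ (handled by matrix Bernstein with the bounds $\left\Vert \PP_{\T}(e_{i}e_{j}^{\top})\right\Vert_{F}^{2}\le 2\mu r/n$ on each summand and on the total variance) plus the deterministic term $\PP_{\T}\PP_{\Gammad^{c}}\PP_{\T}$. Your column/row-grouping estimate for the deterministic part is exactly the content of the paper's Lemma \ref{lem:det_Fro_bound} (which parametrizes $Z=UX^{\top}+U^{\perp}YV^{\top}$ rather than expanding $\PP_{\T}(M)$ into three terms, an immaterial difference), yielding the same $\alpha=3\sqrt{\mu r d/n}$-type control that drives the $d\lesssim n/(\mu r)$ hypothesis.
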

Invertibility follows from specializing $\Omega_{0} = \Gammar$. The lemma is stated in terms of a generic entry set $\Omega_{0}$ because it is invoked again elsewhere. Notice that this lemma is a generalization of \cite[Theorem 4.1]{CanRec}, as $\Omega_{0}\cap \Gammad$ involves both random and deterministic components. The proof is new, utilizing the properties of both components, and is given in the appendix.

\subsection{Step 3: Sufficient Conditions for Optimality}
The next step is to use convex analysis to write down the first-order sub-gradient sufficient condition for $(\PP_{\OmegaO}(A^\ast), B^\ast)$ to be the unique solution to \eqref{eq:opt_problem}. This is given in the following lemma. Recall that we have defined $\EE:=\PP_{\OmegaO}\left(\sgn(A^{\ast})\right)$.
\begin{lemma}
\label{lem:OptimalityCondition}
Suppose $\llambda$, $p_{0}$, $\tau$ and $d$ satisfy the condition in Theorem \ref{thm:random}. Then with high probability $(\PP_{\OmegaO}(A^{\ast}),\; B^{\ast})$ is
the unique solution to \eqref{eq:opt_problem} if there is a dual certificate $Q=\llambda \EE + W$ obeying
\begin{eqnarray}
(a) && \left\Vert \PP_{\T}W-(UV^{\top}-\llambda\PP_{\T}\EE)\right\Vert _{F} \le \frac{\llambda}{\sqrt{n}} \nonumber \\
(b) && \PP_{\Gamma^{c}}W = 0. \nonumber\\
(c) && \left\Vert \PP_{\Gamma}W\right\Vert _{\infty} < \frac{\llambda}{2}\label{eq:QRequirement}\\
(d) && \left\Vert \PP_{\T^{\bot}}W\right\Vert < \frac{1}{4} \nonumber\\
(e) && \left\Vert \llambda \PP_{\T^{\bot}}\EE\right\Vert < \frac{1}{4}. \nonumber
\end{eqnarray}
\end{lemma}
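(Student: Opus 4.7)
The plan is a standard subgradient / dual-certificate argument, with two wrinkles relative to the classical robust-PCA setting: $Q$ is only \emph{approximately} equal to $UV^{\top}$ on $\T$ (condition (a)) rather than exactly, and the supports mix deterministic and random components. Let $(\h A,\h B)$ be any feasible pair, set $H_A:=\h A-\PP_{\OmegaO}(A^{\ast})$ and $H_B:=\h B-B^{\ast}$, and let $\Delta$ denote the objective gap between $(\h A,\h B)$ and $(\PP_{\OmegaO}(A^{\ast}),B^{\ast})$. Feasibility yields the key algebraic identity $\PP_{\OmegaO}(H_A+H_B)=0$, which drives the entire argument. Applying the subgradient inequalities for $\|\cdot\|_{1}$ at $\PP_{\OmegaO}(A^{\ast})$ (subgradient $\EE+F_0$ with $F_0$ supported on $(\Omega\cap\OmegaO)^{c}=\OmegaO^{c}\cup\Gamma$ and $\|F_0\|_{\infty}\le 1$) and for $\|\cdot\|_{\ast}$ at $B^{\ast}$ (subgradient $UV^{\top}+W_0$ with $W_0\in\T^{\perp}$, $\|W_0\|\le 1$), with the free parts chosen to extract full $\ell_{1}$ and nuclear norms, yields
\[
\Delta\;\ge\;\llambda\iprod{\EE}{H_A}+\iprod{UV^{\top}}{H_B}+\llambda\|\PP_{\OmegaO^{c}}H_A\|_{1}+\llambda\|\PP_{\Gamma}H_A\|_{1}+\|\PP_{\T^{\perp}}H_B\|_{\ast}.
\]

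The next step is to trade the linear terms against the dual certificate. Since $\EE$ is supported on $\Omega\cap\OmegaO$ and $W$ on $\Gamma$ by (b), the sum $Q=\llambda\EE+W$ is supported on $\OmegaO$; feasibility then gives $\iprod{Q}{H_A}=-\iprod{Q}{H_B}$. Substituting $\llambda\EE=Q-W$ and $UV^{\top}=\PP_{\T}Q+(UV^{\top}-\PP_{\T}Q)$ reduces the linear piece to three controllable cross terms: $\iprod{UV^{\top}-\PP_{\T}Q}{\PP_{\T}H_B}$ (bounded by Cauchy--Schwarz and (a)), $\iprod{\PP_{\T^{\perp}}Q}{\PP_{\T^{\perp}}H_B}$ (by spectral/nuclear duality and (d)+(e)), and $\iprod{W}{\PP_{\Gamma}H_A}$ (by H\"older and (c)). The result is
\[
\Delta\;\ge\;\tfrac{1}{2}\|\PP_{\T^{\perp}}H_B\|_{\ast}+\llambda\|\PP_{\OmegaO^{c}}H_A\|_{1}+\tfrac{\llambda}{2}\|\PP_{\Gamma}H_A\|_{1}-\tfrac{\llambda}{\sqrt{n}}\|\PP_{\T}H_B\|_{F}.
\]

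The main obstacle is the last, negative Frobenius term, and Lemma~\ref{lem:OP} is what absorbs it. Specializing that lemma to $\Omega_{0}=\Gammar$ (so $\Omega_{0}\cap\Gammad=\Gamma$ and $p=p_{0}(1-\tau)$) gives $\PP_{\T}\PP_{\Gamma}\PP_{\T}\succeq\tfrac{2}{3}p_{0}(1-\tau)\,\PP_{\T}$ with high probability. Applied to $\PP_{\T}H_B\in\T$, combined with $\PP_{\Gamma}H_B=-\PP_{\Gamma}H_A$ (from feasibility and $\Gamma\subseteq\OmegaO$) and the elementary $\|M\|_{F}\le\min\{\|M\|_{1},\|M\|_{\ast}\}$, this yields
\[
\|\PP_{\T}H_B\|_{F}\;\le\;\sqrt{\tfrac{3}{2p_{0}(1-\tau)}}\bigl(\|\PP_{\Gamma}H_A\|_{1}+\|\PP_{\T^{\perp}}H_B\|_{\ast}\bigr).
\]
A direct check using $\llambda=1/(32\sqrt{p_{0}(d+1)n})$ and the hypothesis $p_{0}(1-\tau)^{2}\ge\rho_{\textrm{r}}\mu r\log^{6}n/n$ shows the resulting coefficient $(\llambda/\sqrt n)\sqrt{3/(2p_{0}(1-\tau))}$ is much smaller than $\min\{1/4,\llambda/4\}$, so the negative term is absorbed with room to spare and $\Delta\ge\tfrac{1}{4}\|\PP_{\T^{\perp}}H_B\|_{\ast}+\llambda\|\PP_{\OmegaO^{c}}H_A\|_{1}+\tfrac{\llambda}{4}\|\PP_{\Gamma}H_A\|_{1}\ge 0$.

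Strict positivity unless $\PP_{\T^{\perp}}H_B=\PP_{\OmegaO^{c}}H_A=\PP_{\Gamma}H_A=0$ then upgrades to uniqueness by a short deterministic argument: these three conditions put $H_B\in\T$ and $\mathrm{supp}(H_A)\subseteq\OmegaO\cap\Omega$, feasibility forces $\PP_{\Gamma}H_B=-\PP_{\Gamma}H_A=0$, and one final use of Lemma~\ref{lem:OP} on $H_B\in\T$ gives $\tfrac{2}{3}p_{0}(1-\tau)\|H_B\|_{F}^{2}\le\iprod{\PP_{\Gamma}H_B}{H_B}=0$, whence $H_B=0$ and in turn $H_A=0$. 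The only delicate technical point is ensuring both invocations of Lemma~\ref{lem:OP} hold on the same high-probability event, which is automatic since both use the randomness of $\Gammar$ in exactly the form the lemma requires.
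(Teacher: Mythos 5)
Your proposal is correct and follows essentially the same route as the paper's proof: the same subgradient lower bound on the objective gap, the same substitution of $Q=\llambda\EE+W$ to control the linear terms via conditions (a)--(e), the same use of Lemma~\ref{lem:OP} with $\Omega_0=\Gammar$ to absorb the negative $\frac{\llambda}{\sqrt n}\Vert\PP_{\T}H_B\Vert_F$ term through $\Vert\PP_{\Gamma}H_A\Vert_1+\Vert\PP_{\T^{\perp}}H_B\Vert_*$, and the same invertibility argument for uniqueness. The only differences are cosmetic (you carry the extra nonnegative $\Vert\PP_{\OmegaO^c}H_A\Vert_1$ term and use the sharper constant $2/3$ in place of the paper's $1/2$).
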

\begin{proof}
%\textbf{@@ Delete or move to a remark?}
%The high-level idea of the proof is similar to those of Lemma 7.2 and 7.3 in \cite{CanLiMaWri}. %\textbf{@@ The proof is quite standard, but might be a bit hard to follow. Do we need to add more explanation?}\\
Observe that the conditions in the lemma imply $\PP_{\OmegaO^c} (Q) = 0 $, $\left\Vert\PP_{\T}(Q) - UV^\top\right\Vert_F \le \frac{\gamma}{\sqrt{n}}$, $\left\Vert\PP_{\T^\perp}(Q)\right\Vert < \frac{1}{2}$, $\PP_{\Omega}(Q) = \llambda \EE$, and $\left\Vert\PP_{\Gamma}\right\Vert_\infty<\frac{\gamma}{2}$. Consider another feasible solution $(\PP_{\OmegaO}(A^{\ast})+\Delta_{2}, \; B^*+\Delta_{1})$ with $\Delta_{1}\neq0$, $\Delta_{2}\neq0$, and $\PP_{\OmegaO}(\Delta_{1}+\Delta_{2})=0$. Take $G_0\in\T^{\bot}$ and $F_0\in \Gamma$ such that $\Vert G_0\Vert=1$, $\Vert F_0 \Vert_{\infty}=1$, $\left\langle G_0,\; \Delta_1\right\rangle = \left\Vert \PP_{\T^{\bot}}\Delta_1 \right\Vert_*$ and $\left\langle F_0,\; \Delta_2\right\rangle = \left\Vert \PP_{\Gamma} \Delta_2 \right\Vert_1$; such $G_{0}$ and $F_{0}$ exist due to the duality between $\left\Vert\cdot\right\Vert_*$ and $\left\Vert\cdot\right\Vert$, and that between $\left\Vert\cdot\right\Vert_1$ and $\left\Vert\cdot\right\Vert_\infty$. We then have
\begin{eqnarray}
 &  & \left\Vert B^*+\Delta_{1}\right\Vert _{\ast}+\llambda\left\Vert \PP_{\OmegaO}(A^{\ast})+\Delta_{2}\right\Vert _{1}-\left\Vert B^*\right\Vert _{\ast}-\llambda\left\Vert \PP_{\OmegaO}(A^{\ast})\right\Vert _{1} \nonumber \\
 & \ge & \left\langle UV^\top+G_{0},\;\Delta_{1}\right\rangle +\llambda\left\langle \EE+F_{0},\;\Delta_{2}\right\rangle \nonumber \\
 & = & \left\langle UV^\top+G_{0}-Q,\;\Delta_{1}\right\rangle +\left\langle \llambda\EE + \llambda F_{0}-Q,\;\Delta_{2}\right\rangle \nonumber \\
 & = & \left\langle G_{0}-\PP_{\T^\bot}(Q)-\left(\PP_{\T}(Q)-UV^\top\right),\;\Delta_{1}\right\rangle +\left\langle \llambda F_{0}-\PP_{\Gamma}(Q),\;\Delta_{2}\right\rangle \nonumber \\
 & \ge & \left\Vert \PP_{\T^{\bot}}\Delta_{1}\right\Vert_{\ast}\left(1-\left\Vert \PP_{\T^\bot}(Q)\right\Vert \right)-\left\Vert \PP_{\T}(Q)-UV^\top \right\Vert _{F}\left\Vert \PP_{\T}\Delta_{1}\right\Vert _{F}+\left\Vert \PP_{\Gamma}\Delta_{2}\right \Vert _{1}\left(\llambda-\left\Vert \PP_{\Gamma}(Q)\right\Vert _{\infty}\right) \label{eq:opt_cond_eq1} \\
 & \ge & \frac{1}{2}\left\Vert \PP_{\T^{\bot}}\Delta_{1}\right\Vert_{\ast}- \frac{\gamma}{\sqrt{n}}\left\Vert \PP_{\T}\Delta_{1}\right\Vert _{F} + \frac{\gamma}{2}\left\Vert \PP_{\Gamma}\Delta_{2}\right \Vert _{1}; \nonumber
\end{eqnarray}
here we use the sub-gradients of $\Vert \cdot \Vert_*$ and $\Vert \cdot \Vert_1$ in the first inequality and Cauchy-Schwarz inequality in \eqref{eq:opt_cond_eq1}. We need to upper-bound $\left\Vert \PP_{\T}\Delta_{1}\right\Vert_{F} $. Notice that w.h.p.
\begin{eqnarray*}
 & & \left\Vert \PP_{\Gamma}\PP_{\T}\Delta_{1}\right\Vert _{F}^{2} \\
 & = & \left\langle \PP_{\T}\Delta_{1},\; \PP_{\T}\PP_{\Gamma}\PP_{\T}\Delta_{1}\right\rangle \\
 & = & \left\langle \PP_{\T}\Delta_{1},\; \PP_{\T}\PP_{\Gamma}\PP_{\T}\Delta_{1}-p_{0}(1-\tau)\PP_{\T}\Delta_{1} +p_{0}(1-\tau)\PP_{\T}\Delta_{1}\right\rangle \\
 & \ge & p_{0}(1-\tau)\left\Vert \PP_{\T}\Delta_{1}\right\Vert _{F}^{2} - \frac{1}{2} p_{0}(1-\tau)\left\Vert \PP_{\T}\Delta_{1}\right\Vert _{F}^{2}\\
 & = & \frac{1}{2}p_{0}(1-\tau)\left\Vert \PP_{\T}\Delta_{1}\right\Vert _{F}^{2};
\end{eqnarray*}
here in the inequality we use Lemma \ref{lem:OP} with $\Omega_{0} = \Gammar$ and $p=p_0(1-\tau)$. It follows that
\begin{eqnarray*}
 & & \left\Vert \PP_{\Gamma}\Delta_{2}\right\Vert _{1}
 \ge  \left\Vert \PP_{\Gamma}\Delta_{2}\right\Vert _{F}
  =  \left\Vert \PP_{\Gamma}\Delta_{1}\right\Vert _{F}\\
 & = & \left\Vert \PP_{\Gamma}\PP_{\T}\Delta_{1}+\PP_{\Gamma}\PP_{\T^{\bot}}\Delta_{1}\right\Vert _{F}\\
 & \ge & \left\Vert \PP_{\Gamma}\PP_{\T}\Delta_{1}\right\Vert _{F}-\left\Vert \PP_{\Gamma}\PP_{\T^{\bot}}\Delta_{1}\right\Vert _{F}\\
 & \ge & \sqrt{\frac{p_{0}(1-\tau)}{2}}\left\Vert \PP_{\T}\Delta_{1}\right\Vert _{F}-\left\Vert \PP_{\T^{\bot}}\Delta_{1}\right\Vert _{F}\\
 & \ge & \sqrt{\frac{4}{n}}\left\Vert \PP_{\T}\Delta_{1}\right\Vert _{F}-\left\Vert \PP_{\T^{\bot}}\Delta_{1}\right\Vert _{\ast},
\end{eqnarray*}
where the last inequality holds under the assumptions in Theorem \ref{thm:random}. Substituting back to \eqref{eq:opt_cond_eq1}, we obtain
\begin{eqnarray*}
 &  & \left\Vert B^*+\Delta_{1}\right\Vert _{\ast}+\llambda\left\Vert \PP_{\OmegaO}(A^{\ast}) +\Delta_{2}\right\Vert _{1}-\left\Vert B^*\right\Vert _{\ast}-\llambda\left\Vert \PP_{\OmegaO}(A^{\ast})\right\Vert _{1}\\
 & \ge & \left\Vert \PP_{\T^{\bot}}\Delta_{1}\right\Vert _{\ast}\left(\frac{1}{2} -\frac{\gamma}{2}\right) + \left\Vert \PP_{\Gamma}\Delta_{2}\right\Vert _{1}\left(\frac{\gamma}{2}-\frac{\gamma}{2}\right)\\
 & \ge & 0,
\end{eqnarray*}
where we use $\llambda<1$. We claim that the above inequality is strict. Suppose it is not, then we must have $\PP_{\T^{\bot}}\Delta_{1}=\PP_{\Gamma}\Delta_{2}=0$. But under the assumptions in Theorem \ref{thm:random}, $\PP_{\T}\PP_{\Gamma}\PP_{\T}$ is invertible by Lemma \ref{lem:OP} and thus $\Gamma^{\bot}\cap \T=\{0\}$, which contradicts $\Delta_{1}\neq0$ and $\Delta_{2}\neq0$.
\end{proof}

\subsection{Step 4: Construction of the Dual Certificate}
\label{sub:dualcert}
We need to show the existence a matrix $W$ obeying the conditions in \eqref{eq:QRequirement} in Lemma \ref{lem:OptimalityCondition}. We will construct $W$ using a variation of the so-called Golfing Scheme \cite{CanLiMaWri,Gro}. Here we briefly explain the idea. Consider the left hand side of condition (a) in \eqref{eq:QRequirement} as the ``error'' of approximating $UV^{\top}-\llambda\PP_{\T}\EE$ by $\PP_{\T}W$; we want the error to be small. First observe that the choice of $W = UV^\top-\llambda\PP_{\T}\EE$ satisfies (a) strictly but violates (b). To enforce (b), one might consider \emph{sampling} according to $\Gamma$, the set of observed clean entries, and define
\[
W_1 = (p_{0}(1-\tau))^{-1}\PP_{\Gamma} \left( UV^\top-\llambda\PP_{\T}\EE \right).
\]
With the choice of $W=W_1$, (b) is satisfied, and one expects the error in (a) is also small because its \emph{expectation} equals $ -\PP_{\T}\PP_{\Gammad^{c}} \left( UV^\top-\llambda\PP_{\T}\EE \right)$, which is small as long as $\PP_{\Gammad^c}$ is a contraction. This intuition is largely true except that the error is still not small enough. To correct this bias, it is natural to compensate by subtracting the remaining error from $W_1$, and then sample again. Indeed, if one sets $W_2=W_1 - (p_{0}(1-\tau))^{-1}\PP_{\Gamma} \left( \PP_{\T} W_1 - ( UV^\top-\llambda\PP_{\T}\EE) \right)$, then $W=W_2$ still satisfies (b), and the error in (a) becomes smaller. By repeating this ``correct and sample'' procedure, the error actually decreases geometrically fast.

This is almost exactly how we are going to construct $Q$; the only modification is that for technical reasons we need to decompose the observed clean entry set $\Gamma$ into independent batches and sample according to a different batch at each step. To this end, we think of $\Omegar^c\sim\textrm{Ber}\left(1-\tau\right)$ as $\cup_{1\le k\le k_{0}}\Omegak$ and $\OmegaOr\sim\textrm{Ber}\left(p_0\right)$ as $\cup_{1\le k\le k_{0}}\OmegaOk$, where the sets $\Omegak\sim\textrm{Ber}(q_1)$ and $\OmegaOk\sim\textrm{Ber}(q_2)$ are independent; here $k_0$ is taken to be $\left\lceil 4\log n\right\rceil $, and $q_1,q_2$ obeys $1-\tau=1-(1-q_1)^{k_{0}}$ and $p_0=1-(1-q_2)^{k_{0}}$.  Observe that $q_1\ge (1-\tau)/k_{0}$ and $q_2\ge p_0/k_0$.  One can verify that $\Omegar$ and $\OmegaOr$ have the same distribution as before. Define $\Gammak=\Omegak\cap\OmegaOk$, which can be considered as the $k$-th batch of (random) observed clean entries; we then have $\Gammak\sim\textrm{Ber}(q)$ with $q:=q_1q_2\ge \frac{p_0(1-\tau)}{k_0^2}\ge C \frac{\mu r \log n}{n}$, where $C$ may become arbitrarily large by selecting $\rho_r$ sufficiently large. Define the operator $\RR_{\Gammak}:\mathbb{R}^{n\times n}\mapsto\mathbb{R}^{n\times n}$
as
\begin{equation*}
\RR_{\Gammak}(M) \triangleq q^{-1}\PP_{\Gammak \cap \Gammad}(M) = \sum_{i,j\in\Gammak\cap \Gammad}q^{-1}M_{i,j}(e_{i}e_{j}^{\top}),
\end{equation*}
which is simply the (properly scaled) projection onto the $k$-th batch of observed clean entries. The matrix $W$ is then constructed as $W = W_{k_{0}}$, where $W_{k_{0}}$ is defined recursively by $W_{0} := 0$ and
\begin{equation*}
W_{k} := W_{k-1}+\RR_{\Gammak}\left(UV^{\top}-\llambda \PP_{\T}\EE-\PP_{\T}W_{k-1}\right),\qquad \textrm{for } k=1,2,\ldots,k_{0}.
\end{equation*}

The previous work \cite{CanLiMaWri} also applies Golfing Scheme, but only to the part of the dual certificate that involves $UV^\top$; for the part that involves $\EE$, they use the method of least squares. We utilize Golfing Scheme for both parts of the certificate. Difficulties arise due to the dependence between $\EE$ and $\Gammak$'s, and a new analysis is needed for the validation of the certificate. This crucial difference allows us to go beyond \cite{CanLiMaWri} and handle a vanishing fraction of observations and/or clean entries.

\subsection{Step 5: Validity of the Dual Certificate}

It remains to show that $Q$ satisfies all the constraints
in the optimality condition \eqref{eq:QRequirement} simultaneously. The equality (b) is immediate by the construction of $Q$ and $W$.  To prove the inequalities, one observes that if we denote the $k$-th step error as $\DD_{k} :=  UV^{\top}-\llambda \PP_{\T}\EE-\PP_{\T}W_{k}$, then $D_k$ satisfies the following recursion
\begin{eqnarray}
\DD_{k} & = & UV^{\top}-\llambda \PP_{\T}\EE-\PP_{\T}W_{k}\nonumber \\
 & = & (\PP_{\T}-\PP_{\T}\RR_{\Gammak}\PP_{\T})(UV^{\top}-\llambda \PP_{\T}\EE-\PP_{\T}W_{k-1})\nonumber \\
 & = & (\PP_{\T}-\PP_{\T}\RR_{\Gammak}\PP_{\T})\DD_{k-1},
 \label{eq:stepError}
 \end{eqnarray}
and $W_{k_{0}}$ can be expressed as
\begin{equation}
W_{k_{0}}=\sum_{k=1}^{k_{0}}\RR_{\Gammak}\DD_{k-1}.
\label{eq:SumStepError}
\end{equation}
We are now ready to prove that $W=W_{k_0}$ satisfies the four inequalities in \eqref{eq:QRequirement} under our assumptions. The proof uses Lemmas \ref{lem:OP1}-\ref{lem:PTE_infr} in the Appendix.

\textbf{Inequality $(a)$: Bounding $\left\Vert \PP_{\T}W-(UV^{\top}-\llambda\PP_{\T}\EE)\right\Vert _{F}$}

Thanks to \eqref{eq:stepError}, we have the following geometric
convergence
\begin{eqnarray*}
 & & \left\Vert \PP_{\T}W-(UV^{\top}-\llambda\PP_{\T}\EE)\right\Vert _{F}
      = \left\Vert \DD_{k_{0}}\right\Vert _{F}\\
 & = & \left\Vert (\PP_{\T}-\PP_{\T}\RR_{\Gamma^{(k_0)}}\PP_{\T}) \cdots (\PP_{\T}-\PP_{\T}\RR_{\Gamma^{(1)}}\PP_{\T}) \DD_{0} \right\Vert _{F}\\
 & \le & \left(\prod_{k=1}^{k_0}\left\Vert \PP_{\T}-\PP_{\T}\RR_{\Gammak}\PP_{\T}\right\Vert \right)
 \left\Vert UV^{\top}-\llambda \PP_{\T}\EE\right\Vert _{F}\\
 & \overset{(i)}{\le} & e^{-k_{0}}\left(\left\Vert UV^{\top}\right\Vert _{F}+\llambda\left\Vert \PP_{\T}\EE\right\Vert _{F}\right)\\
 & \overset{(ii)}{\le} & n^{-4}\left(n+\llambda n\right)
  \overset{(iii)}{\le}  \frac{\llambda}{\sqrt{n}};
\end{eqnarray*}
here (i) uses Lemma \ref{lem:OP}, (ii)
uses $\left\Vert \PP_{\T}E\right\Vert _{F}\le\left\Vert E\right\Vert _{F}\le n$,
and (iii) is due to our choice of $\llambda$. This proves inequality
(a) in \eqref{eq:QRequirement}.

\textbf{Inequality $(c)$: Bounding $\left\Vert \PP_{\Gamma}W\right\Vert _{\infty}$}

We write
\[
\prod_{i=1}^{k} (\PP_{\T}-\PP_{\T}\RR_{\Gammai}\PP_{\T})
  = (\PP_{\T}-\PP_{\T}\RR_{\Gammak}\PP_{\T}) \cdots (\PP_{\T}-\PP_{\T}\RR_{\Gamma^{(1)}}\PP_{\T})
\]
where the order of multiplication is important. Then we have
\begin{eqnarray}
& & \left\Vert \PP_{\Gamma}W\right\Vert _{\infty} = \left\Vert W_{k_{0}}\right\Vert _{\infty}\nonumber \\
 & \overset{(i)}{\le} & \sum_{k=1}^{k_{0}}\left\Vert \RR_{\Gammak}\DD_{k-1}\right\Vert _{\infty}\nonumber
    \le q^{-1}\sum_{k=1}^{k_{0}}\left\Vert \DD_{k-1}\right\Vert _{\infty}\nonumber \\
 & \overset{(ii)}{=} & q^{-1}\sum_{k=1}^{k_{0}}\left\Vert \prod_{i=1}^{k-1}\left(\PP_{\T}-\PP_{\T}\RR_{\Gammai}\PP_{\T}\right)\DD_{0}\right\Vert _{\infty}\nonumber \\
 & \le &  q^{-1}\sum_{k=1}^{k_{0}}\left\Vert \prod_{i=1}^{k-1}\left(\PP_{\T}-\PP_{\T}\RR_{\Gammai}\PP_{\T}\right)(UV^{\top} -\llambda \PP_\T\PP_{\Omegad}\EE)\right\Vert _{\infty} \nonumber\\
 & & \qquad + q^{-1}\sum_{k=1}^{k_{0}}\left\Vert \prod_{i=1}^{k-1}\left(\PP_{\T}-\PP_{\T}\RR_{\Gammai}\PP_{\T}\right)(-\llambda \PP_{\T}\PP_{\Omegar\backslash\Omegad}\EE)\right\Vert _{\infty};  \nonumber
\end{eqnarray}
here (i) uses \eqref{eq:SumStepError} and (ii) uses \eqref{eq:stepError}. We bound the above two terms separately.

The first term is bounded as
\begin{eqnarray}
 & & q^{-1}\sum_{k=1}^{k_{0}}\left\Vert \prod_{i=1}^{k-1}\left(\PP_{\T}-\PP_{\T}\RR_{\Gammai}\PP_{\T}\right)(UV^{\top}-\llambda \PP_\T\PP_{\Omegad}\EE)\right\Vert _{\infty} \nonumber \\
 & \overset{(i)}{\le} & q^{-1}\sum_{k=1}^{k_{0}}\left(\frac{1}{2}\right)^{k-1}\left\Vert UV^{\top}-\llambda \PP_\T\PP_{\Omegad}\EE\right\Vert _{\infty}\label{eq:sum_inf_norm}\\
 & \le & C\frac{k_{0}^2}{p_{0}(1-\tau)} \left\Vert UV^{\top}-\llambda \PP_\T\PP_{\Omegad}\EE\right\Vert _{\infty}\nonumber \\
 & \overset{(ii)}{\le} & C\frac{k_{0}^2}{p_{0}(1-\tau)} \left(\sqrt{\frac{\mu r}{n^{2}}}+\llambda\alpha\right)\nonumber \\
 & \overset{(iii)}{\le} & \frac{1}{4}\llambda;\label{eq:UVinf}
\end{eqnarray}
%\textbf{@@: aligning the above equations?}
Here (i) uses the second part of Lemma \ref{lem:INF} with $\Omega_0=\Gammak$ and $\epsilon_{3}=\frac{1}{4}$, as well as the fact that $\alpha\le\frac{1}{4}$ under the assumptions of Theorem \ref{thm:random},
(ii) uses the incoherence assumptions and Lemma \ref{lem:PTE_infd}, and (iii) holds under the assumptions of Theorem \ref{thm:random}.

For the second term, we can not use the above argument, because $\EE=P_{\Phi}(sgn(S_{0}))$ is not independent of $\Gammai$'s and thus Lemma \ref{lem:INF} does not apply. Instead, we need to utilize the random signs of $\EE := \PP_{\OmegaO}\left(\sgn(A^{\ast})\right)$ (a similar argument appeared in \cite{CanLiMaWri}). Consider the $k$-th term in the sum. We have
\begin{eqnarray}
 &  & q^{-1}\left\Vert \prod_{i=1}^{k-1}(\PP_{\T}-\PP_{\T}\RR_{\Gammai}\PP_{\T})(\llambda \PP_{\T}\PP_{\Omegar\backslash\Omegad}\EE)\right\Vert _{\infty} \nonumber \\
 & = & \llambda q^{-1}\max_{a,b}\left\vert\left\langle e_{a}e_{b}^{\top},\;\prod_{i=1}^{k-1}(\PP_{\T}-\PP_{\T}\RR_{\Gammai}\PP_{\T})(\PP_{\T}\PP_{\Omegar\backslash\Omegad}\EE)\right\rangle\right\vert \nonumber \\
 & = & \llambda q^{-1}\max_{a,b}\left\vert\left\langle \prod_{i=1}^{k-1}(\PP_{\T}-\PP_{\T}\RR_{\Gammaki}\PP_{\T})\left(e_{a}e_{b}^{\top}\right),\; \PP_{\OmegaO\cap(\Omegar\backslash\Omegad)}\left(\sgn(A^*)\right)\right\rangle\right\vert; \nonumber
\end{eqnarray}
here in the last equality we use the self-adjointness of the operators.
Conditioned on $\OmegaO$, $\Omega$, and $\Gammai$'s, $\PP_{\OmegaO\cap(\Omegar\backslash\Omegad)}\left(\sgn(A^*)\right)$ has i.i.d. symmetric
$\pm1$ entries, so Hoeffding's inequality gives,
\begin{eqnarray}
 &  & \mathbb{P}\left(\llambda q^{-1}\left\vert\left\langle \PP_{\OmegaO}\prod_{i=1}^{k-1}(\PP_{\T}-\PP_{\T}\RR_{\Gammaki}\PP_{\T})\left(e_{a}e_{b}^{\top}\right),\; \PP_{\OmegaO\cap(\Omegar\backslash\Omegad)}\left(\sgn(A^*)\right)\right\rangle\right\vert >t \vert \OmegaO, \Omega,\Gammai\textrm{'s}\right) \nonumber\\
 & \le & 2\exp\left(-\frac{2t^{2}}{\left\Vert \llambda q^{-1}\prod_{i=1}^{k-1}(\PP_{\T}-\PP_{\T}\RR_{\Gammaki}\PP_{\T})\left(e_{a}e_{b}^{\top}\right)\right\Vert _{F}^{2}}\right) \nonumber \\
% & \le & 2\exp\left(-\frac{2t^{2}}{\llambda^{2}q^{-2}\left\Vert \prod_{i=1}^{k-1}(\PP_{\T}-\PP_{\T}\RR_{\Gammaki}\PP_{\T})\left(e_{a}e_{b}^{\top}\right)\right\Vert _{F}^{2}}\right) \nonumber \\
 & \le & 2\exp\left(-\frac{2t^{2}}{\llambda^{2}q^{-2}\left\Vert \prod_{i=1}^{k-1}(\PP_{\T}-\PP_{\T}\RR_{\Gammaki}\PP_{\T})\right\Vert ^{2}\left\Vert \PP_{\T}(e_{a}e_{b})\right\Vert _{F}^{2}}\right) \nonumber\\
 & \le & 2\exp\left(-\frac{t^{2}}{\llambda^{2}q^{-2}\prod_{i=1}^{k-1}\left\Vert \PP_{\T}-\PP_{\T}\RR_{\Gammaki}\PP_{\T}\right\Vert ^{2}\frac{2\mu r}{n}}\right);\label{eq:1}
\end{eqnarray}
here the last inequality uses $\left\Vert \PP_{\T}(e_{a}e_{b}^{\top})\right\Vert _{F}^{2}\le\frac{2\mu r}{n}$, which follows from the incoherence assumptions.
Conditioned on the event $G_k:=\left\{ \left\Vert \PP_{\T}-\PP_{\T}\RR_{\Gammaki}\PP_{\T}\right\Vert \le\frac{1}{2},\; i=1,\ldots k-1\right\} $,
we can integrate out the conditions in (\ref{eq:1}) and obtain
\begin{eqnarray}
 &  & \mathbb{P}\left(\llambda q^{-1}\left\vert\left\langle \prod_{i=1}^{k-1}(\PP_{\T}-\PP_{\T}\RR_{\Gammaki}\PP_{\T})\left(e_{a}e_{b}^{\top}\right),\; \PP_{\OmegaO\cap(\Omegar\backslash\Omegad)}\left(\sgn(A^{\ast})\right)\right\rangle\right\vert >t\vert G_k\right) \nonumber \\
 & \le & 2\exp\left(-\frac{t^{2}}{\llambda^{2}q^{-2}\left(\frac{1}{2}\right)^{k-1}\frac{2\mu r}{n}}\right) \nonumber
\end{eqnarray}
By Lemma \ref{lem:OP}, we know that the event $G_k$ holds with high
probability. Choosing $t=C_{}\left(\frac{1}{2}\right)^{k-1}\frac{\llambda\mu r\log n}{qn}$
with $C_{}$ sufficiently large and using union bound (there is only
polynomially many different $(a,b)$), we conclude that
\begin{equation}
q^{-1} \left\Vert \prod_{i=1}^{k-1}(\PP_{\T}-\PP_{\T}\RR_{\Gammai}\PP_{\T})(\llambda \PP_{\T}\PP_{\Omegar\backslash\Omegad}\EE)\right\Vert _{\infty}\le C_{}\left(\frac{1}{2}\right)^{k-1}\frac{\llambda\mu r\log n}{qn}\le\left(\frac{1}{2}\right)^{k}\cdot\frac{1}{4}\llambda \nonumber
\end{equation}
with high probability; here the second inequality holds because $q\ge C'\frac{\mu r\log n}{n}$ by our choice. Summing over $k$
It follows that
\begin{equation}
\label{eq:Einf}
\sum_{k=1}^{k_{0}}q^{-1}\left\Vert \prod_{i=1}^{k-1}(\PP_{\T}-\PP_{\T}\RR_{\Gammai}\PP_{\T})(\llambda \PP_{\T}\PP_{\Omegar\backslash\Omegad}\EE)\right\Vert _{\infty}\le\frac{1}{4}\llambda.
\end{equation}
Combing \eqref{eq:UVinf} and \eqref{eq:Einf} proves inequality (c) in \eqref{eq:QRequirement}.

\textbf{Inequality $(d)$: Bounding $\left\Vert \PP_{\T^{\bot}} W \right\Vert $}

We have
\begin{eqnarray}
 & & \left\Vert \PP_{\T^{\bot}}W_{k_{0}}\right\Vert
 \overset{(i)}{\le}  \sum_{k=1}^{k_{0}}\left\Vert \PP_{\T^{\bot}}\RR_{\Gammak}\DD_{k-1}\right\Vert \nonumber\\
 & \overset{(ii)}{=} & \sum_{k=1}^{k_{0}}\left\Vert \PP_{\T^{\bot}}\left(\RR_{\Gammak}\DD_{k-1}-\DD_{k-1}\right)\right\Vert \nonumber\\
 & \overset{(iii)}{\le} & \sum_{k=1}^{k_{0}}\left\Vert \left( \RR_{\Gammak} - \mathcal{I}\right) \prod_{i=1}^{k-1}\left(\PP_{\T}-\PP_{\T}\RR_{\Gammai}\PP_{\T}\right)\DD_{0}\right\Vert _{}\nonumber \nonumber\\
 & \le &  \sum_{k=1}^{k_{0}}\left\Vert \left( \RR_{\Gammak} - \mathcal{I}\right) \prod_{i=1}^{k-1}\left(\PP_{\T}-\PP_{\T}\RR_{\Gammai}\PP_{\T}\right)(UV^{\top}-\llambda\PP_\T\PP_{\Omegad}\EE)\right\Vert_{}\nonumber\\
 &&         \qquad + \sum_{k=1}^{k_{0}}\left\Vert \left( \RR_{\Gammak} - \mathcal{I}\right) \prod_{i=1}^{k-1}\left(\PP_{\T}-\PP_{\T}\RR_{\Gammai}\PP_{\T}\right)(-\llambda \PP_{\T}\PP_{\Omegar\backslash\Omegad}\EE)\right\Vert_{};  \label{eq:twoterm2}
\end{eqnarray}
here (i) uses \eqref{eq:SumStepError}, (ii) uses $D_{k}\in \T$, and (iii) uses \eqref{eq:stepError}. We bound the above two terms separately.

The first term is bounded as
\begin{eqnarray}
 &                    & \sum_{k=1}^{k_{0}}\left\Vert \left( \RR_{\Gammak} - \mathcal{I}\right) \prod_{i=1}^{k-1}\left(\PP_{\T}-\PP_{\T}\RR_{\Gammai}\PP_{\T}\right)(UV^{\top}-\llambda\PP_\T\PP_{\Omegad}\EE)\right\Vert_{} \nonumber\\
 & \overset{(i)}{\le} & C\left(\sqrt{\frac{n\log n}{q}}+ d \right)\sum_{k=1}^{k_{0}}\left\Vert \prod_{i=1}^{k-1}\left(\PP_{\T}-\PP_{\T}\RR_{\Gammai}\PP_{\T}\right)(UV^{\top}-\llambda\PP_\T\PP_{\Omegad}\EE) \right\Vert _{\infty}\nonumber\\
 & \overset{(ii)}{\le} & 2C\left(\sqrt{\frac{n\log n}{q}}+ d \right)\left\Vert UV^{\top}-\llambda\PP_\T\PP_{\Omegad}\EE \right\Vert _{\infty} \nonumber\\
 & \overset{(iii)}{\le} & 2C \left(\sqrt{\frac{n\log n}{q}}+ d \right) \left(\sqrt{\frac{\mu r}{n^{2}}} + \llambda\alpha\right)\nonumber\\
 & \overset{(iv)}{\le} & \frac{1}{8}; \label{eq:1sttermbound}
\end{eqnarray}
here (i) uses the second part of Lemma \ref{lem:OP_INF} with $\Omega_0=\Gammak$, (ii) uses \eqref{eq:sum_inf_norm},
(ii) uses the incoherence assumptions and Lemma \ref{lem:PTE_infd}, and (iv) holds under the assumption of Theorem \ref{thm:random}.

For the second term in \eqref{eq:twoterm2}, the above argument fails due to the dependence between $\PP_{\Omegar\backslash\Omegad}\EE$ and $\Gammai$'s. Again we rely on the random signs of $\PP_{\Omegar\backslash\Omegad}\EE = \PP_{\OmegaO\cap(\Omegar\backslash\Omegad)}\sgn(A^*)$, but the situation is more complicated here as we need to use an $\epsilon-$net argument to bound the operator norms.

The key idea is to observe that, though independence does not hold, conditional independence does -- $\Gammai$'s
and $\EE$ are independent conditioned on $\Omega$. This is because $\textrm{supp}(\EE)\subseteq\Omega$ is a random subset of the corrupted entries while $\Gammai\subseteq\Omega^c$ are random subsets of the un-corrupted entries. To isolate this independence, we telescope the operators in the second term in \eqref{eq:twoterm2}. For $k=1,\ldots,k_0$, define the operators
\begin{eqnarray*}
\Ak & = & \PP_{\T}-\PP_{\T}\RR_{\Omegak}\PP_{\T}\\
\Sk & = & \PP_{\T}\RR_{\Omegak}\PP_{\T}-\PP_{\T}\RR_{\Gammak}\PP_{\T}\\
\Bk & = & \RR_{\Omegak}-\mathcal{I}\\
\Tk & = & \RR_{\Gamma_{k}}-\RR_{\Omegak}
\end{eqnarray*}
Observe that $\PP_{\T}-\PP_{\T}\RR_{\Gammak}\PP_{\T}=\Ak+\Sk$, and $\RR_{\Gammak}-\mathcal{I}=\Bk+\Tk$.
The reason for doing so is that, conditioned on $\Omega$, $\Tk$'s
and $\Si$'s are independent of $\EE$. Thus if
a term only involves $\Tk$ and $\Sk$'s (we call it a Type-1 term), it can be bounded in a similar way as the first term in \eqref{eq:twoterm2} using Lemma \ref{lem:OP_INF} and \ref{lem:INF}. For the other terms that involve not only $\Tk$ and $\Sk$'s
but also $\Ai$'s and/or $\Bk$'s (dubbed Type-2 terms),
we bound them using the random signs of $\EE$. (It turns out if one bounds the Type-1 term using the random signs, the resulting
bound is not strong enough, so we need to distinguish these two cases).

Now for the details. Consider the $k$-th term in summands of the second term in \eqref{eq:twoterm2}. Using the above definitions, we have
\begin{eqnarray}
 &  & \left\Vert \left( \RR_{\Gammak} - \mathcal{I}\right) \prod_{i=1}^{k-1}\left(\PP_{\T}-\PP_{\T}\RR_{\Gammai}\PP_{\T}\right)(-\llambda \PP_{\T}\PP_{\Omegar\backslash\Omegad}\EE) \right\Vert \nonumber \\
 & = & \left\Vert \left(\Bk+\Tk\right)\prod_{i=1}^{k-1}(\Ai+\Si)(\llambda \PP_{\T}\PP_{\Omegar\backslash\Omegad}\EE)\right\Vert \label{2ndtermintype}
\end{eqnarray}
We expand the product and sums in the above equation, which results
in a sum of $2^{k}$=poly($n)$ terms since $k\le k_{0}=O(\log n)$.
Among them there is one Type-1 term
\begin{equation}
\Tk\mathcal{S}_{1}\mathcal{S}_{2}\cdots \mathcal{S}_{k-1}(\llambda \PP_{\T}\PP_{\Omegar\backslash\Omegad}\EE),
\end{equation}
and $2^k-1$ Type-2 terms, such as
\begin{eqnarray*}
& \Tk\mathcal{A}_{1}\mathcal{S}_{2} \mathcal{S}_{3}\cdots \mathcal{A}_{k-2} \mathcal{S}_{k-1}(\llambda \PP_{\T}\PP_{\Omegar\backslash\Omegad}\EE),& \\
& \Bk\mathcal{S}_{1}\mathcal{A}_2 \mathcal{S}_{3}\cdots \mathcal{S}_{k-2} \mathcal{A}_{k-1}(\llambda \PP_{\T}\PP_{\Omegar\backslash\Omegad}\EE).&
\end{eqnarray*}

We first bound the Type-1 term. Conditioned on $\Omega$, we have
\begin{eqnarray*}
 &                    & \left\Vert \Tk\mathcal{S}_{1}\mathcal{S}_{2}\cdots \mathcal{S}_{k-1}(\llambda \PP_{\T} \PP_{\Omegar\backslash\Omegad}\EE) \right\Vert\\
 & =                  & \left\Vert \left( \frac{1}{q_1q_2}\PP_{\OmegaOk\cap(\Omegak\cap\Gammad)} - \frac{1}{q_1}\PP_{\Omegak\cap\Gammad}\right) \prod_{i=1}^{k-1} \left(\frac{1}{q_1q_2}\PP_{\T}\PP_{\OmegaOi\cap(\Omegai\cap\Gammad)}\PP_{\T}-\frac{1}{q_1}\PP_{\T}\PP_{\Omegai\cap\Gammad}\PP_{\T}\right)(\llambda \PP_{\T}\PP_{\Omegar\backslash\Omegad} \EE)\right\Vert_{} \\
 & \overset{(i)}{\le} & C \left(\frac{1}{q_1}\sqrt{\frac{n\log n}{q_2}} \right) \left(\frac{1}{2}\right)^{k-1} \left\Vert\llambda \PP_{\T}\PP_{\Omegar\backslash\Omegad}\EE \right\Vert _{\infty}\\
 & \overset{(ii)}{\le} & C' \sqrt{\frac{n\log n}{q_2q_1^2}} \left(\frac{1}{2}\right)^{k} \llambda \sqrt{\frac{\mu r}{n}p_{0} \log n}\\
 & \overset{(iii)}{\le} & \frac{1}{16}\left(\frac{1}{2}\right)^{k}; \label{eq:type1bound}
\end{eqnarray*}
here in (i) we apply the first part of Lemma \ref{lem:OP_INF} with $\Omega_0=\OmegaOk$ and $\Gammao=\Omegak\cap\Gammad$, as well as the first part of Lemma \ref{lem:INF} with $\Omega_0=\OmegaOi$, $\Gammao=\Omegai\cap\Gammad$ and $\epsilon_{3} = \frac{1}{2}q_1$, (ii) uses Lemma \ref{lem:PTE_infr}, and (iii) holds  under the assumption of Theorem \ref{thm:random}.

We next bound the remaining $2^k-1$ Type-2 terms. To this end, we first collect five useful inequalities. Because $\Omegai\sim Ber(q_1)$, the second part of Lemma \ref{lem:OP1} with $\Omega_0 = \Omegai$ and $\epsilon_1=C\frac{\mu r \log n}{nq_1}$ gives that w.h.p.
\begin{eqnarray}
 &     & \left\Vert \Ai \right\Vert = \left\Vert\PP_{\T}-\PP_{\T}\RR_{\Omegai}\PP_{\T} \right\Vert \nonumber\\
 % & \le & C\max\left\{ \sqrt{\frac{\mu r\log n}{n(1-\tau)}},\frac{\mu r\log n}{n}\right\} +C\sqrt{\frac{\mu rd}{n}}. \nonumber\\
 & \le & C\sqrt{\frac{\mu r\log n}{n q_1}}+C\sqrt{\frac{\mu rd}{n}}
   \le   C'\sqrt{\frac{p_{0}(1-\tau)}{\log^3 n}} \label{eq:a}
\end{eqnarray}
The first part of Lemma \ref{lem:OP1} with $\Omega_0=\Omegak$ and $\Gammao=\Gammad$
shows that w.h.p.
\begin{eqnarray}
 &     & \left\Vert \PP_{\T}\Bk\right\Vert =\left\Vert \frac{1}{q_1}\PP_{\T}\PP_{\Omegak\cap\Gammad}-\PP_{\T}\right\Vert \nonumber\\
 & \le & \frac{1}{q_1}\left\Vert \PP_\T\PP_{\Omegak\cap\Gammad} \right\Vert + \left\Vert\PP_\T\right\Vert
    = \frac{1}{q_1}\sqrt{\left\Vert \PP_\T\PP_{\Omegak\cap\Gammad}\PP_\T \right\Vert} + 1 \nonumber\\
 & \le & \frac{1}{q_1}\sqrt{q_1\left\Vert \frac{1}{q_1}\PP_\T\PP_{\Omegak\cap\Gammad}\PP_\T - \PP_\T\PP_{\Gammad}\PP_\T\right\Vert + q_1 \left\Vert\PP_\T\PP_{\Gammad}\PP_\T\right\Vert} + 1 \le  C\sqrt{\frac{1}{q_1}} \le C'\sqrt{\frac{\log n}{1-\tau}}\label{eq:b}
\end{eqnarray}
Similarly, we have w.h.p.
\begin{eqnarray}
 &     & \left\Vert \PP_{\T}\Tk \right\Vert = \left\Vert \frac{1}{q_1q_2}\PP_{\T}\PP_{\OmegaOk\cap\Omegak}-\frac{1}{q_1}\PP_{\T}\PP_{\Omegak}\right\Vert \nonumber\\
 & \le &  \left\Vert \frac{1}{q_1q_2}\PP_{\T}\PP_{\OmegaOk\cap\Omegak} -\PP_\T \right\Vert + \left\Vert \PP_\T-\frac{1}{q_1}\PP_{\T}\PP_{\Omegak}\right\Vert \nonumber\\
 & \le & C\sqrt{\frac{1}{q_1q_2}} + C\sqrt{\frac{1}{q_1}} \le C'\sqrt{\frac{\log^2 n}{p_0(1-\tau)}}. \label{eq:c}
\end{eqnarray}
Applying the first part of Lemma \ref{lem:OP1} twice with (1) $\Omega_0 = \Omegak$, $\Gammao=\Gammad$, $\epsilon_1 = C\sqrt{\frac{\mu \log n }{nq_1}}$ and (2) $\Omega_0 = \OmegaOk\cap\Omegak$, $\Gammao=\Gammad$, $\epsilon_1 = C\sqrt{\frac{\mu \log n }{nq_1q_2}}$ gives w.h.p.
\begin{eqnarray}
 &     & \left\Vert \Sk \right\Vert = \left\Vert \frac{1}{q_1}\PP_\T\PP_{\Omegak\cap\Gammad}\PP_\T - \frac{1}{q_1q_2}\PP_\T\PP_{\OmegaOk\cap(\Omegak\cap\Gammad)}\PP_\T  \right\Vert \nonumber\\
 & \le & \left\Vert \frac{1}{q_1}\PP_\T\PP_{\Omegak\cap\Gammad}\PP_\T - \PP_\T\PP_{\Gammad}\PP_\T\right\Vert + \left\Vert \PP_\T\PP_{\Gammad}\PP_\T- \frac{1}{q_1q_2}\PP_\T\PP_{(\OmegaOk\cap\Omegak)\cap\Gammad}\PP_\T  \right\Vert \nonumber\\
 & \le &  C\sqrt{\frac{\mu r \log n}{nq_1}} + C\sqrt{\frac{\mu r \log n}{nq_1q_2}} \le C'\sqrt{\frac{\mu r \log^3 n}{np_0(1-\tau)}}  \le \frac{1}{4} \label{eq:d}
\end{eqnarray}
Finally, since $\OmegaO\cap(\Omegar\backslash\Omegad)\subseteq\OmegaO\subseteq\OmegaOr$, we apply the first part of Lemma \ref{lem:OP1} with $\Omega_0=\OmegaOr$, $\Gammao=[n]\times[n]$ and $\epsilon_1=\frac{1}{2}$ to obtain w.h.p.
\begin{eqnarray}
\left\Vert \PP_{\OmegaO\cap(\Omegar\backslash\Omegad)}\PP_{\T}\right\Vert  & \le & \left\Vert \PP_{\OmegaOr}\PP_{\T}\right\Vert =\sqrt{\left\Vert \PP_{\T}\PP_{\OmegaOr}\PP_{\T}\right\Vert }.\nonumber\\
 & = & \sqrt{p_{0}\left\Vert \frac{1}{p_{0}}\PP_{\T}\PP_{\OmegaOr}\PP_{\T}-\PP_{\T}+\PP_{\T}\right\Vert }\nonumber\\
 & \le & \sqrt{p_{0}\left\Vert \frac{1}{p_{0}}\PP_{\T}\PP_{\OmegaOr}\PP_{\T}-\PP_{\T}\right\Vert +p_{0}}\nonumber\\
 & \le & \sqrt{2p_{0}} \label{eq:e}
\end{eqnarray}
Now consider one of the Type-2 terms
\begin{equation*}
\XX (\llambda \PP_{\T} \PP_{\Omegar\backslash\Omegad}\EE)\triangleq \Tk \mathcal{S}_1 \mathcal{S}_{2}\cdots \mathcal{S}_{k-2} \mathcal{A}_{k-1}(\llambda \PP_{\T}\PP_{\Omegar\backslash\Omegad}\EE).
\end{equation*}
Let $\XX^{\ast}$ be the adjoint of $\XX$. The last five inequalities \eqref{eq:a}-\eqref{eq:e} yield w.h.p.
\begin{eqnarray}
 \left\Vert \PP_{\OmegaO\cap(\Omegar\backslash\Omegad)}\PP_{\T}\XX^{*}\right\Vert
 & =   & \left\Vert \PP_{\OmegaO\cap(\Omegar\backslash\Omegad)}\PP_{\T} \mathcal{A}_{k-1} \mathcal{S}_{k-2}\cdots \mathcal{S}_{1} \PP_{\T} \Tk \right\Vert \nonumber\\
 & \le & C\sqrt{p_0} \sqrt{\frac{p_0(1-\tau)}{\log^3 n}} \left(\frac{1}{4}\right)^{k-2}  \sqrt{\frac{\log^2 n}{p_0(1-\tau)}} \nonumber\\
 & \le & C'\sqrt{p_0}\left(\frac{1}{4}\right)^{k}.\label{eq:2}
\end{eqnarray}
It is not hard to check that this inequality also holds for the $\XX$'s associated with other Type-2 terms, except for the term $\left(\RR_{\Omegaone-\mathcal{I}}\right)\left(-\llambda\PP_\T \EE\right)$, which is discussed later. We are ready to bound the operator norm of the Type-2 term using a standard $\epsilon$-net argument. Let $\mathbb{S}^{n-1}$ be the unit sphere in $\mathbb{R}^{n}$, and $N$ be an $1/2$-net of $\mathbb{S}^{n-1}$ of size at most $6^{n}$.
The definition and Lipschitz property of the operator norm gives that
\begin{eqnarray*}
 &  & \left\Vert \XX(\llambda \PP_{\T}\PP_{\Omegar\backslash\Omegad}\EE)\right\Vert \\
 & = & \sup_{x,y\in\mathbb{S}^{n-1}}\left\langle xy^{\top},\; \XX(\llambda \PP_{\T}\PP_{\Omegar\backslash\Omegad}\EE)\right\rangle \\
 & \le & 4\sup_{x,y\in N}\left\langle xy^{\top},\; \XX(\llambda \PP_{\T}\PP_{\Omegar\backslash\Omegad}\EE)\right\rangle
\end{eqnarray*}
For a fixed pair $(x,y)\in N\times N$ , we have
\begin{eqnarray*}
 &  & \left\langle xy^{\top},\; \XX(\llambda \PP_{\T}\PP_{\Omegar\backslash\Omegad}\EE)\right\rangle \\
 & = & \llambda\left\langle \PP_{\OmegaO\cap(\Omegar\backslash\Omegad)}\PP_{\T}\XX^{\ast}\left(xy^{\top}\right),\; \sgn(S^{\ast})\right\rangle
\end{eqnarray*}
We condition on the event that (\ref{eq:2}) holds. Because $\sgn(S^{*})$
has i.i.d. symmetric $\pm1$ entries, Hoeffding's inequality gives
\begin{eqnarray*}
% &     & \mathbb{P}\left( \XX(\llambda\PP_\T\EE)\right) \\
 &    & \mathbb{P}\left(\llambda\left\langle \PP_{\OmegaO\cap(\Omegar\backslash\Omegad)}\PP_{\T}\XX^{\ast}\left(xy^{\top}\right),\; \sgn(S^{\ast})\right\rangle \ge\frac{C}{4^{k}}\right)\\
 & \le & 2\exp\left(-\frac{2\cdot\frac{C^2}{4^{2k}}}{\left\Vert \llambda \PP_{\OmegaO\cap(\Omegar\backslash\Omegad)}\PP_{\T}\XX^{\ast}\left(xy^{\top}\right)\right\Vert _{F}^{2}}\right)\\
 & \le & 2\exp\left(-\frac{2\cdot\frac{C^2}{4^{2k}}}{\frac{1}{32^2p_0n(d+1)}\left\Vert \PP_{\OmegaO\cap(\Omegar\backslash\Omegad)}\PP_{\T}\XX^{\ast}\right\Vert ^{2}}\right)\\
 & \le & 2\exp\left(-\frac{C'\cdot\frac{1}{4^{2k}}}{\frac{1}{np_0}\cdot p_0 \frac{1}{4^{2k}}}\right)\\
 & \le & 2\exp\left(-C'n\right)
\end{eqnarray*}
for some constant $C'$ that can be made large. This probability is exponentially small, so we can apply union bound over the $6^{n}$ pairs $(x,y)$ in the
$\epsilon$-net $N\times N$ and conclude that w.h.p.
\begin{equation*}
\left\Vert X(\llambda \PP_{\T} \EE)\right\Vert \le\frac{C}{4^{k}} = C\frac{1}{2^k}\frac{1}{2^k}
\end{equation*}
For the exceptional term $\left(\RR_{\Omegaone-\mathcal{I}}\right)\left(-\llambda\PP_\T \EE\right)$, a similar bound holds as follows. The proof can be found in the Appendix.
\begin{lemma}
\label{lem:exceptionterm}
Under the assumption of Theorem \ref{thm:random}, the following holds with high probability
\[
\left\Vert\left(\RR_{\Omegaone-\mathcal{I}}\right)\left(-\llambda\PP_\T \EE\right)\right\Vert \le \frac{1}{32}.
\]
\end{lemma}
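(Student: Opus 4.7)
The goal is to bound the operator norm of $\left(\RR_{\Omegaone}-\mathcal{I}\right)(-\llambda\PP_\T \EE)$, which the generic $\epsilon$-net argument used for the other Type-2 terms in the proof of inequality~(d) cannot handle directly. The reason is that the auxiliary bound \eqref{eq:2}, namely $\|\PP_{\OmegaO\cap(\Omegar\backslash\Omegad)}\PP_\T\XX^{*}\|\lesssim \sqrt{p_{0}}(1/4)^{k}$, fails when $\XX=\RR_{\Omegaone}-\mathcal{I}$: without any $\Sk$ or $\Ak$ factors to produce geometric decay, one only gets $\|\PP_{\OmegaO\cap(\Omegar\backslash\Omegad)}\PP_\T(\RR_{\Omegaone}-\mathcal{I})^{*}\|\lesssim \sqrt{p_{0}}(1+1/q_{1})$. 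The plan is to decompose the target into a zero-mean random piece and a deterministic ``bias'' piece, and treat them separately.

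Concretely, write $\RR_{\Omegaone}-\mathcal{I} = (\RR_{\Omegaone}-\PP_{\Gammad}) - \PP_{\Gammad^{c}}$, where the first operator is mean zero (since $\mathbb{E}\RR_{\Omegaone}=\PP_{\Gammad}$) and the second is deterministic. For the bias piece, use that $\Gammad^{c}=\OmegaOd^{c}\cup\Omegad$ has at most $d$ entries per row and per column, so $\|\PP_{\Gammad^{c}}\PP_\T\EE\|\le d\|\PP_\T\EE\|_{\infty}$; splitting $\EE=\PP_{\Omegad}\EE+\PP_{\Omegar\backslash\Omegad}\EE$ and invoking Lemmas~\ref{lem:PTE_infd} and \ref{lem:PTE_infr} gives $\|\PP_\T\EE\|_{\infty}\lesssim \alpha+\sqrt{\mu r p_{0}\log n/n}$, which together with the choice $\llambda=1/(32\sqrt{p_{0}(d+1)n})$ and the hypothesis $d\le\rho_{\textrm{d}}\,n\,p_{0}^{2}(1-\tau)^{2}/(\mu r\log^{4} n)$ controls this contribution by $1/64$.

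For the zero-mean piece $(\RR_{\Omegaone}-\PP_{\Gammad})\PP_\T\EE$, split $\EE$ once more into its deterministic-sign part $\PP_{\Omegad}\EE$ and its random-sign part $\PP_{\Omegar\backslash\Omegad}\EE$. The deterministic-sign contribution is bounded by a variant of Lemma~\ref{lem:OP_INF} applied with $\Omega_{0}=\Omegaone\sim \textrm{Ber}(q_{1})$ to the fixed matrix $\PP_\T\PP_{\Omegad}\EE$, using the infinity-norm bound $\|\PP_\T\PP_{\Omegad}\EE\|_{\infty}\le\alpha$. The random-sign contribution is handled by recycling the $\epsilon$-net and Hoeffding argument from the proof of inequality~(d); the key point is that the cancellation with the mean $\PP_{\Gammad}$ restores the missing decay, yielding an improved auxiliary bound $\|\PP_{\OmegaO\cap(\Omegar\backslash\Omegad)}\PP_\T(\RR_{\Omegaone}-\PP_{\Gammad})^{*}\|\lesssim \sqrt{p_{0}/q_{1}}\sqrt{\log n}$, so that Hoeffding's inequality combined with $\llambda^{2}=1/(32^{2} p_{0}(d+1)n)$ produces an exponent large enough to absorb the union bound over the net and finish with a $1/64$ bound.

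The main obstacle is bookkeeping the weak dependence among the random objects: $\Omegaone$ and the support $\Omega$ of $\EE$ are both functions of the same coin flips coming from the decomposition $\Omegar^{c}=\bigcup_{k}\Omegak$, so unconditional independence does not hold. The key observation that makes the argument go through is that, conditional on $\Omega$ and $\OmegaO$, the signs of $A^{*}$ on $\Omegar\backslash\Omegad$ are symmetric $\pm 1$ and remain independent of $\Omegaone$, which is exactly what Hoeffding's inequality needs. Combining the deterministic and zero-mean pieces with the hypotheses of Theorem~\ref{thm:random} yields the claimed $1/32$ bound.
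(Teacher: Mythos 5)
Your decomposition of the operator into a deterministic bias piece $\PP_{\Gammad^{c}}$ plus a zero-mean piece, and your treatment of the bias piece via $\left\Vert \PP_{\Gammad^{c}}M\right\Vert \le d\left\Vert M\right\Vert _{\infty}$ together with Lemmas \ref{lem:PTE_infd} and \ref{lem:PTE_infr}, track the paper's proof exactly. The gap is in the zero-mean, random-sign piece. Your plan to recycle the $\epsilon$-net plus Hoeffding argument hinges on the auxiliary bound $\left\Vert \PP_{\OmegaO\cap(\Omegar\backslash\Omegad)}\PP_{\T}\bigl(\tfrac{1}{q_{1}}\PP_{\Omegaone\cap\Gammad}-\PP_{\Gammad}\bigr)\right\Vert \lesssim\sqrt{p_{0}\log n/q_{1}}$, and this is both unjustified and insufficient. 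It is unjustified because $\tfrac{1}{q_{1}}\PP_{\Omegaone\cap\Gammad}-\PP_{\Gammad}$ is diagonal in the entry basis with spectral norm $\tfrac{1}{q_{1}}-1\asymp\log n/(1-\tau)$; centering does not shrink its action on a worst-case net element $xy^{\top}$ (take $x,y$ concentrated on an entry of $\Omegaone$), so composing with $\PP_{\OmegaO\cap(\Omegar\backslash\Omegad)}\PP_{\T}$ only yields $\sqrt{p_{0}}/q_{1}$, not $\sqrt{p_{0}/q_{1}}$. It is insufficient because the union bound over the $6^{2n}$ net points requires the Hoeffding exponent $t^{2}/(\llambda^{2}\sigma^{2})\gtrsim n$ with $t=O(1)$, i.e.\ $\sigma^{2}\lesssim 1/(n\llambda^{2})=32^{2}p_{0}(d+1)$; even your claimed $\sigma^{2}\asymp p_{0}\log n/q_{1}\asymp p_{0}\log^{2}n/(1-\tau)$ overshoots this by a factor $\log^{2}n/(1-\tau)$, leaving an exponent of order $n(1-\tau)/\log^{2}n$, which does not absorb $6^{2n}$. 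This is precisely why the term is exceptional: unlike the other Type-2 terms, it contains no $\Ak$ or $\Sk$ factor supplying the geometric decay that makes \eqref{eq:2} work.

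The paper closes this gap with a different mechanism. It first factors $\PP_{\Omegar\backslash\Omegad}=\PP_{(\Omegaone)^{c}}\PP_{(\Omegatwo)^{c}\cap\cdots\cap(\Omegako)^{c}\backslash\Omegad}$, so that the remaining random-sign matrix $E$ is supported on a set independent of $\Omegaone$, and the dependence on $\Omegaone$ is isolated in a bilinear form in the indicators $\delta_{ab},\delta_{a'b'}$. It then expands this quadratic form, bounds the diagonal part $(a',b')=(a,b)$ with Lemma \ref{lem:OP_INF} and a standard random-matrix norm bound, and bounds the off-diagonal part with the decoupling argument of \cite[Lemma 6.7]{CanRec} combined with Lemmas \ref{lem:OP_INF} and \ref{lem:PTE_infr}. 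Some version of this second-moment/decoupling machinery (or another device that avoids a union bound over an exponentially large net) is the missing ingredient in your argument.
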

Summing over all $2^{k}-1=\textrm{poly}(n)$ Type-2 terms and combining with the bound \eqref{eq:type1bound} for the Type-1 term, it follows that the right hand side of \eqref{2ndtermintype} is bounded by $\frac{1}{8\cdot2^k}$. Summing over $k=1,2,\ldots,k_0$ bounds the second term in \eqref{eq:twoterm2} by $\frac{1}{8}$, which, together with the bound \eqref{eq:1sttermbound} for the first term, completes the proof of inequality (d) in  \eqref{eq:QRequirement}.

\textbf{Inequality $(e)$: Bounding $\left\Vert \PP_{\T^{\bot}} \llambda \EE \right\Vert $}

A standard argument about the norm of a matrix
with i.i.d. entries \cite{Versh} and \cite[Proposition 3]{cspw} give
\begin{eqnarray*}
\left\Vert \PP_{\T^{\bot}} \llambda \EE \right\Vert \le \llambda \left(\left\Vert \PP_{\Omegar\backslash\Omegad}\EE\right\Vert + \left\Vert\PP_{\Omegad}\EE\right\Vert\right)   \le  \frac{1}{32\sqrt{p_{0}(d+1)n\log n}}\cdot\left(4\sqrt{np_{0}\tau}+d\right).
\end{eqnarray*}
Under the assumption of Theorem \ref{thm:random}, the right hand side is no larger than $\frac{1}{4}$. Therefore, inequality (e) in \eqref{eq:QRequirement} holds.

This completes the proof of Theorem \ref{thm:random}. As mentioned in section \ref{sub:derand}, Theorem \ref{thm:randomfixed} also follows. 

\section{Proof of Theorem~\ref{theorem:mainresult}}
\label{sec:worst_case_proof}
\noindent The proof is along the lines of that in \cite{cspw} and has three steps: {\em (a)} writing down a sufficient optimality condition, stated in terms of a dual certificate, for $(\PP_{\OmegaO}(A^{\ast}),\; B^{\ast})$ to be the optimum of the convex program \eqref{eq:opt_problem}, {\em (b)} constructing a particular candidate dual certificate, and, {\em (c)} showing that under the imposed conditions this candidate does indeed certify that $(\PP_{\OmegaO}(A^{\ast}),\; B^{\ast})$ is the optimum. Part {\em (b)} is the "art" in this method; different ways to devise dual certificates can yield different sufficient conditions for exact recovery. Indeed this is the main difference between this paper and \cite{cspw}.

\subsubsection{Optimality conditions}

%\noindent For the sake of completeness, we restate here a first-order sufficient condition that need to be satisfied for $(\PP_{\OmegaO}(A^*), B^*)$ to be the optimum of \eqref{eq:opt_problem}. The reader is referred to \cite{cspw} for a proof.
%
%\begin{lemma}[{\bf A Sufficient Optimality Condition} \cite{cspw}]
%The pair $(\PP_{\OmegaO}(A^*), B^*)$ is the unique optimal solution of \eqref{eq:opt_problem} if
%\begin{itemize}
%\item [(a)] $\Gamma^c\cap\mathcal{T}=\{\mathbf{0}\}$.
%\item [(b)] There exists a dual matrix $Q\in\mathbb{R}^{n_1\times n_2}$ satisfying $\PP_{\OmegaO^c}(Q)=0$ and
%\begin{equation}
%\begin{aligned}
%&\PP_{\mathcal{T}}(Q)=UV^\top&\qquad &\|\PP_{\mathcal{T}^{\perp}}(Q)\|< 1\\
%&\PP_{\Gamma^c}(Q)=\gamma \PP_{\OmegaO}(\sgn(A^*))&\qquad & \|\PP_{\Gamma}(Q)\|_{\infty}<\gamma.
%\end{aligned}
%\label{eq:orig_dual_cond}
%\end{equation}\\
%\end{itemize}
%\end{lemma}

For the sake of completeness, we restate here a first-order sufficient condition that guarantees $(\PP_{\OmegaO}(A^{\ast}),\; B^{\ast})$ to be the optimum of \eqref{eq:opt_problem}. The reader is referred to \cite{cspw} for a proof.

\begin{lemma}[{\bf A Sufficient Optimality Condition} \cite{cspw}]
The pair $(\PP_{\OmegaO}(A^{\ast}),\; B^{\ast})$ is the unique optimal solution of \eqref{eq:opt_problem} if
\begin{itemize}
\item [(a)] $\Gamma^c\cap\mathcal{T}=\{\mathbf{0}\}$.
\item [(b)] There exists a dual matrix $Q\in\mathbb{R}^{n_1\times n_2}$ satisfying $\PP_{\OmegaO^c}(Q)=0$ and
\begin{equation}
\begin{aligned}
&\PP_{\mathcal{T}}(Q)=UV^\top&\qquad &\|\PP_{\mathcal{T}^{\perp}}(Q)\|< 1\\
&\PP_{\Gamma^c}(Q)=\gamma \PP_{\OmegaO}(\sgn(A^*))&\qquad & \|\PP_{\Gamma}(Q)\|_{\infty}<\gamma.
\end{aligned}
\label{eq:orig_dual_cond}
\end{equation}
\end{itemize}
\label{lem:detopt}
\end{lemma}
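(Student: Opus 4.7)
The plan is to exploit convex duality by combining subgradient inequalities for the two pieces of the objective with the certificate $Q$, and to invoke assumption (a) only at the very end to rule out the vanishing case. Take any feasible perturbation $(H_A,H_B)\ne(\mathbf{0},\mathbf{0})$ around $(\PP_{\OmegaO}(A^*),B^*)$. Since adding any mass to $A$ outside $\OmegaO$ only increases the $\ell_1$ cost while leaving feasibility intact, I may restrict attention to directions with $\PP_{\OmegaO^c}(H_A)=\mathbf{0}$, so feasibility reduces to $\PP_{\OmegaO}(H_A+H_B)=\mathbf{0}$. The aim is to show that the objective strictly increases along every such direction.

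The first step is to write subgradient lower bounds for each piece. By nuclear/spectral duality, pick $G_0\in\T^\perp$ with $\|G_0\|\le 1$ and $\langle G_0,H_B\rangle=\|\PP_{\T^\perp}(H_B)\|_*$; this gives $\|B^*+H_B\|_*-\|B^*\|_*\ge \langle UV^\top,H_B\rangle+\|\PP_{\T^\perp}(H_B)\|_*$. By $\ell_1/\ell_\infty$ duality, pick $F_0$ supported on $\Gamma$ with $\|F_0\|_\infty\le 1$ and $\langle F_0,H_A\rangle=\|\PP_\Gamma(H_A)\|_1$; this gives $\|\PP_{\OmegaO}(A^*)+H_A\|_1-\|\PP_{\OmegaO}(A^*)\|_1\ge \langle \EE,H_A\rangle+\|\PP_\Gamma(H_A)\|_1$, where $\EE=\PP_{\OmegaO}(\sgn(A^*))$.

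The next step is to replace the two linear inner products using $Q$. The identity $\PP_\T(Q)=UV^\top$ yields $\langle UV^\top,H_B\rangle=\langle Q,H_B\rangle-\langle \PP_{\T^\perp}(Q),H_B\rangle$, and $\PP_{\Gamma^c}(Q)=\gamma\EE$ combined with $\PP_{\OmegaO^c}(H_A)=\mathbf{0}$ (so that $\PP_{\Gamma^c}(H_A)=H_A-\PP_\Gamma(H_A)$) yields $\gamma\langle \EE,H_A\rangle=\langle Q,H_A\rangle-\langle \PP_\Gamma(Q),H_A\rangle$. Adding these, using $\langle Q,H_A+H_B\rangle=0$ (which follows from $Q$ being supported on $\OmegaO$ and $\PP_{\OmegaO}(H_A+H_B)=\mathbf{0}$), and applying H\"{o}lder's inequality to the surviving cross terms, one obtains
\begin{equation*}
\gamma\|\PP_{\OmegaO}(A^*)+H_A\|_1+\|B^*+H_B\|_*-\gamma\|\PP_{\OmegaO}(A^*)\|_1-\|B^*\|_* \ge \bigl(\gamma-\|\PP_\Gamma(Q)\|_\infty\bigr)\|\PP_\Gamma(H_A)\|_1+\bigl(1-\|\PP_{\T^\perp}(Q)\|\bigr)\|\PP_{\T^\perp}(H_B)\|_*.
\end{equation*}

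The final step uses the strict inequalities in (b) together with the subspace assumption (a). The right-hand side is nonnegative and vanishes only if $\PP_\Gamma(H_A)=\mathbf{0}$ and $\PP_{\T^\perp}(H_B)=\mathbf{0}$. In that case $H_A$ is supported on $\Omega\cap\OmegaO$ and $H_B\in\T$, while feasibility forces $\PP_\Gamma(H_B)=-\PP_\Gamma(H_A)=\mathbf{0}$, so $H_B$ is supported on $\Gamma^c$ as well; the hypothesis $\Gamma^c\cap\T=\{\mathbf{0}\}$ then forces $H_B=\mathbf{0}$, whence $\PP_{\OmegaO}(H_A)=\mathbf{0}$ and so $H_A=\mathbf{0}$, contradicting $(H_A,H_B)\ne(\mathbf{0},\mathbf{0})$. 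The main obstacle is the careful bookkeeping of the four decompositions $\OmegaO/\OmegaO^c$, $\Omega/\Gamma$, and $\T/\T^\perp$, and their interaction in the cross terms; the cleanest way to proceed is to verify up front that conditions (b) encode exactly two facts about $Q$, namely that $Q$ vanishes on $\OmegaO^c$ and equals $\gamma\sgn(A^*)$ on $\Omega\cap\OmegaO$, which makes every cancellation in the displayed inequality transparent.
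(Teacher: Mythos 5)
Your proof is correct and follows the standard dual-certificate argument: subgradient lower bounds via $G_0\in\T^\perp$ and $F_0$ supported on $\Gamma$, substitution of $Q$ through its equality constraints, H\"{o}lder's inequality, and condition (a) to force $H_B=\mathbf{0}$ in the degenerate case. The paper defers this lemma's proof to \cite{cspw}, but your argument is essentially the same as the one the paper gives for its analogous Lemma \ref{lem:OptimalityCondition} (with exact equalities in place of the approximate conditions), so there is nothing further to flag.
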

Lemma \ref{lem:detopt} provides a first-order sufficient condition for $(\PP_{\OmegaO}(A^*), B^*)$ to be the optimum of \eqref{eq:opt_problem}. Condition (a) in the lemma guarantees that the sparse matrices and low-rank matrices can be distinguished without ambiguity. In other words, any given matrix can not be both sparse and low-rank except the zero matrix. The following lemma gives a sufficient guarantee for the condition (a). We construct the dual matrix $Q$ in the next subsection and prove condition (b) afterwards.\\

\begin{lemma}
If $\alpha<1$, then $\Gamma^c\cap\mathcal{T}=\{\mathbf{0}\}$.
\label{lemma:zero-intersection}
\end{lemma}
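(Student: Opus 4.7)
The statement is that the subspace of matrices supported on $\Gamma^c$ intersects $\mathcal{T}$ only at $\mathbf{0}$. The standard reduction is: if $M$ is supported on $\Gamma^c$ and also lies in $\mathcal{T}$, then $M=\PP_{\mathcal{T}}(M)=\PP_{\mathcal{T}}\PP_{\Gamma^c}(M)$, so it suffices to show that the operator norm of $\PP_{\mathcal{T}}$ restricted to matrices supported on $\Gamma^c$ is strictly bounded by $1$ -- and in fact bounded by $\alpha$. Once this is established, the assumption $\alpha<1$ forces $\|M\|_F\le\alpha\|M\|_F$, hence $M=\mathbf{0}$.

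The core of the plan is therefore to prove $\|\PP_{\mathcal{T}}(M)\|_F\le\alpha\|M\|_F$ for every $M$ supported on $\Gamma^c$. Writing $\PP_{\mathcal{T}}(M)=UU^\top M + MVV^\top - UU^\top M VV^\top$, I would bound the three pieces separately using: (i) the row/column sparsity hypothesis (at most $d$ nonzeros per row and per column in $\Gamma^c$), and (ii) the $\mu$-incoherence bounds $\|U^\top e_i\|\le\sqrt{\mu r/n_1}$, $\|V^\top e_j\|\le\sqrt{\mu r/n_2}$.

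For the first piece, since $U$ has orthonormal columns, $\|UU^\top M\|_F=\|U^\top M\|_F$, and column-by-column I would write $U^\top M_{\cdot,j}=\sum_{k\in S_j}M_{kj}U^\top e_k$ where $|S_j|\le d$. The triangle inequality followed by Cauchy--Schwarz gives
\begin{equation*}
\|U^\top M_{\cdot,j}\|^2\le\Big(\sum_{k\in S_j}M_{kj}^2\Big)\Big(\sum_{k\in S_j}\|U^\top e_k\|^2\Big)\le\|M_{\cdot,j}\|^2\cdot\frac{d\,\mu r}{n_1},
\end{equation*}
and summing over $j$ gives $\|UU^\top M\|_F\le\sqrt{d\mu r/n_1}\,\|M\|_F$. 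The same argument on rows yields $\|MVV^\top\|_F\le\sqrt{d\mu r/n_2}\,\|M\|_F$. For the cross term I would use $\|UU^\top M VV^\top\|_F\le\min(\|UU^\top M\|_F,\|MVV^\top\|_F)\le\sqrt{d\mu r/\max(n_1,n_2)}\,\|M\|_F$, exploiting that $\|VV^\top\|_{op}=\|UU^\top\|_{op}=1$. Summing these three bounds via the triangle inequality reproduces exactly the quantity $\alpha$ defined before the theorem.

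There is not really a ``hard step'' here -- the whole argument is a short linear-algebra computation -- but the one subtlety to flag is the choice between bounding the cross term through its $U$-side or its $V$-side; picking the better of the two is what gives the $\max(n_1,n_2)$ (rather than $\min(n_1,n_2)$) in the third summand of $\alpha$ and is important for the sharpness of the deterministic guarantee used in Theorem~\ref{theorem:mainresult}. With the contraction $\|\PP_{\mathcal{T}}\PP_{\Gamma^c}\|\le\alpha<1$ in hand, the fixed-point argument above immediately closes the proof.
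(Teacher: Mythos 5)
Your proof is correct. It has the same skeleton as the paper's — reduce to showing that $\PP_{\mathcal{T}}\PP_{\Gamma^c}$ is an $\alpha$-contraction and conclude $M = \PP_{\mathcal{T}}\PP_{\Gamma^c}(M)$ forces $M=\mathbf{0}$ — but you run the contraction in the Frobenius norm, whereas the paper runs it in the entrywise $\ell_\infty$ norm. Concretely, the paper bounds each entry of $UU^\top\PP_{\Gamma^c}(M)$ by $\max_i\|UU^\top e_i\|\cdot\max_j\|\PP_{\Gamma^c}(M)e_j\|$ and uses that a column with at most $d$ nonzeros satisfies $\|\PP_{\Gamma^c}(M)e_j\|\le\sqrt{d}\,\|\PP_{\Gamma^c}(M)\|_\infty$ (and $\|\PP_{\Gamma^c}(M)\|\le d\|\PP_{\Gamma^c}(M)\|_\infty$ for the cross term), arriving at $\|\PP_{\mathcal{T}}\PP_{\Gamma^c}(M)\|_\infty\le\alpha\|M\|_\infty$; your column-wise Cauchy--Schwarz yields the analogous bound in $\|\cdot\|_F$. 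Both deliver the same constant $\alpha$, and your version is essentially the (adjoint of the) paper's Appendix Lemma~\ref{lem:det_Fro_bound}, so nothing new is needed elsewhere. One small point in your favor: your $\min$-based bound on the cross term gives $\sqrt{\mu r d/\max(n_1,n_2)}$ directly, whereas the paper's entrywise route first produces $\sqrt{\mu rd/n_1}\cdot\sqrt{\mu rd/n_2}$ and must invoke the side condition that each factor is at most $1$ to fold it into $\alpha$. The $\ell_\infty$ version is what the paper actually reuses later (it is recycled verbatim in Lemma~\ref{lem:qaqb} and Lemma~\ref{lemma:perpbounds} to control $\|\mathbf{S}_W\|_\infty$), so if you adopted your Frobenius variant wholesale you would still need the $\ell_\infty$ estimate for the certification step; for the present lemma alone, either norm works.
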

\begin{proof}
It is clear that $\{\mathbf{0}\}\in\Gamma^c\cap\mathcal{T}$. In order to obtain a contradiction assume that there exists a non-zero matrix $M\in\Gamma^c\cap\mathcal{T}$. By idempotency of orthogonal projections, we have $M=\PP_{\Gamma^c}(M)=\PP_{\mathcal{T}}\left(\PP_{\Gamma^c}(M)\right)$ and hence
\begin{equation}
\begin{aligned}
&\left\|\PP_{\mathcal{T}}\left(\PP_{\Gamma^c}(M)\right)\right\|_{\infty}\\
&\quad=\left\|UU^\top\PP_{\Gamma^c}(M)+\PP_{\Gamma^c}(M)VV^\top-UU^\top\PP_{\Gamma^c}(M)VV^\top\right\|_{\infty}\\
&\quad\leq\left\|UU^\top\PP_{\Gamma^c}(M)\right\|_{\infty}+\left\|\PP_{\Gamma^c}(M)VV^\top\right\|_{\infty} +\left\|UU^\top\PP_{\Gamma^c}(M)VV^\top\right\|_{\infty}\\
&\quad\leq\max_{i}\left\|UU^\top\mathbf{e}_i\right\|\max_{j}\left\|\PP_{\Gamma^c}(M)\mathbf{e}_j\right\| +\max_{j}\left\|\mathbf{e}_j^\top\PP_{\Gamma^c}(M)\right\|\max_{i}\left\|VV^\top\mathbf{e}_i\right\|\\
&\quad\qquad\qquad+\max_{j}\left\|\mathbf{e}_j^\top UU^\top\right\|\left\|\PP_{\Gamma^c}(M)\right\|\max_{i}\left\|VV^\top\mathbf{e}_i\right\|\\
&\quad\leq\max_{i}\left\|UU^\top\mathbf{e}_i\right\|\sqrt{d}\left\|\PP_{\Gamma^c}(M)\right\|_\infty +\sqrt{d}\left\|\PP_{\Gamma^c}(M)\right\|_\infty\max_{i}\left\|VV^\top\mathbf{e}_i\right\|\\
&\quad\qquad\qquad+\max_{i}\left\|UU^\top\mathbf{e}_i\right\|d\left\|\PP_{\Gamma^c}(M)\right\|_\infty\max_{i}\left\|VV^\top\mathbf{e}_i\right\|\\
&\quad\leq\alpha\|\PP_{\Gamma^c}(M)\|_{\infty}=\alpha\|\PP_{\mathcal{T}}\left(\PP_{\Gamma^c}(M)\right)\|_{\infty}.
\end{aligned}
\label{eq:upper_bound}
\end{equation}
Here, we used the fact that $\sqrt{\frac{\mu r d}{n_1}}\sqrt{\frac{\mu r d}{n_2}}\leq \sqrt{\frac{\mu r d}{\max(n_1,n_2)}}$ since both terms do not exceed $1$ by assumption. Hence, $\|M\|_{\infty}=\mathbf{0}$ or equivalently, $M=\mathbf{0}$. This is a contradiction.\\
\end{proof}

\subsubsection{Dual Certificate}

\noindent We now describe our main innovation, a new way to construct the candidate dual certificate $Q$, which is different from the ones in \cite{cspw}. We construct $Q$ as the minimum norm solution to the equality constraints in Lemma \ref{lem:detopt}. As a first step, consider two matrices $Q_a$ and $Q_b$ defined as follows: with $M^*=\gamma\sgn(A^*)$ and $N^*=UV^*$, let
\begin{equation}
\begin{aligned}
Q_a&=\!M^*\!\!\!-\!\PP_{\mathcal{T}}(M^*)\!+\!\PP_{\Gamma^c}\left(\PP_{\mathcal{T}}(M^*)\right)\!-\!\PP_{\mathcal{T}}\left(\PP_{\Gamma^c}\left(\PP_{\mathcal{T}}(M^*)\right)\right)\!+\!\cdot\!\cdot\!\cdot\\
Q_b&=\!N^*\!\!\!-\!\PP_{\Gamma^c}(N^*)\!+\!\PP_{\mathcal{T}}\left(\PP_{\Gamma^c}(N^*)\right)\!-\!\PP_{\Gamma^c}\left(\PP_{\mathcal{T}}\left(\PP_{\Gamma^c}(N^*)\right)\right)\!+\!\cdot\!\cdot\!\cdot
\end{aligned}
\nonumber
\end{equation}
Lemma \ref{lem:qaqb} below establishes that $Q_a$ and $Q_b$ as described above are well-defined, i.e., it establishes that  the infinite summations converge, under the conditions of the theorem. Note that when this is the case, we have that
\begin{equation}
\begin{aligned}
&\PP_{\mathcal{T}}\left(Q_b\right)=UV^\top\qquad &\qquad \PP_{\mathcal{T}}\left(Q_a\right)=\mathbf{0}\\
&\PP_{\Gamma^c}\left(Q_a\right)=\gamma\PP_{\OmegaO}(\sgn(A^*))\qquad &\qquad \PP_{\Gamma^c}\left(Q_b\right)=\mathbf{0}.
\end{aligned}
\label{eq:equality_dual_cond}
\end{equation}
From \eqref{eq:equality_dual_cond}, it is clear that $Q=Q_a+Q_b$ satisfies the equality conditions in \eqref{eq:orig_dual_cond} and also $\PP_{\OmegaO^c}(Q)=0$. In the next subsection, we will show that the inequality conditions are also satisfied under the assumptions of the theorem~\ref{theorem:mainresult}.

\begin{lemma} \label{lem:qaqb}
If $\alpha<1$, then $Q_a$ and $Q_b$ exist, i.e., the sums converge.
\end{lemma}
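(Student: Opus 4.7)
The plan is to establish absolute convergence of both series in the entry-wise $\ell_\infty$ norm $\|\cdot\|_\infty$; since we are in finite dimension, this gives convergence in every norm, hence the existence of the matrices $Q_a$ and $Q_b$. The only nontrivial ingredient is the contraction bound that is already implicit in the proof of Lemma~\ref{lemma:zero-intersection}: for any matrix $X$ whose support lies in $\Gamma^c$ (so that $X$ has at most $d$ nonzeros in each row and each column),
\begin{equation*}
\|\PP_{\mathcal{T}}(X)\|_{\infty} \;\le\; \alpha\,\|X\|_{\infty}.
\end{equation*}
I will combine this with the trivial estimate $\|\PP_{\Gamma^c}(Y)\|_{\infty} \le \|Y\|_{\infty}$, which holds because $\PP_{\Gamma^c}$ zeroes out entries outside $\Gamma^c$ and leaves the rest unchanged.

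First I handle $Q_b$. Its defining series applies $\PP_{\Gamma^c}$ and $\PP_{\mathcal{T}}$ alternately, beginning with $\PP_{\Gamma^c}$. The key structural observation is that every time a $\PP_{\mathcal{T}}$ is about to act, its argument has just been produced by an outermost $\PP_{\Gamma^c}$ and is therefore supported on $\Gamma^c$, so the contraction bound applies. Iterating, the $k=0$ and $k=1$ summands are each bounded by $\|N^*\|_{\infty}$, and for every $k\ge 1$ the $2k$-th and $(2k+1)$-st summands are both bounded by $\alpha^{k}\|N^*\|_{\infty}$. Summing the resulting geometric series, one obtains
\begin{equation*}
\sum_{k\ge 0}\bigl\|(\text{$k$-th term of }Q_b)\bigr\|_{\infty} \;\le\; \frac{2\,\|N^*\|_{\infty}}{1-\alpha} \;<\; \infty
\end{equation*}
whenever $\alpha<1$, which proves absolute convergence of the series defining $Q_b$.

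For $Q_a$ the alternation starts with $\PP_{\mathcal{T}}$ rather than $\PP_{\Gamma^c}$; the initial matrix $M^* = \gamma\,\sgn(A^*)$ is supported on $\Omega \subseteq \Gamma^c$, so the very first $\PP_{\mathcal{T}}$ already picks up a factor of $\alpha$, and from that point on the same alternating pattern as in $Q_b$ takes over. An entirely analogous pairing of consecutive terms yields $\sum_{k\ge 0}\|(\text{$k$-th term of }Q_a)\|_{\infty} \le \|M^*\|_{\infty}(1+\alpha)/(1-\alpha) < \infty$. The main (and essentially only) subtlety in either case is the bookkeeping needed to verify that every argument of a $\PP_{\mathcal{T}}$ really is supported on $\Gamma^c$; this is evident from the explicit nested form of the two series, so no serious obstacle arises.
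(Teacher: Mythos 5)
Your proposal is correct and follows essentially the same route as the paper: both rest on the contraction $\|\PP_{\mathcal{T}}(X)\|_\infty\le\alpha\|X\|_\infty$ for $X$ supported on $\Gamma^c$ (established in \eqref{eq:upper_bound}) together with $\|\PP_{\Gamma^c}(\cdot)\|_\infty\le\|\cdot\|_\infty$, yielding geometric decay of the terms in $\|\cdot\|_\infty$. The paper merely packages the same bookkeeping by defining the auxiliary series $\mathbf{S}_W=\sum_{k\ge0}(\PP_{\mathcal{T}}\PP_{\Gamma^c})^k(W)$ and rewriting $Q_a$ and $Q_b$ in terms of it, whereas you bound the alternating terms pairwise; the content is identical.
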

\begin{proof}
For any matrix $W\in\mathbb{R}^{n_1\times n_2}$, let $\mathbf{S}_W=W+\PP_{\mathcal{T}}\left(\PP_{\Gamma^c}(W)\right) +\PP_{\mathcal{T}}\left(\PP_{\Gamma^c}\left(\PP_{\mathcal{T}}\left(\PP_{\Gamma^c}(W)\right)\right)\right)+\cdot\cdot\cdot$. It suffices to show that $\mathbf{S}_W$ converges for all $W$ since $Q_a=M^*-\PP_{\Gamma}\left(\mathbf{S}_{\PP_{\mathcal{T}}(M^*)}\right)$ and $Q_b=\mathbf{S}_{N^*-\PP_{\Gamma^c}(N^*)}$. Notice that
$\|\PP_{\mathcal{T}}\left(\PP_{\Gamma^c}\left(W \right)\right)\|_{\infty} \leq\alpha\|\PP_{\Gamma^c}(W)\|_{\infty}\leq\alpha\|W\|_{\infty}$ as shown in \eqref{eq:upper_bound} and hence $\mathbf{S}_W$ geometrically converges.\\
\end{proof}

\subsubsection{Certification}

\noindent Considering $Q=Q_a+Q_b$ as a candidate for dual matrix, we need to show the conditions in \eqref{eq:orig_dual_cond} are satisfied under the conditions of the theorem. As we showed in the previous subsection, the equality conditions are satisfied by construction of $Q_a$ and $Q_b$. To prove the inequality conditions, we first bound the projection of $Q$ into orthogonal complement spaces in next lemma.

\begin{lemma}
If $\alpha<1$, then
\begin{equation}
\begin{aligned}
\|\PP_{\Gamma}(Q)\|_{\infty}&\leq\frac{1}{1-\alpha}\left(\sqrt{\frac{\mu r}{n_1 n_2}}+\alpha\gamma\right)\\
\|\PP_{\mathcal{T}^{\perp}}(Q)\|&\leq\frac{\eta d}{1-\alpha}\left(\sqrt{\frac{\mu r}{n_1 n_2}}+\gamma\right).
\end{aligned}
\nonumber
\end{equation}
\label{lemma:perpbounds}
\end{lemma}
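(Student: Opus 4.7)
The plan is to leverage the explicit structure of $Q = Q_a + Q_b$. By grouping consecutive terms of the defining infinite series, one finds
\[
Q_a \;=\; \PP_{\T^{\perp}}\!\left(\sum_{k\ge 0}(\PP_{\Gamma^c}\PP_{\T})^{k}M^{*}\right), \qquad Q_b \;=\; \PP_{\Gamma}\!\left(\sum_{k\ge 0}(\PP_{\T}\PP_{\Gamma^c})^{k}N^{*}\right),
\]
so $Q_a \in \T^{\perp}$ and $Q_b$ is supported on $\Gamma$. The workhorse throughout is the contraction $\|\PP_{\T}(W)\|_{\infty}\le \alpha \|W\|_{\infty}$ whenever $W$ is supported on $\Gamma^c$, which is exactly what the chain of inequalities in \eqref{eq:upper_bound} established.

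For the $\ell_\infty$ bound, I would split $\|\PP_\Gamma(Q)\|_\infty \le \|\PP_\Gamma(Q_a)\|_\infty + \|\PP_\Gamma(Q_b)\|_\infty$. Applying $\PP_\Gamma$ to the $Q_a$ series kills $M^{*}$ and every term beginning with $\PP_{\Gamma^c}$, collapsing it to $-\PP_{\Gamma}\mathbf{S}_{\PP_\T(M^{*})}$ with $\mathbf{S}_W := \sum_k(\PP_\T\PP_{\Gamma^c})^kW$ as in Lemma~\ref{lem:qaqb}. Iterating the contraction yields a geometric bound with ratio $\alpha$; combined with $\|\PP_\T(M^{*})\|_\infty = \|\PP_\T\PP_{\Gamma^c}(M^{*})\|_\infty \le \alpha\gamma$, this produces $\|\PP_\Gamma(Q_a)\|_\infty \le \alpha\gamma/(1-\alpha)$. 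For $Q_b$, the incoherence bound $\|N^{*}\|_\infty \le \sqrt{\mu r/(n_1 n_2)}$ together with the same geometric argument yields $\|\PP_\Gamma(Q_b)\|_\infty \le \sqrt{\mu r/(n_1 n_2)}/(1-\alpha)$. Summing gives the first inequality.

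For the spectral-norm bound I would use $\|\PP_{\T^{\perp}}(Q)\| \le \|Q_a\| + \|\PP_{\T^{\perp}}(Q_b)\|$ and massage each piece into the form $\PP_{\T^{\perp}}$ of a matrix supported on $\Gamma^c$, at which point the spread-out hypothesis $\|M\| \le \eta d \|M\|_\infty$ applies. The displayed form for $Q_a$ already does this: its inner sum is $\Gamma^c$-supported, and its $\ell_\infty$ norm is at most $\gamma/(1-\alpha)$ via the adjoint contraction $\|\PP_{\Gamma^c}\PP_\T W\|_\infty \le \alpha \|W\|_\infty$ for $\Gamma^c$-supported $W$, so $\|Q_a\| \le \eta d \gamma/(1-\alpha)$. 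A brief telescoping of $Q_b - N^{*}$ using the recursion $\mathbf{S}_{N^{*}} = N^{*} + \PP_\T\PP_{\Gamma^c}\mathbf{S}_{N^{*}}$ yields $\PP_{\T^{\perp}}(Q_b) = -\PP_{\T^{\perp}}\PP_{\Gamma^c}\mathbf{S}_{N^{*}}$; the inner projection is $\Gamma^c$-supported with $\ell_\infty$ norm at most $\sqrt{\mu r/(n_1 n_2)}/(1-\alpha)$, giving $\|\PP_{\T^{\perp}}(Q_b)\| \le \eta d \sqrt{\mu r/(n_1 n_2)}/(1-\alpha)$. Adding these two bounds delivers the second inequality.

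The main obstacle is exactly this rewriting step: exposing both $Q_a$ and $\PP_{\T^{\perp}}(Q_b)$ as $\PP_{\T^{\perp}}$ of a $\Gamma^c$-supported matrix so that the spread-out assumption can be invoked. Without this, a direct spectral-norm estimate on the original series loses the $\eta d$ factor, because $\|\PP_\T\PP_{\Gamma^c}\|$ is not small in operator norm, and the bound fails. Everything else is careful geometric-series bookkeeping justified by incoherence, the spread-out assumption, and the two $\ell_\infty$ contractions noted above.
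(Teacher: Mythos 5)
Your proof is correct and follows essentially the same route as the paper's: both rest on the $\ell_\infty$ contraction in \eqref{eq:upper_bound}, geometric summation of the Neumann-type series, the incoherence bounds on $\|UV^\top\|_\infty$ and $\|M^*\|_\infty$, and an application of the spread-out hypothesis $\|\cdot\|\le \eta d\|\cdot\|_\infty$ to a $\Gamma^c$-supported matrix. The only difference is bookkeeping: you bound $Q_a$ and $Q_b$ separately after regrouping the series, while the paper combines them into a single $\mathbf{S}_{N^*-\PP_{\T}(M^*)}$ expression; the resulting bounds coincide.
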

\begin{proof}
Using the definition of $\mathbf{S}_W$ for any matrix $W\in\mathbb{R}^{n_1\times n_2}$, we get $\left\|\mathbf{S}_W\right\|_{\infty}\leq\frac{1}{1-\alpha}\left\|W\right\|_{\infty}$, because of the geometrical convergence. Thus, we have
\begin{equation}
\begin{aligned}
\|\PP_{\Gamma}(Q)\|_{\infty} &=\|\PP_{\Gamma}\left(\mathbf{S}_{N^*-\PP_{\mathcal{T}}(M^*)}\right)\|_{\infty}\\
&\leq\|\mathbf{S}_{N^*-\PP_{\mathcal{T}}(M^*)}\|_{\infty}\\
&\leq\frac{1}{1-\alpha}\|N^*-\PP_{\mathcal{T}}(M^*)\|_{\infty}\\ &\leq\frac{1}{1-\alpha}\left(\|N^*\|_{\infty}+\|\PP_{\mathcal{T}}(M^*)\|_{\infty}\right)\\
&\leq\frac{1}{1-\alpha}\left(\|N^*\|_{\infty}+\alpha\|M^*\|_{\infty}\right)\\ &\leq\frac{1}{1-\alpha}\left(\sqrt{\frac{\mu r}{n_1 n_2}}+\alpha\gamma\right).
\end{aligned}
\nonumber
\end{equation}
In the last inequality we use the incoherence assumptions for sparse and low-rank matrix. By orthonormality of $U$ and $V$, we have $\|\mathbf{I}-UU^\top\|\leq 1$ and $\|\mathbf{I}-VV^\top\|\leq 1$. Hence,
\begin{equation}
\begin{aligned}
&\|\PP_{\mathcal{T}^{\perp}}(Q)\|\\ &=\|\PP_{\mathcal{T}^{\perp}}\left(M^*-\PP_{\Gamma^c}\left(\mathbf{S}_{N^*-\PP_{\mathcal{T}}(M^*)}\right)\right)\|\\ &=\|\left(\mathbf{I}-UU^\top\right)\left(M^*-\PP_{\Gamma^c}\left(\mathbf{S}_{N^*-\PP_{\mathcal{T}}(M^*)}\right)\right)\left(\mathbf{I}-VV^\top\right)\|\\
&\leq\|M^*-\PP_{\Gamma^c}\left(\mathbf{S}_{N^*-\PP_{\mathcal{T}}(M^*)}\right)\|\\ &\leq\eta d \|M^*-\PP_{\Gamma^c}\left(\mathbf{S}_{N^*-\PP_{\mathcal{T}}(M^*)}\right)\|_{\infty}\\
&\leq \eta d\left(\|M^*\|_{\infty}+\|\mathbf{S}_{N^*-\PP_{\mathcal{T}}(M^*)}\|_{\infty}\right)\\ &\leq \eta d\left(\gamma+\frac{1}{1-\alpha}\left(\sqrt{\frac{\mu r}{n_1 n_2}}+\alpha\gamma\right)\right)\\
&\leq\frac{\eta d}{1-\alpha}\left(\sqrt{\frac{\mu r}{n_1 n_2}}+\gamma\right).
\end{aligned}
\nonumber
\end{equation}
Here, again we are using the incoherence assumptions on the sparse and low-rank matrix. This concludes the proof of the lemma.\\
\end{proof}

\noindent Finally to satisfy \eqref{eq:orig_dual_cond}, we require
\begin{equation}
\begin{aligned}
&\|\PP_{\mathcal{T}^{\perp}}(Q)\|&\leq&\quad\frac{\eta d}{1-\alpha}\left(\sqrt{\frac{\mu r}{n_1 n_2}}+\gamma\right)&<&1\\
&\|\PP_{\Gamma}(Q)\|_{\infty}&\leq&\quad\frac{1}{1-\alpha}\left(\sqrt{\frac{\mu r}{n_1 n_2}}+\alpha\gamma\right)&<&\gamma\\
\end{aligned}\quad .
\nonumber
\end{equation}
Combining these two inequalities, we get
\begin{equation}
\frac{1}{1-2\alpha}\sqrt{\frac{\mu r}{n_1n_2}}<\gamma<\frac{1-\alpha}{\eta d}-\sqrt{\frac{\mu r}{n_1n_2}}
\nonumber
\end{equation}
as stated in the assumptions of the theorem.

\section{Experiments}
\noindent In this section, we illustrate the power of our method via some simulation results. These results show that the behavior of the algorithm agrees with the theoretical results.\\

\noindent We investigate how the algorithm performs as the size of the low-rank matrix gets larger. In other words, we try to see how the requirements for the success of our algorithm change as the size of the matrix grows. These simulation results show that the conditions get relaxed more and more as $n$ increases. We run three experiments as follows:\\

\begin{itemize}
\item [(1)] {\bf Minimum Required Observation Probability:} We generate a rank two matrix ($r=2$) of size $n$ by multiplying a random $n\times2$ matrix and a random $2\times n$ matrix, and then corrupt the entries randomly with probability $\tau = 0.1$ without any adversarial noise ($d=0$). The entries of the corrupted matrix are observed independently with probability $p_0$. We then solve \eqref{eq:opt_problem} using the method in \cite{LinALM}. Success is declared if we recover the low-rank matrix with a relative error less than $10^{-6}$ measured in Frobenius norm. The experiment is repeated $10$ times and we count the frequency of success. For any fixed number $n$, if we start from $p_0=1$ and decrease $p_0$, at some point, the frequency of success jumps from one to zero, i.e., we observe a phase transition. In Fig.~\ref{fig:rho}, we plot the $p_0$ at which the phase transition happens versus the size of the matrix. This experiment shows that the phase transition $p_0$ goes to zero as $n$ increases as predicted by the theorem.\\

\item [(2)] {\bf Maximum Tolerable Corruption Probability:} Similarly as before, we generate a rank two matrix ($r=2$) of size $n$, with observation probability $p_0=0.9$ and without any adversarial noise ($d=0$). For any fixed number $n$, if we start from $\tau=0$ and increase $\tau$, at some point, the frequency of success jumps from one to zero. Fig.~\ref{fig:tau} illustrates how the phase transition $\tau$ changes as the size of the matrix increases. This experiment shows that higher probability of corruptions can be tolerated as the size of the matrix increases as predicted by the theorem.\\

\item [(3)] {\bf Maximum Tolerable Adversarial/Deterministic Noise:} Similarly as before, we generate a rank two matrix ($r=2$), of size $n$, with observation probability $p_0=0.5$ and corruption probability $\tau=0.1$. We add the adversarial noise in the form of a $d \times d$ block of $1$'s lying on the diagonal of the original matrix. Notice that potentially it is a hard case to recover the low-rank matrix since all the adversarial corruptions are burst as oppose to be spread over the matrix (Bernoulli corruptions). We find the maximum possible $d$ such that the frequency of success to goes from $1$ to $0$ (phase transition). In Fig.~\ref{fig:d}, we plot this phase transition $d$ versus the size of the matrix and as the deterministic theorem predicts, it grows linearly in $n$.\\

\end{itemize}

\begin{figure}[!t]
\centering
\includegraphics[width=3in]{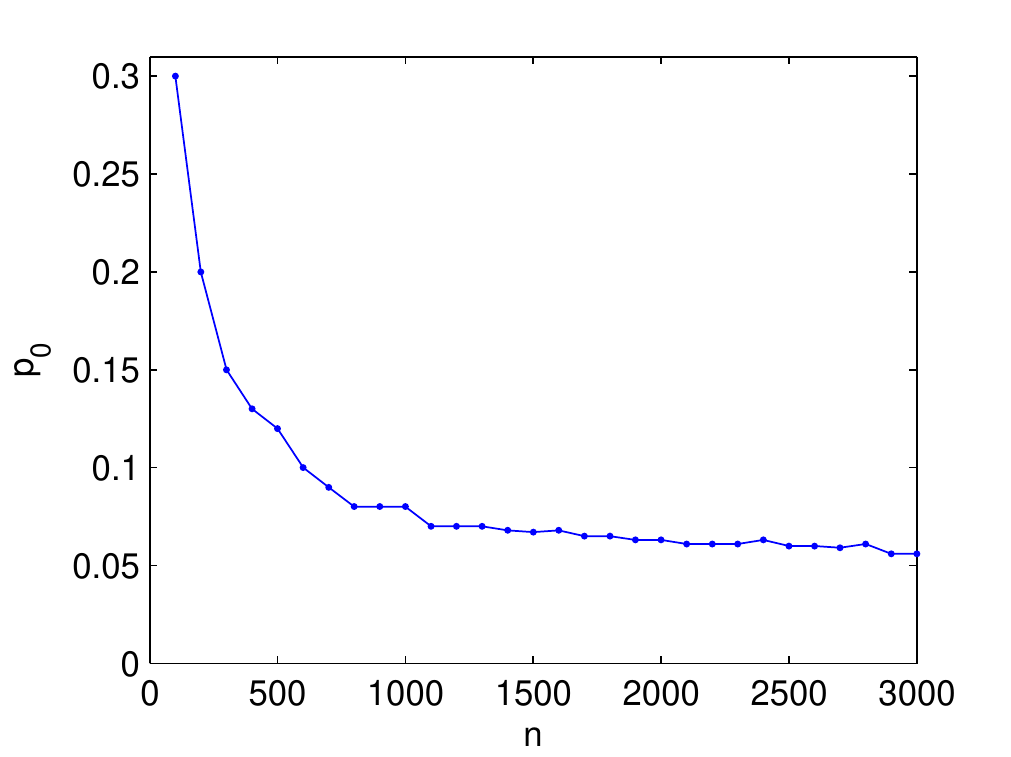}
\caption{For a rank two matrix of size $n$, with probability of corruption $\tau=0.1$ and no adversarial noise ($d=0$), we plot the minimum probability of observation $p_0$ required for successful recovery of the low-rank matrix as $n$ gets larger.}
\label{fig:rho}
\end{figure}

\begin{figure}[!t]
\centering
\includegraphics[width=3in]{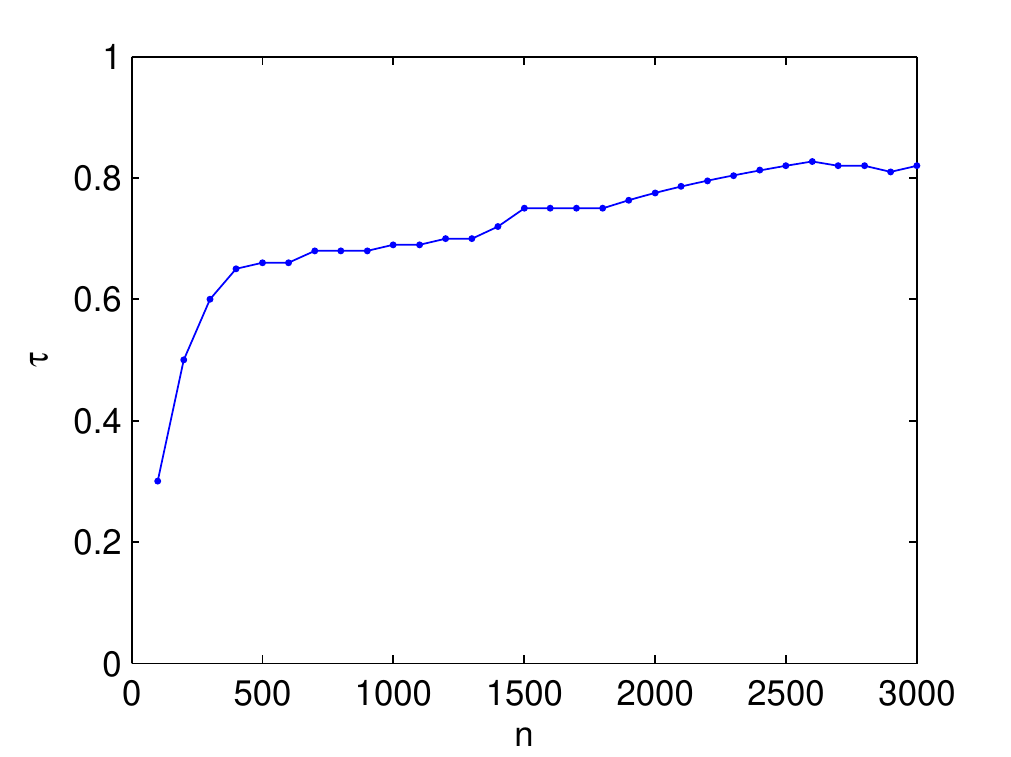}
\caption{For a rank two matrix of size $n$, with probability of observation $p_0=0.9$ and no adversarial noise ($d=0$), we plot the maximum probability of corruption $\tau$ tolerable for successful recovery of the low-rank matrix as $n$ gets larger.}
\label{fig:tau}
\end{figure}

\begin{figure}[!t]
\centering
\includegraphics[width=3in]{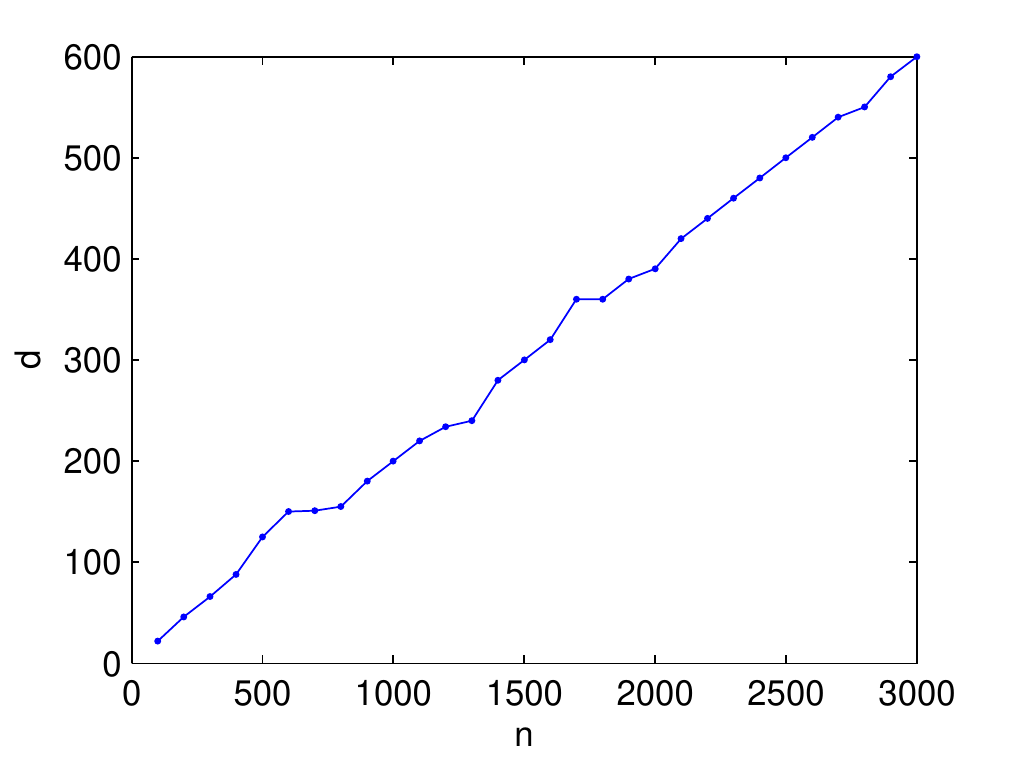}
\caption{For a rank two matrix of size $n$, with probability of observation $p_0=0.5$ and probability of corruption $\tau=0.1$, and with adversarial/deterministic noise in the form of a $d\times d$ block of $1$'s lying on the diagonal of the matrix, we plot the maximum size of the adversarial noise $d$ tolerable for successful recovery of the low-rank matrix as $n$ gets larger.}
\label{fig:d}
\end{figure}

%\input{discussion}

% use section* for acknowledgement

% trigger a \newpage just before the given reference
% number - used to balance the columns on the last page
% adjust value as needed - may need to be readjusted if
% the document is modified later
%\IEEEtriggeratref{8}
% The "triggered" command can be changed if desired:
%\IEEEtriggercmd{\enlargethispage{-5in}}

% references section

% can use a bibliography generated by BibTeX as a .bbl file
% BibTeX documentation can be easily obtained at:
% http://www.ctan.org/tex-archive/biblio/bibtex/contrib/doc/
% The IEEEtran BibTeX style support page is at:
% http://www.michaelshell.org/tex/ieeetran/bibtex/
\bibliographystyle{IEEEtran}
% argument is your BibTeX string definitions and bibliography database(s)
\bibliography{IEEEabrv,MSBIB}

\newpage

\appendix

%\subsection{Technical Lemmas for the Unified Guarantees}
%\label{app:avg}
%\label{sub:techlemma}

Here we provide several technical lemmas that is needed in the proof of the unified guarantees. We first state the non-commutative Bernstein inequality, which is useful in the sequel. The version presented below is first proved in \cite{Recht,Gro}
and later sharpened in \cite{Tropp}.
\begin{lemma}
\label{lem:bernstein}\cite[Remark 6.3]{Tropp} Consider
a finite sequence $\{Z_{k}\}$ of independent, random $n_{1}\times n_{2}$
matrices that satisfy the assumption $\mathbb{E}Z_{k}=0$ and $\left\Vert Z_{k}\right\Vert \le D$
almost surely. Let $\sigma^{2}=\max\left\{ \left\Vert \sum_{k}\mathbb{E}\left[Z_{k}Z_{k}^{\top}\right]\right\Vert ,\;\left\Vert \sum_{k}\mathbb{E}\left[Z_{k}^{\top}Z_{k}\right]\right\Vert \right\} $.
Then for all $t\ge0$ we have
\begin{eqnarray}
 \mathbb{P}\left[\left\Vert \sum Z_{k}\right\Vert \ge t\right]
 & \le & (n_1+n_2)\exp\left( -\frac{t^2}{2\sigma^2 + \frac{2}{3}Dt} \right) \\ \label{eq:bernstein0}
 & \le &
     \begin{cases}
         (n_{1}+n_{2})\exp\left(-\frac{3t^{2}}{8\sigma^{2}}\right),
 \quad\textrm{for }t\le\frac{\sigma^{2}}{D}.\\
 (n_{1}+n_{2})\exp\left(-\frac{3t^{}}{8D}\right),
 \quad\textrm{for }t\ge\frac{\sigma^{2}}{D}.
     \end{cases}
 \label{eq:bernstein}
\end{eqnarray}
\end{lemma}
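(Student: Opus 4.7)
The plan is to prove this as the standard matrix Bernstein inequality in the form established by Tropp. My first step would be to reduce the rectangular case to a Hermitian one via the self-adjoint dilation
$$H(Z) \;:=\; \begin{pmatrix} 0 & Z \\ Z^\top & 0 \end{pmatrix},$$
an $(n_1+n_2)\times(n_1+n_2)$ Hermitian matrix satisfying $\|H(Z)\| = \|Z\|$ and
$$H(Z)^2 \;=\; \begin{pmatrix} ZZ^\top & 0 \\ 0 & Z^\top Z \end{pmatrix}.$$
Linearity of the dilation, together with the block structure above, gives $\bigl\|\sum_k \mathbb{E}\,H(Z_k)^2\bigr\| = \sigma^2$ exactly as in the statement. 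Since $\|{\textstyle\sum_k Z_k}\| = \lambda_{\max}\bigl({\textstyle\sum_k H(Z_k)}\bigr)$, it suffices to establish a one-sided tail bound on $\lambda_{\max}$ of the Hermitian sum.

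For the Hermitian problem I would use the matrix Laplace transform method: for every $\theta > 0$,
$$\mathbb{P}\Bigl[\lambda_{\max}\bigl(\textstyle\sum_k H(Z_k)\bigr) \ge t\Bigr] \;\le\; e^{-\theta t}\,\mathbb{E}\,\mathrm{tr}\,\exp\!\Bigl(\theta\textstyle\sum_k H(Z_k)\Bigr).$$
The heart of the proof, and its main obstacle, is that the matrix MGF does not factorize across independent summands. I would handle this by invoking Tropp's master trace inequality, which is derived from Lieb's concavity of $A \mapsto \mathrm{tr}\exp(L + \log A)$:
$$\mathbb{E}\,\mathrm{tr}\,\exp\!\Bigl(\theta\textstyle\sum_k H(Z_k)\Bigr) \;\le\; \mathrm{tr}\,\exp\!\Bigl(\textstyle\sum_k \log\mathbb{E}\,e^{\theta H(Z_k)}\Bigr).$$
Once this decoupling is in place, the remainder of the argument is calculus.

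Next I would bound each cumulant. Because $\mathbb{E}\,H(Z_k) = 0$ and $\|H(Z_k)\| \le D$, a Taylor expansion of $e^{\theta H(Z_k)}$ together with operator monotonicity of the matrix logarithm yields, for $0 < \theta < 3/D$, the standard Bernstein-type cumulant bound
$$\log\mathbb{E}\,e^{\theta H(Z_k)} \;\preceq\; \frac{\theta^{2}/2}{1-\theta D/3}\,\mathbb{E}\,H(Z_k)^{2}.$$
Summing this over $k$, applying the trace exponential's monotonicity, and using $\mathrm{tr}\,e^{L} \le (n_1+n_2)\,e^{\lambda_{\max}(L)}$ on the ambient dilated space controls the right-hand side of the Laplace bound by
$$(n_1+n_2)\,\exp\!\Bigl(\tfrac{\theta^2\sigma^2/2}{1-\theta D/3} - \theta t\Bigr).$$

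Finally I would optimize over $\theta$. The choice $\theta = t/(\sigma^{2} + Dt/3)$ yields the unified Bernstein-form bound $(n_1+n_2)\exp\!\bigl(-\tfrac{t^{2}}{2\sigma^{2} + \frac{2}{3}Dt}\bigr)$, which is (\ref{eq:bernstein0}). The two-regime form (\ref{eq:bernstein}) then follows by inspecting which term in the denominator dominates: for $t \le \sigma^{2}/D$ the $2\sigma^{2}$ term dominates and the exponent simplifies to $-3t^{2}/(8\sigma^{2})$, while for $t \ge \sigma^{2}/D$ the $\tfrac{2}{3}Dt$ term dominates and the exponent simplifies to $-3t/(8D)$. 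The hard work is entirely concentrated in Tropp's master inequality; I would simply cite that result (as the statement itself does) rather than reprove Lieb concavity.
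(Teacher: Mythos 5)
Your proposal is correct and is precisely the standard argument behind the cited result: the paper itself offers no proof of this lemma, simply invoking \cite[Remark 6.3]{Tropp}, and your chain of dilation, matrix Laplace transform, Lieb/master trace inequality, Bernstein cumulant bound, and the choice $\theta = t/(\sigma^2 + Dt/3)$ is exactly how that reference establishes it. The final two-regime simplification via $2\sigma^2 + \tfrac{2}{3}Dt \le \tfrac{8}{3}\max\{\sigma^2, Dt\}$ is also correct.
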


W.L.O.G. we only consider the case $n_1=n_2=n$. Recall that we have defined $\alpha=\sqrt{\frac{\mu r d}{n_1}}+\sqrt{\frac{\mu r d}{n_2}} + \sqrt{\frac{\mu r d}{\max\{n_1,n_2\}}}=3\sqrt{\frac{\mu r d}{n}}$. Under the assumptions of Theorem \ref{thm:random}, $\alpha$ is a sufficiently small constant bounded away from $1$. We will make use of the following estimates $\left\Vert \PP_{\T}(e_{i}e_{j}^{\top})\right\Vert _{F}^{2} \le \frac{2\mu r}{n},\;\forall i,j$,  which follow from the incoherence assumptions of $U$ and $V$.

We start with the proof of Lemma \ref{lem:OP}. We need one simple lemma for the deterministic set $\Gammad^{c}$.
\begin{lemma}
\label{lem:det_Fro_bound}
For any matrix $Z\in \T$,
we have
\[
\left\Vert \PP_{\Gammad^{c}}(Z)\right\Vert _{F}\le \alpha\left\Vert Z\right\Vert _{F}
\]
\end{lemma}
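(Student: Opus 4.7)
The plan is to exploit the explicit form of $\PP_{\T}$ together with the fact that $\Gammad^{c}$ has at most $d$ entries in any row or column. Since $Z\in\T$, we have $Z = \PP_{\T}(Z) = UU^{\top}Z + ZVV^{\top} - UU^{\top}ZVV^{\top}$. Applying the triangle inequality for the Frobenius norm, it suffices to bound each of the three pieces $\|\PP_{\Gammad^{c}}(UU^{\top}Z)\|_F$, $\|\PP_{\Gammad^{c}}(ZVV^{\top})\|_F$, and $\|\PP_{\Gammad^{c}}(UU^{\top}ZVV^{\top})\|_F$ by the corresponding term in $\alpha$.

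For the first piece, I would write $|(UU^{\top}Z)_{ij}| = |\langle U^{\top}e_i,\, U^{\top}Ze_j\rangle| \le \|U^{\top}e_i\|\cdot\|U^{\top}Ze_j\| \le \sqrt{\mu r/n_1}\,\|U^{\top}Ze_j\|$ by Cauchy--Schwarz and incoherence. Then
\begin{equation*}
\|\PP_{\Gammad^{c}}(UU^{\top}Z)\|_F^{2} \;\le\; \frac{\mu r}{n_1}\sum_{(i,j)\in\Gammad^{c}}\|U^{\top}Ze_j\|^{2} \;\le\; \frac{\mu r}{n_1}\cdot d\sum_j\|U^{\top}Ze_j\|^{2} \;\le\; \frac{d\mu r}{n_1}\|Z\|_F^{2},
\end{equation*}
where the middle inequality uses that each column $j$ contributes at most $d$ indices $i$ with $(i,j)\in\Gammad^{c}$, and the last uses $\|U^{\top}Z\|_F \le \|Z\|_F$. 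The second piece is handled symmetrically, bounding entries by $\sqrt{\mu r/n_2}\,\|V^{\top}Z^{\top}e_i\|$ and summing at most $d$ per row, giving $\|\PP_{\Gammad^{c}}(ZVV^{\top})\|_F \le \sqrt{d\mu r/n_2}\,\|Z\|_F$.

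The slightly trickier piece is the cross term $UU^{\top}ZVV^{\top}$, where I expect the main subtlety to lie. Writing $G=U^{\top}ZV$, one has $(UGV^{\top})_{ij}=(U^{\top}e_i)^{\top}G(V^{\top}e_j)$, and there are two natural Cauchy--Schwarz bounds: $|(UGV^{\top})_{ij}|\le\sqrt{\mu r/n_1}\,\|GV^{\top}e_j\|$, or $|(UGV^{\top})_{ij}|\le\sqrt{\mu r/n_2}\,\|U^{\top}e_i^{\top}G\|$. Each, combined with the at-most-$d$-per-column (resp.\ per-row) property and $\|GV^{\top}\|_F=\|G\|_F\le\|Z\|_F$, yields $\|\PP_{\Gammad^{c}}(UU^{\top}ZVV^{\top})\|_F^{2}\le (d\mu r/n_k)\|Z\|_F^{2}$ for $k=1,2$. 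Taking whichever bound is tighter gives $\sqrt{d\mu r/\max(n_1,n_2)}\,\|Z\|_F$, matching the third term of $\alpha$. Summing the three bounds completes the proof. The main obstacle is recognizing the trick for the cross term: a single Cauchy--Schwarz step is not tight enough to yield $\max(n_1,n_2)$ in the denominator, and one must carry out the sharper bound and take the minimum over the two directions.
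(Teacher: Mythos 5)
Your proof is correct, and it reaches the stated bound with exactly the three terms of $\alpha$; the only difference from the paper is the choice of decomposition of $Z$. The paper writes $Z = UX^{\top} + U^{\perp}YV^{\top}$ with $U^{\perp}$ orthogonal to $U$, so there are only two pieces and no cross term; each piece is handled by the column-wise estimate $\|\PP_{\Gammad^{c}}(UX^{\top})e_j\|_2 \le \sqrt{d}\,\|UX^{\top}e_j\|_{\infty} \le \sqrt{\mu r d/n}\,\|X^{\top}e_j\|_2$, and the pieces are recombined using the orthogonality identity $\|Z\|_F^2 = \|X\|_F^2 + \|Y\|_F^2$. You instead use the three-term projection formula $Z = UU^{\top}Z + ZVV^{\top} - UU^{\top}ZVV^{\top}$, which forces you to handle the cross term, and you do so correctly by taking the better of the two Cauchy--Schwarz directions to land on $\sqrt{\mu r d/\max(n_1,n_2)}$. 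The underlying mechanism is identical in both arguments (incoherence of the factors plus the at-most-$d$-per-row/column property of $\Gammad^{c}$), so this is a stylistic rather than substantive difference. What each buys: the paper's two-term orthogonal split avoids the cross term and the $\max(n_1,n_2)$ subtlety entirely (at the price of a harmless $\sqrt{2}$ from the triangle inequality, absorbed since each piece is really bounded by $\sqrt{\mu rd/n}$ rather than $\alpha$); your version mirrors the $\ell_\infty$ computation the paper already performs in Lemma~\ref{lemma:zero-intersection}, so it makes the two lemmas visibly parallel and produces the constant $\alpha$ term by term.
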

\begin{proof}
Since $Z \in \T$, $Z=UX^{\top}+U^{\bot}YV^{\top}$ for some $X,Y\in\mathbb{R}^{n\times r}$. For $1\le j\le n$, incoherence of $B^*$ gives
\begin{eqnarray*}
\left\Vert UX^{\top}e_{j}\right\Vert _{\infty}  = \max_{i}\left|e_{i}^{\top}UX^{\top}e_{j}\right| \le  \sqrt{\frac{\mu r}{n}}\left\Vert X^{\top}e{}_{j}\right\Vert _{2}.
\end{eqnarray*}
Therefore, we have
\begin{eqnarray*}
\left\Vert \PP_{\Gammad^{c}}(UX^{\top})e_{j}\right\Vert _{2}\le\sqrt{d}\left\Vert UX^{\top}e_{j}\right\Vert _{\infty}\le\alpha\left\Vert X^{\top}e_{j}\right\Vert _{2}.
\end{eqnarray*}
It follows that
\begin{eqnarray*}
 &  & \left\Vert \PP_{\Gammad^{c}}(UX^{\top})\right\Vert _{F}^{2}
  =  \sum_{j}\left\Vert \PP_{\Gammad^{c}}(UX^{\top})e_{j}\right\Vert _{2}^{2}\\
 & \le & \sum_{j}\alpha^{2}\left\Vert X^{\top}e_{j}\right\Vert _{2}^{2}
  =  \alpha^{2}\left\Vert X^{\top}\right\Vert _{F}^{2}
 \end{eqnarray*}
Similarly, we have $\left\Vert \PP_{\Gammad^{c}}(U^{\bot}YV^{\top})\right\Vert _{F}^{2}\le\alpha^{2}\left\Vert Y \right\Vert _{F}^{2}$.
The lemma then follows from the triangular inequality and $\left\Vert Z \right\Vert_{F}^2 = \left\Vert X \right\Vert_{F}^2 + \left\Vert Y \right\Vert_{F}^2$.
\end{proof}

We now turn to the proof of Lemma \ref{lem:OP}. In fact, we will prove a slightly more general result as below.

\begin{lemma}
\label{lem:OP1}
Suppose $\Omega_{0}$ is a set of indices obeying $\Omega_{0}\sim$Ber$(p)$, and $\Gammao$ is a fixed set of indices.
\begin{enumerate}
  \item For any $\beta>1$, we have
    \[
    \left\Vert p^{-1}\PP_{\T}\PP_{\Omega_{0}\cap\Gammao}\PP_{\T} - \PP_{\T}\PP_{\Gammao}\PP_{\T} \right\Vert \le\epsilon_{1}
    \]
    with probability at least $1-2n^{2-2\beta}$
    %provided $\epsilon_1 \ge \max\left\{\sqrt{\frac{32\beta\mu r \log n }{3np}},\frac{32\beta \mu r \log n }{3np}\right\}$.
    provided $1>\epsilon_1 \ge \sqrt{\frac{32\beta\mu r \log n }{3np}}$.
  \item If in addition, $\Gammao=\Gammad$, where $\Gammad$ satisfies the assumptions in Theorem \ref{thm:random}, then
    \[
    \left\Vert p^{-1}\PP_{\T}\PP_{\Omega_{0}\cap\Gammad}\PP_{\T} - \PP_{\T} \right\Vert \le\epsilon_{1} + \alpha
    \]
    with the same probability.
\end{enumerate}
\end{lemma}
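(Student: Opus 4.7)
\textbf{Proof plan for Lemma \ref{lem:OP1}.}

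My approach for Part 1 is to apply the non-commutative Bernstein inequality (Lemma \ref{lem:bernstein}) to an operator-valued sum, along the lines of \cite[Theorem 4.1]{CanRec}, but restricting the index set to $\Gammao$. For each $(i,j)\in\Gammao$ define the self-adjoint operator on $\mathbb{R}^{n\times n}$ by
\begin{equation*}
\mathcal{X}_{ij}(M) = \left(p^{-1}\mathbb{1}\!\left[(i,j)\in\Omega_0\right]-1\right)\,\bigl\langle \PP_\T(e_i e_j^\top),\;M\bigr\rangle\,\PP_\T(e_i e_j^\top).
\end{equation*}
By construction, $\sum_{(i,j)\in\Gammao}\mathcal{X}_{ij}=p^{-1}\PP_\T\PP_{\Omega_0\cap\Gammao}\PP_\T-\PP_\T\PP_{\Gammao}\PP_\T$, and each $\mathcal{X}_{ij}$ has zero mean and is independent across $(i,j)$. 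Using the incoherence bound $\|\PP_\T(e_ie_j^\top)\|_F^2\le \frac{2\mu r}{n}$, the almost-sure operator-norm bound is $D \le p^{-1}\cdot \frac{2\mu r}{n}$, and a direct computation of $\mathbb{E}[\mathcal{X}_{ij}^2]$ gives
\begin{equation*}
\sum_{(i,j)\in\Gammao}\mathbb{E}[\mathcal{X}_{ij}^2]\;\preceq\;\frac{2\mu r}{np}\,\PP_\T\PP_{\Gammao}\PP_\T\;\preceq\;\frac{2\mu r}{np}\,\PP_\T,
\end{equation*}
so $\sigma^2\le \frac{2\mu r}{np}$. Since $\sigma^2/D$ is a constant of order $1$ and $\epsilon_1<1$, we are in the quadratic regime of \eqref{eq:bernstein}. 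Plugging in $t=\epsilon_1$ and requiring the exponent to dominate $(2\beta-1)\log n$ yields the stated tail bound $2n^{2-2\beta}$ precisely when $\epsilon_1\ge \sqrt{32\beta\mu r\log n/(3np)}$.

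For Part 2, I use the triangle inequality to split
\begin{equation*}
\bigl\|p^{-1}\PP_\T\PP_{\Omega_0\cap\Gammad}\PP_\T-\PP_\T\bigr\|\;\le\;\bigl\|p^{-1}\PP_\T\PP_{\Omega_0\cap\Gammad}\PP_\T-\PP_\T\PP_{\Gammad}\PP_\T\bigr\|+\bigl\|\PP_\T\PP_{\Gammad}\PP_\T-\PP_\T\bigr\|.
\end{equation*}
The first term is $\le \epsilon_1$ by Part 1 with $\Gammao=\Gammad$. For the second term, I write $\PP_\T\PP_{\Gammad}\PP_\T-\PP_\T=-\PP_\T\PP_{\Gammad^c}\PP_\T$ and apply Lemma \ref{lem:det_Fro_bound}: for any $M\in\mathbb{R}^{n\times n}$, $\PP_\T(M)\in\T$, hence $\|\PP_{\Gammad^c}\PP_\T(M)\|_F\le\alpha\|\PP_\T(M)\|_F\le\alpha\|M\|_F$, giving operator-norm $\le\alpha$. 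Combining bounds yields $\epsilon_1+\alpha$. To recover Lemma \ref{lem:OP} one specializes $\beta$ so that $\epsilon_1\le \frac{1}{3}-\alpha$, using that under the hypotheses of Theorem \ref{thm:random} $\alpha=3\sqrt{\mu r d/n}$ is a small absolute constant.

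The main obstacle is purely bookkeeping: verifying that $\sigma^2/D\asymp 1$ so that Bernstein operates in the quadratic regime, and tracking the constant $32\beta/3$ through the computation so that the stated threshold for $\epsilon_1$ comes out exactly. There is no novel idea beyond the Candes--Recht template; the only new ingredient is restricting to $\Gammao$ (which only shrinks the sum and the variance proxy) and peeling off the deterministic $\PP_{\Gammad^c}$ correction via Lemma \ref{lem:det_Fro_bound}.
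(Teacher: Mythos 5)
Your proposal is correct and follows essentially the same route as the paper: decompose $p^{-1}\PP_\T\PP_{\Omega_0\cap\Gammao}\PP_\T-\PP_\T\PP_{\Gammao}\PP_\T$ into independent zero-mean rank-one self-adjoint operators indexed by $(i,j)\in\Gammao$, bound $D$ and $\sigma^2$ by $2\mu r/(np)$ via incoherence, apply the matrix Bernstein inequality in the quadratic regime (with prefactor $2n^2$ giving the $2n^{2-2\beta}$ tail), and obtain Part 2 by the triangle inequality together with Lemma \ref{lem:det_Fro_bound}, which gives $\left\Vert\PP_\T\PP_{\Gammad}\PP_\T-\PP_\T\right\Vert\le\alpha$. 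The only differences are cosmetic (you phrase the kernel as $\langle\PP_\T(e_ie_j^\top),M\rangle$ rather than $\langle e_ie_j^\top,Z\rangle$ for $Z\in\T$, and you state the second-moment bound in operator-PSD form), and your bookkeeping of the regime condition and the constant $32\beta/3$ matches the paper's.
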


%\textbf{Do we really need this notation $\RR_{\Omega_{0}}$?}

\begin{proof}
%(of Lemma \ref{lem:OP1}.)
We will use Lemma \ref{lem:bernstein} to bound the operator norm of the random component $p^{-1}\PP_{\T}\PP_{\Omega_{0}\cap\Gammao}\PP_{\T}-\PP_{\T}\PP_{\Gammao}\PP_{\T}$. To this end, we need to write the random component as a sum of zero-mean, independent random variables, and then show that each of them is bounded almost surely and their sum has small second moment. Now for the details. For $(i,j)\in\Gammao$, define the indicator random variables $\delta_{ij}=\mathbf{1}_{\{(i,j)\in\Omega_{0}\cap \Gammao\}}$;
so $\delta_{ij}$ equals one with probability $p$ and zero otherwise, and is independent of all others.
For any $Z\in \T$, observe that $Z_{i,j}=\left\langle e_{i}e_{j}^{\top},\; Z\right\rangle $
for $(i,j)\in\Gammao$, and thus
\begin{eqnarray*}
 &  & p^{-1}\PP_{\T}\PP_{\Omega_{0}\cap\Gammao}\PP_{\T}Z-\PP_{\T}\PP_{\Gammao}\PP_{\T}Z\\
 & = & \sum_{(i,j)\in\Gammao}\left(p^{-1}\delta_{ij}-1\right)\left\langle e_{i}e_{j}^{\top},\; Z\right\rangle \PP_{\T}(e_{i}e_{j})^{\top}\\
 & \triangleq & \sum_{(i,j)\in\Gammao}\mathcal{S}_{ij}(Z).
\end{eqnarray*}
Here $\mathcal{S}_{ij}:\;\mathbb{R}^{n\times n}\mapsto\mathbb{R}^{n\times n}$
is a self-adjoint random operator with $\mathbb{E}\left[\mathcal{S}_{ij}\right]=0$.
To use the non-commutative Bernstein inequality, we need to bound $\left\Vert \mathcal{S}_{ij}\right\Vert $,
and $\left\Vert \mathbb{E}\left[\sum_{(i,j)\in\Gammao}\mathcal{S}_{ij}{}^{2}\right]\right\Vert $.
To this end, we have
\begin{eqnarray*}
\left\Vert \mathcal{S}_{ij}\right\Vert  & = & \sup_{\left\Vert Z\right\Vert _{F}=1}\left\Vert \left(p^{-1}\delta_{ij}-1\right)\left\langle \PP_{\T}(e_{i}e_{j}^{\top}),\; Z\right\rangle \PP_{\T}(e_{i}e_{j})^{\top}\right\Vert \\
 & \le & \sup_{\left\Vert Z\right\Vert _{F}=1}p^{-1}\left\Vert \PP_{\T}(e_{i}e_{j}^{\top})\right\Vert _{F}^{2}\left\Vert Z\right\Vert _{F}
 \le \frac{2\mu r}{np}
\end{eqnarray*}
On the other hand, for any $Z\in \T$ we have $\mathcal{S}_{ij}^{2}(Z)=\left(p^{-1}\delta_{ij}-1\right)^{2}\left\langle Z_{i,j}\PP_{\T}(e_{i}e_{j})^{\top},\; e_{i}e_{j}^{\top}\right\rangle \PP_{\T}(e_{i}e_{j}^{\top})$.
Therefore
\begin{eqnarray*}
 &  & \left\Vert \mathbb{E}\left[\sum_{(i,j)\in\Gammao}\mathcal{S}_{ij}^{2}(Z)\right]\right\Vert _{F}\\
 & = & \left(p^{-1}-1\right)\left\Vert \sum_{(i,j)\in\Gammao}\left\Vert \PP_{\T}(e_{i}e_{j})^{\top}\right\Vert _{F}^{2}Z_{i,j}\PP_{\T}(e_{i}e_{j}^{\top})\right\Vert _{F}\\
 & \le & \left(p^{-1}-1\right)\left\Vert \sum_{(i,j)\in\Gammao}\left\Vert \PP_{\T}(e_{i}e_{j})^{\top}\right\Vert _{F}^{2}Z_{i,j}(e_{i}e_{j}^{\top})\right\Vert _{F}\\
 & \le & \left(p^{-1}-1\right)\frac{2\mu r}{n}\left\Vert \sum_{(i,j)\in\Gammao}Z_{i,j}(e_{i}e_{j}^{\top})\right\Vert _{F}\\
 & = & \left(p^{-1}-1\right)\frac{2\mu r}{n}\left\Vert \PP_{\Gammao}(Z)\right\Vert _{F}
  \le  \left(p^{-1}-1\right)\frac{2\mu r}{n}\left\Vert Z\right\Vert _{F},
\end{eqnarray*}
which means $\left\Vert \mathbb{E}\left[\sum_{(i,j)\in\Gammad}\mathcal{S}_{ij}^{2}\right]\right\Vert \le\frac{2\mu r}{np}$.
When $\epsilon_1 \ge \max\left\{\sqrt{\frac{32\beta\mu r \log n }{3np}},\frac{32\beta \mu r \log n }{3np}\right\}$, we apply Lemma \ref{lem:bernstein} and obtain
\begin{eqnarray*}
\mathbb{P}\left[\left\Vert \sum\mathcal{S}_{ij}^{2}\right\Vert \ge\epsilon_{1}\right]
 & \le & 2n^{2-2\beta}.
\end{eqnarray*}
Therefore, $\left\Vert p^{-1}\PP_{\T}\PP_{\Omega_{0}\cap\Gammao}\PP_{\T}-\PP_{\T}\PP_{\Gammao}\PP_{\T}\right\Vert <\epsilon_{1}$
w.h.p., which proves the first part of the lemma. On the other hand, when $\Gammao=\Gammad$, Lemma \ref{lem:det_Fro_bound} gives
\begin{eqnarray*}
\left\Vert \PP_{\T}\PP_{\Gammad}\PP_{\T}-\PP_{\T}\right\Vert  & = & \max_{Z:\left\Vert Z\right\Vert _{F}=1}\left\Vert \left(\PP_{\T}\PP_{\Gammad}\PP_{\T}-\PP_{\T}\right)Z\right\Vert _{F}\\
 & \le & \max_{Z:\left\Vert Z\right\Vert _{F}=1}\alpha\left\Vert \PP_{\T}Z\right\Vert _{F}
  \le  \alpha.
\end{eqnarray*}
The second part of  lemma then follows from the triangular inequality.
\end{proof}

The next three lemmas bound the norms of certain random matrices. Their proofs follow the same spirit as Lemma \ref{lem:OP1} by decomposing the random component into the sum of independent, bounded variables with small second moments, and then invoking Lemma \ref{lem:bernstein}. The following lemma is a generalization of \cite[Theorem 6.3]{CanRec}.
\begin{lemma}
\label{lem:OP_INF}
Suppose $\Omega_{0}$ is a set of indices obeying
$\Omega_{0}\sim$Ber$(p)$, $\Gammao$ is a fixed set of indices, and $Z$ is a fixed $n\times n$ matrix.
\begin{enumerate}
  \item For any $\beta>1$, we have
    \[
    \left\Vert \frac{1}{p}\PP_{\Omega_{0}\cap\Gammao}Z-\PP_{\Gammao}Z\right\Vert \le \sqrt{\frac{8\beta n\log n}{3p}}\left\Vert \PP_{\Gammad}Z\right\Vert _{\infty}
    \]
    with probability at least $1-2n^{1-\beta}$ provided $p\ge\frac{8\beta\log n}{3n}$.
  \item If in addition, $\Gammao = \Gammad$, where $\Gammad$ satisfies the assumptions in Theorem \ref{thm:random}, we have
    \[
    \left\Vert \frac{1}{p}\PP_{\Omega_{0}}Z-Z\right\Vert \le\left(\sqrt{\frac{8\beta n\log n}{3p}}+ d \right)\left\Vert Z\right\Vert _{\infty}
    \]
    with the same probability.
\end{enumerate}
\end{lemma}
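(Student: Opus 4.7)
The strategy mirrors the proof of Lemma \ref{lem:OP1}: decompose the random part as a sum of independent zero-mean matrices, control the two key parameters (the almost-sure operator-norm bound and the matrix variance), and apply the non-commutative Bernstein inequality from Lemma \ref{lem:bernstein}. For part 1, the natural decomposition is
\[
\frac{1}{p}\PP_{\Omega_{0}\cap\Gammao}Z - \PP_{\Gammao}Z ~=~ \sum_{(i,j)\in\Gammao}(p^{-1}\delta_{ij}-1)Z_{ij}\,e_{i}e_{j}^{\top} ~\triangleq~ \sum_{(i,j)\in\Gammao} S_{ij},
\]
where $\delta_{ij}:= \mathbf{1}_{\{(i,j)\in\Omega_{0}\}}$ are i.i.d.\ Bernoulli$(p)$. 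Each $S_{ij}$ is mean-zero, and since $|p^{-1}\delta_{ij}-1|\le p^{-1}$ we get $\|S_{ij}\|\le p^{-1}\|\PP_{\Gammao}Z\|_{\infty}=:D$.

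The heart of the argument is the variance computation. Using $S_{ij}S_{ij}^{\top} = (p^{-1}\delta_{ij}-1)^{2}|Z_{ij}|^{2}\,e_{i}e_{i}^{\top}$ and $\E[(p^{-1}\delta_{ij}-1)^{2}]\le p^{-1}$, the sum $\sum\E[S_{ij}S_{ij}^{\top}]$ is diagonal with $i$-th entry at most $p^{-1}\sum_{j:(i,j)\in\Gammao}|Z_{ij}|^{2}\le p^{-1}n\|\PP_{\Gammao}Z\|_{\infty}^{2}$; the transposed version $\sum\E[S_{ij}^{\top}S_{ij}]$ is handled symmetrically. Hence $\sigma^{2}\le p^{-1}n\|\PP_{\Gammao}Z\|_{\infty}^{2}$. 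Taking $t = \sqrt{8\beta n\log n/(3p)}\,\|\PP_{\Gammao}Z\|_{\infty}$ makes $3t^{2}/(8\sigma^{2})=\beta\log n$, while the small-$t$ regime condition $t\le \sigma^{2}/D = n\|\PP_{\Gammao}Z\|_{\infty}$ reduces to precisely $p\ge 8\beta\log n/(3n)$, matching the hypothesis. Bernstein then yields failure probability at most $2n\exp(-3t^{2}/(8\sigma^{2})) = 2n^{1-\beta}$.

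Part 2 reduces to part 1 via a deterministic splitting. Reading the target quantity as $\tfrac{1}{p}\PP_{\Omega_{0}\cap\Gammad}Z - Z$ (consistent with how the lemma is invoked in Step 4 Inequality (d)), write
\[
\tfrac{1}{p}\PP_{\Omega_{0}\cap\Gammad}Z - Z ~=~ \bigl(\tfrac{1}{p}\PP_{\Omega_{0}\cap\Gammad}Z - \PP_{\Gammad}Z\bigr) ~-~ \PP_{\Gammad^{c}}Z,
\]
where the first summand is controlled by part 1 applied with $\Gammao=\Gammad$, giving a bound of $\sqrt{8\beta n\log n/(3p)}\,\|Z\|_{\infty}$, and the second obeys $\|\PP_{\Gammad^{c}}Z\|\le d\|Z\|_{\infty}$, using the row/column $d$-sparsity of $\Gammad^{c}$ together with the standard bound $\|M\|\le\sqrt{\|M\|_{1\to1}\|M\|_{\infty\to\infty}}$. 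The triangle inequality assembles the final estimate.

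The single delicate point in this whole plan is the variance calculation: one must notice that the outer-product sums collapse to diagonal matrices whose spectral norms are the row- or column-wise squared-sums of $\PP_{\Gammao}Z$, so that each is uniformly bounded by $n\|\PP_{\Gammao}Z\|_{\infty}^{2}$ rather than, say, the much larger Frobenius squared norm. Once this is in place, the condition $p\ge 8\beta\log n/(3n)$ precisely selects the sub-Gaussian branch of the matrix Bernstein bound, and the rest is routine bookkeeping.
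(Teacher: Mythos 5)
Your proposal is correct and follows essentially the same route as the paper: the identical rank-one decomposition $\sum_{(i,j)\in\Gammao}(p^{-1}\delta_{ij}-1)Z_{ij}e_ie_j^{\top}$, the same diagonal-collapse variance bound $\sigma^2\le p^{-1}n\|\PP_{\Gammao}Z\|_{\infty}^2$, and the same choice of $t$ in the sub-Gaussian branch of the matrix Bernstein inequality. Your handling of part 2 — splitting off $\PP_{\Gammad^c}Z$ and bounding its operator norm by $d\|Z\|_{\infty}$ via the row/column $d$-sparsity — is exactly what the paper does by citing \cite[Proposition 3]{cspw}, and your reading of the intended statement (with $\PP_{\Omega_0\cap\Gammad}$) matches how the lemma is actually invoked.
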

\begin{proof}
For $(i,j)\in \Gammao$ define the random variable $\delta_{ij}=\mathbf{1}_{\{(i,j)\in\Omega_{0}\}}$.
Notice that
\[
\frac{1}{p}\PP_{\Omega_{0}\cap\Gammao}Z-\PP_{\Gammao}Z = \sum_{(i,j)\in\Gammao}(p^{-1}\delta_{ij}-1)Z_{i,j}\left(e_{i}e_{j}^{\top}\right) \triangleq \sum_{(i,j)\in\Gammao}\Xi_{ij}.
\]
Here $\Xi_{ij} \in \R^{n\times n}$ satisfies $\mathbb{E}\left[\Xi_{ij}\right]=0$, $\left\Vert \Xi_{ij}\right\Vert \le p^{-1}\left\Vert \PP_{\Gammao}Z\right\Vert _{\infty}$
and
\begin{eqnarray*}
 & & \left\Vert \mathbb{E}\left[\sum_{(i,j)\in \Gammao}\Xi_{ij}\Xi_{ij}^{\top}\right]\right\Vert = \left(p^{-1}-1\right)\left\Vert \sum_{(i,j)\in\Gammao}Z_{i,j}^{2}e_{i}e_{i}^{\top}\right\Vert \\
 & \le & \left(p^{-1}-1\right)\left\Vert \textrm{diag}\left(\sum_{(1,j)\in\Gammao}Z_{1,j}^{2},\ldots,\sum_{(n,j)\in\Gammao}Z_{n,j}^{2}\right)\right\Vert \\
 & \le & \left(p^{-1}-1\right)n\left\Vert \PP_{\Gammao}Z\right\Vert _{\infty}^{2}\le p^{-1}n\left\Vert \PP_{\Gammao}Z\right\Vert _{\infty}^{2}.
\end{eqnarray*}
A similar calculation yields $\left\Vert \mathbb{E}\left[\sum_{(i,j)\in \Gammao}\Xi_{ij}^{\top}\Xi_{ij}\right]\right\Vert \le p^{-1}n\left\Vert \PP_{\Gammao}Z\right\Vert _{\infty}^{2}$

When $p\ge\frac{8\beta\log n}{3n}$, we apply Lemma \ref{lem:bernstein}
and obtain
\begin{eqnarray*}
 &  & \mathbb{P}\left[\left\Vert \sum_{(i,j)\in \Gammao}\Xi_{ij}\right\Vert \ge\sqrt{\frac{8\beta n\log n}{3p}}\left\Vert \PP_{\Gammao}Z\right\Vert _{\infty}\right]\\
 & \le & 2n\exp\left(-\frac{3}{8}\cdot\frac{\frac{8\beta n\log n}{3p}\left\Vert \PP_{\Gammao}Z\right\Vert _{\infty}^{2}}{\frac{n}{p}\left\Vert \PP_{\Gammao}Z\right\Vert _{\infty}^{2}}\right)
 \le  2n^{1-\beta}.
 \end{eqnarray*}
Therefore, $\left\Vert \frac{1}{p}\PP_{\Omega_{0}\cap\Gammao}Z-\PP_{\Gammao}Z\right\Vert \le\sqrt{\frac{8\beta n\log n}{3p}}\left\Vert \PP_{\Gammao}Z\right\Vert _{\infty}$
w.h.p., which proves the first part of the lemma. On the other hand, when $\Gammao=\Gammad$, \cite[Proposition 3]{cspw} gives
\begin{eqnarray*}
 &  & \left\Vert \PP_{\Gammad}Z-Z\right\Vert
 = \left\Vert \PP_{\Gammad^{c}}Z\right\Vert
 \le d\left\Vert \PP_{\Gammad^{c}}Z\right\Vert _{\infty}.
 \end{eqnarray*}
The second part of the lemma then follows from the triangle inequality.
\end{proof}

The following lemma is a generalization of \cite[Lemma 3.1]{CanLiMaWri}.
\begin{lemma}
\label{lem:INF}Suppose $\Omega_{0}$ is a set of indices obeying
$\Omega_{0}\sim$Ber$(p)$, $\Gammao$ is a fixed set of indices, and $Z$ is a fixed $n\times n$
matrix in $\T$.
\begin{enumerate}
    \item For any $\beta>1$ and $\epsilon_3<1$, we have
        \[
        \left\Vert \frac{1}{p}\PP_{\T}\PP_{\Omega_{0}\cap\Gammao}\PP_{\T}Z-\PP_{\T}\PP_{\Gammao}\PP_{\T}Z \right\Vert _{\infty} \le \epsilon_3\left\Vert Z\right\Vert _{\infty}
        \]
        with probability at least $1-2n^{2-2\beta}$ provided $p\ge\frac{32\beta\mu r\log n}{3n\epsilon_3^2}$.
    \item If in addition, $\Gammao=\Gammad$, where $\Gammad$ satisfies the assumptions in Theorem \ref{thm:random}, we have
        \[
        \left\Vert \frac{1}{p}\PP_{\T}\PP_{\Omega_{0}}\PP_{\T}Z-Z\right\Vert _{\infty}\le\left(\epsilon_3+\alpha\right)\left\Vert Z\right\Vert _{\infty}
        \]
        with the same probability.
\end{enumerate}
\end{lemma}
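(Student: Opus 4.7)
The plan is to mirror the proof pattern of Lemma \ref{lem:OP_INF} and Lemma \ref{lem:OP1}, but work entry-wise with a scalar Bernstein inequality rather than the operator-valued one, since the quantity to bound is the $\|\cdot\|_\infty$ norm of a matrix. The crucial starting observation is that, because $Z\in\T$, we have $\PP_\T Z=Z$, so for each fixed pair $(a,b)$ the $(a,b)$-entry of the difference can be written as
\[
\Bigl(\tfrac{1}{p}\PP_{\T}\PP_{\Omega_{0}\cap\Gammao}\PP_{\T}Z-\PP_{\T}\PP_{\Gammao}\PP_{\T}Z\Bigr)_{ab}
= \sum_{(i,j)\in\Gammao}(p^{-1}\delta_{ij}-1)\,\bigl(\PP_\T(e_a e_b^\top)\bigr)_{ij}\,Z_{ij},
\]
where $\delta_{ij}=\mathbf{1}\{(i,j)\in\Omega_{0}\}$. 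Each summand is independent, zero-mean and bounded.

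For part~1, I would apply scalar Bernstein to the sum above with a fixed pair $(a,b)$. From the explicit formula $\PP_\T(e_a e_b^\top)=UU^\top e_a e_b^\top + e_a e_b^\top VV^\top - UU^\top e_a e_b^\top VV^\top$ and the incoherence assumptions, each entry of $\PP_\T(e_ae_b^\top)$ is bounded by $2\mu r/n$, giving the almost-sure bound $D=(2\mu r/(np))\|Z\|_\infty$. The variance proxy is controlled via $\sum_{(i,j)}\mathbb{E}[\xi_{ij}^2]\le p^{-1}\|Z\|_\infty^2\|\PP_\T(e_a e_b^\top)\|_F^2\le (2\mu r/(np))\|Z\|_\infty^2$. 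Choosing $t=\epsilon_3\|Z\|_\infty$ and noting $\epsilon_3\le 1$ places us in the quadratic regime ($t\le\sigma^2/D=\|Z\|_\infty$), Bernstein yields a per-entry failure probability $2\exp(-3\epsilon_3^2 np/(16\mu r))\le 2n^{-2\beta}$ under the hypothesis on $p$. A union bound over the $n^2$ pairs $(a,b)$ then gives the claim with probability at least $1-2n^{2-2\beta}$.

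For part~2, reading the statement as the natural analogue of Lemma \ref{lem:OP1}\,(2) (i.e., $\tfrac{1}{p}\PP_\T\PP_{\Omega_0\cap\Gammad}\PP_\T Z-Z$), I would split
\[
\tfrac{1}{p}\PP_\T\PP_{\Omega_0\cap\Gammad}\PP_\T Z - Z
= \Bigl(\tfrac{1}{p}\PP_\T\PP_{\Omega_0\cap\Gammad}\PP_\T Z - \PP_\T\PP_{\Gammad}\PP_\T Z\Bigr)
+ \bigl(\PP_\T\PP_{\Gammad}\PP_\T Z - Z\bigr),
\]
bounding the first bracket by $\epsilon_3\|Z\|_\infty$ via part~1. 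For the second bracket, since $\PP_\T Z=Z$, it equals $-\PP_\T\PP_{\Gammad^c}Z$. I would then bound its $\|\cdot\|_\infty$ norm by the same chain of inequalities used in the proof of Lemma \ref{lemma:zero-intersection}: expand $\PP_\T(M)=UU^\top M+M VV^\top - UU^\top M VV^\top$ with $M=\PP_{\Gammad^c}Z$, bound each entry using $\|U^\top e_i\|,\|V^\top e_j\|\le\sqrt{\mu r/n}$ and $\|Me_j\|_2\le\sqrt{d}\|M\|_\infty$ (valid because $M$ is supported on $\Gammad^c$, which has at most $d$ non-zeros per row and per column), to obtain $\|\PP_\T\PP_{\Gammad^c}Z\|_\infty\le\alpha\|\PP_{\Gammad^c}Z\|_\infty\le\alpha\|Z\|_\infty$. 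The triangle inequality then delivers the $(\epsilon_3+\alpha)\|Z\|_\infty$ bound.

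No step looks genuinely hard: the random part is a textbook scalar Bernstein application once the entry-wise decomposition is written down. The one place to be careful is that the deterministic piece must be controlled in $\|\cdot\|_\infty$, so the Frobenius-norm contraction of Lemma \ref{lem:det_Fro_bound} is the wrong tool; instead the entry-wise argument from Lemma \ref{lemma:zero-intersection} must be used, and I would note this explicitly. A minor notational wrinkle is that part~2 as written drops $\Gammad$ from $\Omega_0\cap\Gammad$; I would flag this as a typo and proceed with the intended version, which is what the Theorem \ref{thm:random} proof actually invokes.
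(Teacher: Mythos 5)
Your proposal is correct and follows essentially the same route as the paper: the same entry-wise decomposition into independent zero-mean summands, the same bounds $D\le\frac{2\mu r}{np}\|Z\|_\infty$ and $\sigma^2\le\frac{2\mu r}{np}\|Z\|_\infty^2$ feeding the quadratic regime of Bernstein, a union bound over the $n^2$ entries, and the deterministic $\alpha$-contraction in $\|\cdot\|_\infty$ (the chain from \eqref{eq:upper_bound}, not the Frobenius bound) for the $\Gammad$ part. Your reading of part 2 as containing $\Omega_0\cap\Gammad$ matches how the lemma is actually invoked and proved in the paper.
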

\begin{proof}
For $(i,j)\in \Gammao$, set $\delta_{ij}=\mathbf{1}_{\{(i,j)\in\Omega_{0}\}}$. Fix $(a,b)\in [n]\times[n]$. Notice that
\begin{eqnarray*}
& &\left(\frac{1}{p}\PP_{\T}\PP_{\Omega_{0}\cap\Gammao}\PP_{\T}Z-\PP_{\T}\PP_{\Gammao}\PP_{\T}Z\right)_{a,b} \\
 & = &\sum_{(i,j)\in\Gammao}\left\langle (p^{-1}\delta_{ij}-1)Z_{i,j}\PP_{\T}(e_{i}e_{j}^{\top}),\; e_{a}e_{b}^{\top}\right\rangle \triangleq\sum_{(i,j)\in\Gammao}\xi_{ij}
\end{eqnarray*}
where $\mathbb{E}\left[\xi_{ij}\right]=0$. For $(i,j)\in\Gammao$,
we have
\begin{eqnarray*}
 \left|\xi_{ij}\right| & \le & p^{-1}\left\Vert \PP_{\T}(e_{i}e_{j}^{\top})\right\Vert _{F}\left\Vert \PP_{\T}(e_{a}e_{b}^{\top})\right\Vert _{F}\left|Z_{i,j}\right|\\
 & \le & \frac{2\mu r}{np}\left\Vert \PP_{\Gammao}Z\right\Vert _{\infty}
\end{eqnarray*}
The second moment is bounded by
\begin{eqnarray*}
 & & \left|\mathbb{E}\left[\sum_{(i,j)\in\Gammao}\xi_{ij}^{2}\right]\right| \\
 & = & \left|\sum_{(i,j)\in\Gammao}\mathbb{E}\left[(p ^{-1}\delta_{ij}-1)^{2}\right]\left\langle \PP_{\T}(e_{i}e_{j}^{\top}),\; e_{a}e_{b}^{\top}\right\rangle ^{2}Z_{i,j}^{2}\right|\\
 & \le & \left(p ^{-1}-1\right)\left\Vert \PP_{\Gammao}Z\right\Vert _{\infty}^{2}\sum_{(i,j)\in\Gammao}\left\langle e_{i}e_{j}^{\top},\; \PP_{\T}(e_{a}e_{b}^{\top})\right\rangle ^{2}\\
 & = & \left(p ^{-1}-1\right)\left\Vert \PP_{\Gammao}Z\right\Vert _{\infty}^{2}\left\Vert \PP_{\Gammao}\PP_{\T}(e_{a}e_{b}^{\top})\right\Vert _{F}^{2}\\
 & \le & \left(p ^{-1}-1\right)\frac{2\mu r}{n}\left\Vert \PP_{\Gammao}Z\right\Vert _{\infty}^{2}
 \le \frac{2\mu r}{np }\left\Vert \PP_{\Gammao}Z\right\Vert _{\infty}^{2}.
\end{eqnarray*}
When $p \ge\frac{32\beta\mu r\log n}{3n\epsilon_3^2}$ and $\epsilon_3<1$, we apply Lemma \ref{lem:bernstein} and obtain
\begin{eqnarray*}
 &  & \mathbb{P}\left[\left|\left(\frac{1}{p}\PP_{\T}\PP_{\Omega_{0}\cap\Gammao}\PP_{\T}Z-\PP_{\T}\PP_{\Gammao}\PP_{\T}Z\right)_{a,b}\right|\ge\epsilon_3\left\Vert \PP_{\Gammao}Z\right\Vert _{\infty}\right]\\
 & \le & 2\exp\left(-\frac{3(\epsilon_3)^{2}\left\Vert \PP_{\Gammao}Z\right\Vert _{\infty}^{2}}{8\cdot\frac{2\mu r}{np}\left\Vert \PP_{\Gammao}Z\right\Vert _{\infty}^{2}}\right)
 \le 2n^{-2\beta}.
 \end{eqnarray*}
Union bound then yields
\begin{eqnarray*}
\left\Vert \frac{1}{p}\PP_{\T}\PP_{\Omega_{0}\cap\Gammao}\PP_{\T}Z-\PP_{\T}\PP_{\Gammao}\PP_{\T}Z\right\Vert _{\infty} \le \epsilon_3\left\Vert \PP_{\Gammao}Z\right\Vert _{\infty}
\end{eqnarray*}
with high probability, which proves the first part of the lemma. On the other hand, when $\Gammao=\Gammad$, by \eqref{eq:upper_bound} we have $\left\Vert \PP_{\T}\PP_{\Gammad}\PP_{\T}Z-Z\right\Vert _{\infty}=\left\Vert \PP_{\T}\PP_{\Gammad^{c}}Z\right\Vert _{\infty}\le\alpha\left\Vert Z\right\Vert _{\infty}$.
The second part of the lemma then follows from triangle inequality.
\end{proof}

The next two lemmas bound $\left\Vert \PP_{\T}\EE\right\Vert _{\infty}$.
\begin{lemma}
\label{lem:PTE_infd} Under the assumption of Theorem \ref{thm:random}, we have
\[
\left\Vert \PP_{\T}\PP_{\Omegad}\EE\right\Vert _{\infty} \le \alpha.
\]
\end{lemma}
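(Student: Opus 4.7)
My plan is to decompose $\PP_\T Z$ for $Z := \PP_{\Omegad}\EE$ using the projection formula $\PP_\T M = UU^\top M + MVV^\top - UU^\top M VV^\top$ and bound each of the three resulting terms entrywise by $\sqrt{\mu r d/n}$, which, via the triangle inequality, yields the desired total of $\alpha = 3\sqrt{\mu r d/n}$. The two key structural facts I will exploit are that $Z$ is supported on $\Omegad$ (so it has at most $d$ nonzero entries per row and per column) and that $\|Z\|_\infty \le 1$ since $\EE = \PP_{\OmegaO}(\sgn(A^*))$ is entrywise a sign.

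For the first term, I would fix an entry $(a,b)$ and write $(UU^\top Z)_{ab} = \sum_{i:(i,b)\in\Omegad}(UU^\top)_{ai} Z_{ib}$. Since at most $d$ indices contribute, Cauchy--Schwarz gives
\[
|(UU^\top Z)_{ab}| \le \sqrt{d}\,\sqrt{\sum_i (UU^\top)_{ai}^2} = \sqrt{d}\,\|U^\top e_a\| \le \sqrt{\mu r d/n},
\]
where the last inequality uses incoherence of $U$. The symmetric argument, using the row-support bound on $\Omegad$ and incoherence of $V$, gives the same bound for $\|ZVV^\top\|_\infty$.

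The cross term $UU^\top ZVV^\top$ is the one where sparsity of $Z$ is not directly exposed by the previous route (since $Z VV^\top$ is typically dense). Here I would rewrite it as $U(U^\top Z V)V^\top$ and bound
\[
|(UU^\top ZVV^\top)_{ab}| \le \|U^\top e_a\|\,\|U^\top Z V\|\,\|V^\top e_b\| \le \tfrac{\mu r}{n}\,\|Z\|,
\]
and then invoke the operator-norm bound $\|Z\| \le d\|Z\|_\infty \le d$ from \cite[Proposition 3]{cspw}, valid deterministically for any matrix with at most $d$ nonzeros per row and column. Under the assumption of Theorem~\ref{thm:random} we have $d\mu r/n \le 1$, so this third bound $d\mu r/n$ is itself $\le \sqrt{\mu r d/n}$, and summing the three pieces yields $\|\PP_\T\PP_{\Omegad}\EE\|_\infty \le 3\sqrt{\mu r d/n} = \alpha$. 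The only mildly delicate step is the cross term: the direct sparse-summation trick used for the first two terms fails, and one must pass through an operator-norm bound on $Z$ that repackages the deterministic $d$-sparsity.
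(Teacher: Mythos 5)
Your proof is correct and takes essentially the same route as the paper: the paper's proof of Lemma~\ref{lem:PTE_infd} simply says to repeat the proof of Lemma~\ref{lemma:zero-intersection}, which is exactly your three-term decomposition of $\PP_\T$ — Cauchy--Schwarz together with the at-most-$d$-per-row/column support of $\Omegad$ for the terms $UU^\top Z$ and $ZVV^\top$, and the bound $\|Z\|\le d\|Z\|_\infty$ from \cite[Proposition 3]{cspw} combined with $\mu r d/n\le 1$ for the cross term, summing to $\alpha$.
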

\begin{proof}
By assumption $\Omegad$ contains at most $d$ entries from each row/column, so repeating the proof of Lemma \ref{lemma:zero-intersection} yields the desired bound.
\end{proof}

\begin{lemma}
\label{lem:PTE_infr} Under the assumption of Theorem \ref{thm:random} and conditioned on $\Omegar$, we have
\[
\left\Vert \PP_{\T}\PP_{\Omegar\backslash\Omegad}\EE\right\Vert _{\infty} \le C\sqrt{\frac{\mu r}{n}p_0 \log n}
%C\max\left\{\frac{\mu r}{n}\log n,\; \sqrt{\frac{\mu r}{n}p_0 \log n}\right\}
\]
with high probability for some constant $C>0$.
\end{lemma}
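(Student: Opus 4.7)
The plan is to fix an entry $(a,b)\in[n]\times[n]$, express $(\PP_{\T}\PP_{\Omegar\backslash\Omegad}\EE)_{a,b}$ as an inner product, and then exploit the random $\pm 1$ signs of $A^*$ on $\Omegar\backslash\Omegad$ via Hoeffding's inequality, followed by a union bound over $(a,b)$. Concretely, using self-adjointness of $\PP_\T$ and the identity $\PP_{\Omegar\backslash\Omegad}\EE = \PP_{\OmegaO\cap(\Omegar\backslash\Omegad)}(\sgn(A^*))$, I would write
\[
(\PP_{\T}\PP_{\Omegar\backslash\Omegad}\EE)_{a,b} = \bigl\langle \PP_{\OmegaO\cap(\Omegar\backslash\Omegad)}\PP_\T(e_a e_b^\top),\; \sgn(A^*)\bigr\rangle,
\]
which, conditional on $\Omegar$, $\OmegaOr$ and $\Omega$, is a weighted sum of independent symmetric $\pm 1$ random variables (since the random-sign hypothesis of Theorem~\ref{thm:random} ensures the signs are independent of all the sampling patterns).

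Hoeffding's inequality then gives, for any $t>0$,
\[
\mathbb{P}\!\left(\bigl|(\PP_{\T}\PP_{\Omegar\backslash\Omegad}\EE)_{a,b}\bigr|>t \,\bigm|\, \OmegaO,\Omegar\right) \le 2\exp\!\left(-\frac{t^{2}}{2\left\Vert \PP_{\OmegaO\cap(\Omegar\backslash\Omegad)}\PP_\T(e_a e_b^\top)\right\Vert_F^{2}}\right).
\]
The key remaining task is a high-probability upper bound on the Frobenius norm in the denominator. Since $\OmegaO\cap(\Omegar\backslash\Omegad)\subseteq \OmegaOr$, I can bound it by $\left\Vert \PP_{\OmegaOr}\PP_\T(e_a e_b^\top)\right\Vert_F^{2} = \left\langle \PP_\T(e_a e_b^\top),\,\PP_\T\PP_{\OmegaOr}\PP_\T(e_a e_b^\top)\right\rangle$. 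Applying the first part of Lemma~\ref{lem:OP1} with $\Omega_0=\OmegaOr$, $\Gammao=[n]\times[n]$, $p=p_0$ and $\epsilon_1=\tfrac{1}{2}$, which is valid because the assumption $p_0\ge \rho_r\mu r\log^6 n/n$ comfortably satisfies the required lower bound on $p$, yields $\left\Vert p_0^{-1}\PP_\T\PP_{\OmegaOr}\PP_\T-\PP_\T\right\Vert \le \tfrac12$ w.h.p., and hence
\[
\left\Vert \PP_{\OmegaO\cap(\Omegar\backslash\Omegad)}\PP_\T(e_a e_b^\top)\right\Vert_F^{2} \le \tfrac{3}{2}p_0\left\Vert \PP_\T(e_a e_b^\top)\right\Vert_F^{2} \le \frac{3\mu r p_0}{n}.
\]

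Plugging this variance proxy into Hoeffding with $t = C\sqrt{\frac{\mu r}{n}p_0\log n}$ makes the right-hand side at most $2n^{-\beta}$ for any desired $\beta$ by choosing $C$ large; a union bound over the $n^2$ choices of $(a,b)$ then yields the stated $\ell_\infty$ bound with high probability. The one subtlety to handle carefully is the order of conditioning: the Lemma~\ref{lem:OP1} estimate and the Hoeffding step depend on different sources of randomness, so I would formally first intersect with the high-probability event of Lemma~\ref{lem:OP1} (which is measurable with respect to $\OmegaOr$ alone), and then condition on $(\OmegaO,\Omegar)$ to apply Hoeffding to the signs. The main obstacle is thus simply the bookkeeping of this two-stage conditioning; once the signs are decoupled from the sampling patterns via the random-sign hypothesis, the rest is a standard Hoeffding-plus-union-bound argument.
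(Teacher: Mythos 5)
Your proposal is correct, but it takes a genuinely different route from the paper's proof. The paper does not condition on the observation pattern: it fixes $\Omegar$, decomposes $\PP_{\T}(E)=UU^{\top}E+EVV^{\top}-UU^{\top}EVV^{\top}$ with $E=\PP_{\Omegar\backslash\Omegad}\EE$, and treats each entry (say of $UU^{\top}E$) as a sum $\sum_i s_i E_{i,1}$ of independent scalars where the weights $s_i=(e_1^\top UU^\top)_i$ are deterministic and bounded by $\mu r/n$ via incoherence, while each $E_{i,1}$ is nonzero with probability $p_0$ (randomness of $\OmegaOr$) and carries a random sign; a scalar Bernstein inequality then yields the variance factor $\frac{\mu r}{n}p_0$ directly, with the condition $p_0\ge \frac{\mu r\log n}{n}$ ensuring the sub-Gaussian regime of Bernstein dominates. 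You instead condition on \emph{all} sampling patterns, keep $\PP_\T$ intact by writing the entry as $\langle \PP_{\OmegaO\cap(\Omegar\backslash\Omegad)}\PP_\T(e_ae_b^\top),\sgn(A^*)\rangle$, apply Hoeffding to the signs alone, and recover the $p_0$ factor in the variance proxy through the operator concentration $\Vert p_0^{-1}\PP_\T\PP_{\OmegaOr}\PP_\T-\PP_\T\Vert\le\tfrac12$ of Lemma~\ref{lem:OP1}. Both arguments are valid under the theorem's hypotheses and give the same bound; yours is essentially the per-entry analogue of the estimate the paper uses for $\Vert\PP_{\OmegaO\cap(\Omegar\backslash\Omegad)}\PP_\T\Vert$ in the proof of inequality (e), and it matches the Hoeffding-plus-conditioning template used for inequality (c), at the cost of invoking the matrix concentration lemma and the two-stage conditioning you correctly flag; the paper's version is more elementary (no $\e$-net or operator bounds needed) but requires the three-term decomposition of $\PP_\T$ and a separate treatment of the Bernstein regimes.
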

\begin{proof}
Set $E=\PP_{\Omegar\backslash\Omegad}\EE$; observe that each entry of $E$ in $(\Omegar\cap\OmegaOd)\backslash\Omegad$ is non-zero with probability $p_{0}$ and has random sign, independent of each other. Since we have
\begin{eqnarray*}
& & \left\Vert \PP_{\T}E\right\Vert _{\infty}  =  \left\Vert \PP_{U}E+P_{V}E-P_{U}\PP_{V}E\right\Vert _{\infty}\\
 & \le & \left\Vert UU^{\top}E\right\Vert _{\infty}+\left\Vert EVV^{\top}\right\Vert _{\infty}+\left\Vert UU^{\top}EVV^{\top}\right\Vert _{\infty},
\end{eqnarray*}
it suffices to bound these three terms.
From the incoherence property of $U$, we know
\begin{eqnarray*}
\left\Vert UU^\top \right\Vert_{\infty} = \max_{i,j} \left\vert e_i^\top UU^\top e_j \right\vert
\le \frac{\mu r}{n},
\end{eqnarray*}
and
\begin{eqnarray*}
\left\Vert e_i^\top UU^\top \right\Vert^{2}
 \le \frac{\mu r}{n},\quad \forall i
\end{eqnarray*}
Now we bound $\left\Vert UU^{\top}E\right\Vert _{\infty}$. For simplicity,
we focus on the $(1,1)$ entry of $\left(UU^{\top}E\right)$ and denote
it as $X$. Set $s^\top = e_1^\top UU^\top$. Observe that $X=\sum_{i:(i,1)\in(\Omegar\cap\OmegaOd)\backslash\Omegad}s_{i} E_{i,1}$ , $E_{i,1}$'s
are i.i.d., with $\mathbb{E}\left[s_{i} E_{i,1}\right]  =  0$ and
\begin{eqnarray*}
\left|s_{i}^\top E_{i,1}\right| & \le & \left|s_{i} \right|\le\frac{\mu r}{n},\quad\textrm{a.s.}\\
\textrm{Var}\left(X\right) & = & \sum_{i:(i,1)\in(\Omegar\cap\OmegaOd) \backslash \Omegad}(s_{i})^{2} p_{0}\le\frac{\mu r}{n}p_0.
\end{eqnarray*}
Standard bernstein inequality \eqref{eq:bernstein0}
thus gives
\[
\mathbb{P}\left[\left|X\right|>t\right] \le 2\exp\left(-\frac{t^2}{2\frac{\mu r}{n}p_0 + \frac{2\mu r}{3n}t}\right).
\]
Under the assumption of Theorem \ref{thm:random}, we can choose $t=C\max\{\frac{\mu r}{n}\log n,\; \sqrt{\frac{\mu r}{n}p_0  \log n} \}$ for some $C$ sufficiently large and apply the union bound to obtain
\[
\left\Vert UU^{\top}E\right\Vert _{\infty} \le C\max\left\{\frac{\mu r}{n}\log n,\; \sqrt{\frac{\mu r}{n}p_0 \log n}\right\},\quad\textrm{w.h.p.}
\]
Similarly, $\left\Vert EVV^{\top}\right\Vert _{\infty}$ is also bounded by the right hand side of the above equation. Finally, denote $w:=VV^{\top}e_1$
and observe that
\begin{eqnarray*}
\left(UU^{\top}EVV^{\top}\right)_{1,1} & = & \sum_{(i,j)\in(\Omegar\cap\OmegaOd)\backslash\Omegad}s_{i}w_{j}E_{i,j}.
\end{eqnarray*}
Then a similar application of Bernstein inequality and the union bound
gives
\[
\left\Vert UU^{\top}EVV^{\top}\right\Vert _{\infty} \le C'\max\left\{\frac{\mu^2 r^2}{n^2}\log n,\; \frac{\mu r}{n} \sqrt{p_0 \log n}\right\},\quad\textrm{w.h.p.}
\]
The lemma follows from observing that $\frac{\mu r}{n}\le 1$ and $p_0\ge \frac{\mu r \log n}{n}$ under the assumptions of Theorem \ref{thm:random}.
\end{proof}

Finally, we prove Lemma \ref{lem:exceptionterm}.
\begin{IEEEproof} (of Lemma \ref{lem:exceptionterm})
Recall that by definition $\Omegar^{c}=\bigcup_{k=1}^{k_{0}}\Omegak$, so
we have
\begin{eqnarray*}
 &  & \lambda\left(\frac{1}{q_{1}}\PP_{\Omegaone\cap\Gammad}-\mathcal{I}\right)\PP_{\T}\PP_{\Omegar\backslash\Omegad}(\EE )\\
 & = & \lambda\left(\PP_{\Gammad}-\mathcal{I}\right)\PP_{\T}\PP_{\Omegar\backslash\Omegad}(\EE )+\lambda\left(\frac{1}{q_{1}}\PP_{\Omegaone}-\mathcal{I}\right)\PP_{\Gammad}\PP_{\T}\PP_{\Omegar\backslash\Omegad}(\EE )\\
 & = & \lambda\left(\PP_{\Gammad}-\mathcal{I}\right)\PP_{\T}\PP_{\Omegar\backslash\Omegad}(\EE )+\lambda\left(\frac{1}{q_{1}}\PP_{\Omegaone}-\mathcal{I}\right)\PP_{\Gammad}\PP_{\T}\PP_{\left(\Omegaone\right)^{c}}\PP_{\left(\Omegatwo\right)^{c}\bigcap\ldots\bigcap\left(\Omegako\right)^{c}\backslash\Omegad}\left(\EE \right)\\
 & \triangleq & \lambda\left(\PP_{\Gammad}-\mathcal{I}\right)\PP_{\T}\PP_{\Omegar\backslash\Omegad}(\EE )+\lambda\left(\frac{1}{q_{1}}\PP_{\Omegaone}-\mathcal{I}\right)\PP_{\Gammad}\PP_{\T}\PP_{\left(\Omegaone\right)^{c}}E
\end{eqnarray*}
where $E$ is a matrix with independent random signed entries supported
on $\OmegaO\cap\Omega^{(2)c}\cap\cdots\cap\Omega^{(k_{0})c}\backslash\Omegad$.
The operator norm of the first term is bounded using \cite[Proposition 3]{cspw} and Lemma \ref{lem:PTE_infr} as
\[
d\left\Vert \lambda \PP_{\T}\PP_{\Omegar\backslash\Omegad}(\EE )\right\Vert _{\infty}\le d \cdot C\lambda\sqrt{\frac{\mu r}{n} p_{0}\log n}\le C'.
\]
Let $\delta_{ab}=\mathbf{1}_{\left\{(a,b)\in\Omegaone\right\}}$, then
the second term can be decomposed as
\begin{eqnarray*}
 &   & \lambda\left(\frac{1}{q_{1}}\PP_{\Omegaone}-\mathcal{I}\right)\PP_{\Gammad}\PP_{\T}\PP_{\left(\Omegaone\right)^{c}}E \\
 & = & \sum_{a,b,a',b'}\lambda\left(\frac{1}{q_{1}}\delta_{ab}-1\right)\left(1-\delta_{a'b'}\right)E_{a',b'}\left\langle \PP_{\Gammad}\PP_{\T}(e_{a'}e_{b'}^{\top}),e_{a}e_{b}^{\top}\right\rangle e_{a}e_{b}^{\top}\\
 & = & \sum_{(a',b')=(a,b)}+\sum_{(a',b')\neq(a,b)}
\end{eqnarray*}
We bound the operator norm of the above two terms separately.

The
diagonal term is bounded as
\begin{eqnarray*}
\left\Vert \sum_{(a',b')=a,b}\right\Vert  & = & \left\Vert \sum_{a,b}\lambda\left(\delta_{ab}-1\right)E_{a,b}\left\langle \PP_{\Gammad}\PP_{\T}(e_{a}e_{b}^{\top}),e_{a}e_{b}^{\top}\right\rangle e_{a}e_{b}^{\top}\right\Vert \\
 & \le & \left\Vert \sum_{a,b}\lambda\left(\delta_{ab}-q_{1}\right)X_{a,b}e_{a}e_{b}^{\top}\right\Vert +\left\Vert \sum_{a,b}\lambda\left(q_{1}-1\right)X_{a,b}e_{a}e_{b}^{\top}\right\Vert \\
 & = & q_{1}\left\Vert \lambda\left(\frac{1}{q_{1}}\PP_{\Omegaone}-\mathcal{I}\right)X\right\Vert +\left\Vert \sum_{a,b}\lambda\left(q_{1}-1\right)X_{a,b}e_{a}e_{b}^{\top}\right\Vert
\end{eqnarray*}
where $X_{a,b}=E_{a,b}\left\langle \PP_{\Gammad}\PP_{\T}(e_{a}e_{b}^{\top}),e_{a}e_{b}^{\top}\right\rangle $.
The first part of Lemma \ref{lem:OP_INF} with $\Omega_0=\Omegaone$ and $\Gammao=[n]\times[n]$ bounds the first term by
$q_{1}\lambda C\sqrt{\frac{n\log n}{q_{1}}}\left\Vert X\right\Vert _{\infty} \le q_{1}\lambda C\sqrt{\frac{n\log n}{q_{1}}} \frac{2\mu r}{n} \le C'$.
We then apply \cite[Lemma 6.4]{CanRec} and a standard bound of the operator
norm of a random matrix to bound the second term by $\lambda(1-q_{1})\frac{2\mu r}{n}\left\Vert E\right\Vert \le\lambda\frac{2\mu r}{n}\cdot\sqrt{np_{0}\log n}\le C$.

The off-diagonal term can be expressed as
\begin{eqnarray*}
 \sum_{(a',b')\neq(a,b)} & = & \sum_{(a',b')\neq(a,b)}\lambda\left(\frac{1}{q_{1}}\delta_{a'b'}-1\right)\left(1-\delta_{ab}\right)E_{ab}\left\langle \PP_{\Gammad}\PP_{\T}(e_{a}e_{b}^{\top}),e_{a'}e_{b'}^{\top}\right\rangle e_{a'}e_{b'}^{\top}\\
 & = & \frac{\lambda}{q_{1}}\sum_{(a',b')\neq(a,b)}\left(\delta_{a'b'}-q_{1}\right)\left(q_{1}-\delta_{ab}\right)E_{ab}\left\langle \PP_{\Gammad}\PP_{\T}(e_{a}e_{b}^{\top}),e_{a'}e_{b'}^{\top}\right\rangle e_{a'}e_{b'}^{\top}\\
 &   & \qquad +\frac{\lambda}{q_{1}}\sum_{(a',b')\neq(a,b)}\left(\delta_{a'b'}-q_{1}\right)\left(1-q_{1}\right)E_{ab}\left\langle \PP_{\Gammad}\PP_{\T}(e_{a}e_{b}^{\top}),e_{a'}e_{b'}^{\top}\right\rangle e_{a'}e_{b'}^{\top}
\end{eqnarray*}
The operator norm of first term can be bounded using the decoupling
argument in \cite{CanRec}. In particular, we can repeat the proof of \cite[Lemma 6.7]{CanRec} with $p=q_{1}$, $\xi_{ab}=\delta_{ab}-q_{1}$ and $\left\Vert \PP_{\Gammad}\PP_{\T}(e_{a}e_{b}^{\top})\right\Vert _{F}^{2}\le\frac{2\mu r}{n}$
to bound the first term by $C'\lambda\sqrt{\mu r}\log n\left\Vert E\right\Vert _{\infty}\le C$.
Let $H_{a'b'}=\sum_{a,b:(a,b)\neq(a',b')}E_{a,b}\left\langle \PP_{\Gammad}\PP_{\T}(e_{a}e_{b}^{\top}),e_{a'}e_{b'}^{\top}\right\rangle $;
the second term can be bounded as
\begin{eqnarray}
 &  & \left\Vert \frac{\lambda(1-q_{1})}{q_{1}}\sum_{(a',b')\neq(a,b)}\left(\delta_{a'b'}-q_{1}\right)E_{a,b}\left\langle \PP_{\Gammad}\PP_{\T}(e_{a}e_{b}^{\top}),e_{a'}e_{b'}^{\top}\right\rangle e_{a'}e_{b'}^{\top}\right\Vert \nonumber \\
 & \le & \lambda\left\Vert \sum_{a',b'}\left(\frac{1}{q_{1}}\delta_{a'b'}-1\right)\sum_{(a,b):(a,b)\neq (a',b')}E_{a,b}\left\langle \PP_{\Gammad}\PP_{\T}(e_{a}e_{b}^{\top}),e_{a'}e_{b'}^{\top}\right\rangle e_{a'}e_{b'}^{\top}\right\Vert \nonumber \\
 & = & \lambda\left\Vert \left(\frac{1}{q_{1}}\PP_{\Omegaone}-\mathcal{I}\right)H\right\Vert \nonumber \\
 & \le & \lambda C\sqrt{\frac{n\log n}{q_{1}}}\left\Vert H\right\Vert _{\infty}\label{eq:x}
\end{eqnarray}
where we use the first part of Lemma \ref{lem:OP_INF} with $\Omega_0=\Omegaone$ and $\Gammao=[n]\times[n]$ in the inequality. Further
observe that
\begin{eqnarray*}
H_{a',b'} & = & \left(\sum_{a,b}E_{a,b}\left\langle \PP_{\Gammad}\PP_{\T}(e_{a}e_{b}^{\top}),e_{a'}e_{b'}^{\top}\right\rangle \right)-E_{a',b'}\left\langle \PP_{\Gammad}\PP_{\T}(e_{a'}e_{b'}^{\top}),e_{a'}e_{b'}^{\top}\right\rangle \\
 & = & \left(\PP_{\Gammad}\PP_{\T}E\right)_{a',b'}-E_{a',b'}\left\langle \PP_{\Gammad}\PP_{\T}(e_{a'}e_{b'}^{\top}),e_{a'}e_{b'}^{\top}\right\rangle
\end{eqnarray*}
so we have
\begin{eqnarray*}
\left\Vert H\right\Vert _{\infty} & \le & \left\Vert P_{\T}E\right\Vert _{\infty}+\left\Vert E\right\Vert _{\infty}\left\Vert \PP_{\T}(e_{a'}e_{b'}^{\top})\right\Vert _{F}^{2}\\
 & \le & C\sqrt{\frac{\mu r}{n}p_{0}\log n}+\frac{2\mu r}{n} \le C'\sqrt{\frac{\mu r}{n}p_{0}\log n}.
\end{eqnarray*}
where we use Lemma \ref{lem:PTE_infr}. It follows that the right hand side of (\ref{eq:x})
is bounded by $\lambda\sqrt{\frac{n\log n}{q_{1}}}C'\sqrt{\frac{\mu r}{n}p_{0}\log n}\le C''$.
This completes the proof of the lemma.
\end{IEEEproof}

%\subsection{Proof of Worst Case Guarantees}
%\label{app:worst}
%In this section, we prove theorem~\ref{theorem:mainresult} according to the outline provided in %Section~\ref{sec:worst_case_proof}.

\end{document}